\newcommand{\pushright}[1]{\ifmeasuring@#1\else\omit\hfill$\displaystyle#1$\fi\ignorespaces}
\newcommand{\pushleft}[1]{\ifmeasuring@#1\else\omit$\displaystyle#1$\hfill\fi\ignorespaces}
\renewcommand*{\backref}[1]{}
\renewcommand*{\backrefalt}[4]{%
    \ifcase #1 (Not cited.)%
    \or        (Cited on page~#2)%
    \else      (Cited on pages~#2)%
    \fi}
\renewcommand{\>}{\succ}
\newcommand{\E}{\mathbb{E}}
\newcommand{\EF}[1]{\ifstrempty{#1}{\textrm{\textup{EF}}}{\textrm{\textup{EF{$#1$}}}}}
\newcommand{\EFX}{\textrm{\textup{EFX}}}
\newcommand{\EFXBoundedCharity}{\textrm{\textup{EFX-with-bounded-charity}}}
\newcommand{\EFXcharity}{\textrm{\textup{EFX-with-charity}}}
\newcommand{\EFXUnenviedCharity}{\textrm{\textup{EFX-with-charity}}}
\newcommand{\eps}{\varepsilon}
\renewcommand{\epsilon}{\varepsilon}
\newcommand{\MMS}{\textrm{\textup{MMS}}}
\newcommand{\PO}{\textup{PO}}
\newcommand{\Prop}[1]{\ifstrempty{#1}{\textrm{\textup{Prop}}}{\textrm{\textup{Prop{$#1$}}}}}
\newcommand{\SDEF}[1]{\ifstrempty{#1}{\textrm{\textup{sd-EF}}}{\textrm{\textup{sd-EF{$#1$}}}}}
\newcommand{\UnifPerm}{\textrm{\textsc{Uniform Permutation}}}
\newcommand{\TPS}{\textrm{\textup{TPS}}}
\newcommand{\V}{\mathcal{V}}
\newcommand{\weakSDprefers}{\succeq^\textup{SD}}
\newcommand{\X}{\mathbf{X}}
\newcommand{\Y}{\mathbf{Y}}
\newcommand{\Z}{\mathbf{Z}}
\newtheorem{theorem}{Theorem}%
\newtheorem*{theorem*}{Theorem}
\newtheorem{definition}{Definition}
\newtheorem*{definition*}{Definition}
\newtheorem{lemma}{Lemma}
\newtheorem*{lemma*}{Lemma}
\newtheorem*{claim*}{Claim}
\newtheorem*{fact*}{Fact}
\newtheorem*{observation*}{Observation}
\newtheorem*{conjecture*}{Conjecture}
\newtheorem*{corollary*}{Corollary}
\newtheorem*{remark*}{Remark}
\newtheorem*{proposition*}{Proposition}
\newtheorem{example}{Example}
\newtheorem*{example*}{Example}
\title{Best-of-Both-Worlds Guarantees with Fairer Endings}
\author{\begin{tabular}{m{0.12\linewidth}m{0.12\linewidth}m{0.12\linewidth}m{0.12\linewidth}m{0.12\linewidth}m{0.12\linewidth}}
	\multicolumn{3}{c}{\textbf{Telikepalli Kavitha}} & \multicolumn{3}{c}{\textbf{Surya Panchapakesan}} \\
         \multicolumn{3}{c}{\small{TIFR Mumbai}} & \multicolumn{3}{c}{\small{IISER Pune}} \\
         \multicolumn{3}{c}{\href{mailto:kavitha.telikepalli@gmail.com}{\small{\texttt{kavitha.telikepalli@gmail.com}}}} & \multicolumn{3}{c}{\href{mailto:surya.panchapakesan@students.iiserpune.ac.in }{\small{\texttt{surya.panchapakesan@students.iiserpune.ac.in }}}} \\
         &&&\\
         \multicolumn{2}{c}{\textbf{Rohit Vaish}} & \multicolumn{2}{c}{\textbf{Vignesh Viswanathan}} & \multicolumn{2}{c}{\textbf{Jatin Yadav}}\\
         \multicolumn{2}{c}{\small{IIT Delhi}} & \multicolumn{2}{c}{\small{University of Massachusetts Amherst}} & \multicolumn{2}{c}{\small{IIT Delhi}}\\
         \multicolumn{2}{c}{\href{mailto:rvaish@iitd.ac.in}{\small{\texttt{rvaish@iitd.ac.in}}}} & \multicolumn{2}{c}{\href{mailto:vviswanathan@umass.edu}{\small{\texttt{vviswanathan@umass.edu}}}} & \multicolumn{2}{c}{\href{mailto:jatin.yadav@cse.iitd.ac.in}{\small{\texttt{jatin.yadav@cse.iitd.ac.in}}}}\\
 	\end{tabular}
 }
\date{} %
\begin{document}

\maketitle
\thispagestyle{empty}
\addtocounter{page}{-1}

\begin{abstract} 
Fair allocation of indivisible goods is a fundamental problem at the interface of economics and computer science. Traditional approaches for this problem focus either on \emph{randomized} allocations that are fair in expectation or deterministic allocations that are \emph{approximately} fair. Recently, these two seemingly disjoint approaches have been reconciled in the form of \emph{best-of-both-worlds} guarantees, wherein one seeks a randomized allocation that is fair in expectation~(i.e., ex-ante fairness) while also being supported on approximately fair allocations~(i.e., ex-post fairness). Prior work has shown that under additive valuations, there always exists a randomized allocation that is ex-ante stochastic-dominance envy-free (\SDEF{}) and ex-post envy-free up to one good (\EF{1}).

Our work is motivated by the goal of achieving a stronger ex-post fairness guarantee. A well-studied strengthening of \EF{1} is envy-freeness up to any good (\EFX{}). Notably, establishing the existence of an ex-post \EFX{} allocation for additive valuations, even without any ex-ante guarantee, is a major open problem in discrete fair division. Our goal is to obtain ex-post \EFX{} along with meaningful ex-ante guarantees. Towards this goal, we make the following contributions:

\begin{itemize}
    \item Our first set of results considers a subdomain of additive valuations called \emph{lexicographic} preferences. When agents have lexicographic preferences, ex-post EFX allocations are guaranteed to exist and can be computed efficiently; however, no ex-ante guarantees are known to be achievable in tandem with the ex-post EFX guarantee. On the negative side, we show that ex-ante \SDEF{} is fundamentally in conflict with ex-post~\EFX{}, prompting a relaxation in the ex-ante benchmark. On the positive side, we present an algorithm for provably achieving ex-post \EFX{} and Pareto optimality together with ex-ante $\nicefrac{9}{10}$-\EF{}. Our algorithm is based on a dependent rounding technique and makes careful use of structural properties of \EFX{} and \PO{} allocations.
    \item Our second set of results considers \emph{monotone} valuations. EFX-with-charity is a relaxation of \EFX{} where some goods are left unallocated and nobody envies the pool of unallocated goods. We show it is possible to achieve ex-post EFX-with-charity in conjunction with ex-ante half-envy-freeness, i.e., $0.5$-\EF{}.
    \item Our third set of results considers {\em subadditive valuations}---these are sandwiched between additive valuations and monotone valuations. We show that the ex-post guarantee in our second result can be strengthened to ex-post \EFXBoundedCharity{} (so only $n-1$ goods are left unallocated where $n$ is the number of agents) at the price of weakening the ex-ante guarantee to half-proportionality. 
\end{itemize}
\end{abstract}

\pagenumbering{roman} 

\setcounter{page}{1}
\pagenumbering{arabic}

\section{Introduction}
\label{sec:Introduction}

We consider a fair division problem with a set $M$ of indivisible goods and a set $N = [n]$ of agents. Every agent $i$ has a monotone valuation function $v_i$ that assigns a nonnegative value to every subset of goods in $M$. The goal is to partition $M$ into subsets $A_1,\ldots,A_n$ where $A_i$ is the set of goods given to agent $i$, so that the resulting allocation is {\em fair}. A classical and well-studied notion of fairness is {\em envy-freeness}~\cite{GS58puzzle,F67resource}. Agent $i$ is said to envy agent $j$ if $v_i(A_j) > v_i(A_i)$, i.e., agent $i$ values the set of goods assigned to agent $j$ more than its own set. An allocation $A = (A_1,\ldots,A_n)$ is {\em envy-free} if no agent envies another. Observe that any envy-free allocation is also {\em proportional}~\cite{S48division}, i.e., $v_i(A_i) \ge \frac{1}{n}\cdot v_i(M)$ for every $i$.

Although envy-free (and thus proportional) allocations always exist when the goods are divisible, such allocations could fail to exist for indivisible goods. Suppose there is a single good that is desired by two agents---only one of these agents can receive the good and thus there is no  envy-free allocation in this instance. The lack of universal existence of envy-free allocations for indivisible goods has motivated notions of approximate fairness such as \emph{envy-freeness up to one good}~(\EF{1})~\cite{LMM+04approximately,B11combinatorial}. In an \EF{1} allocation, one agent may envy another, however the envious agent's envy would vanish if some good is removed from the envied agent's bundle. Thus, in the example with two agents and one good, the allocation where one agent receives the good is \EF{1}.  %

Besides approximation, another natural approach for ensuring fairness is to use \emph{randomization}. In particular, with $n$ agents, one can imagine giving the entire set of goods to each individual agent with probability $1/n$. While such a randomized allocation is certainly envy-free in expectation (that is, it is \emph{ex-ante} fair), it is highly unfair \emph{ex-post}.

These considerations motivate the following natural question: Can the aforementioned approaches---randomization and approximation---be reconciled? Towards answering this question, a recent proposal has formulated the {\em best-of-both-worlds} framework~\cite{AFS+24best}. The goal here is to compute a probability distribution $\langle p_1,\ldots,p_{q}\rangle$ over a set of allocations $A^1,\ldots,A^{q}$ such that not only is each of the allocations $A^1,\ldots,A^{q}$ approximately fair (this is called {\em ex-post} fairness) but the randomized allocation 
$\X = \{(p_1,A^1),\ldots,(p_{q},A^{q})\}$, where allocation $A^\ell$ is picked with probability~$p_\ell$,
is fair in expectation---this is called {\em ex-ante} fairness.

The surprising result of Aziz et al.~\cite{AFS+24best} shows that for {\em additive valuations}\footnote{Under additive valuations, the value of a set $S \subseteq M$ is the sum of values of goods in $S$.}, there is always a randomized allocation that is ex-ante {\em sd-envy-free} (\SDEF{})\footnote{\SDEF{} refers to stochastic dominance-based envy-freeness and is a stronger fairness notion than envy-freeness (\EF{}); see \Cref{sec:prelims} for a formal definition.} and ex-post \EF{1}. In this paper, we ask the following question:
\vspace{0.05in}
\begin{center}
\emph{Is it possible to achieve ex-post guarantees stronger than EF1 alongside a nontrivial ex-ante fairness guarantee?}
\end{center}

An \EFX{} allocation is one that is {\em envy-free up to any good}.
One agent may envy another in an \EFX{} allocation, however the envious agent's envy would vanish upon the removal of {\em any} good from the envied agent's bundle. More formally, an allocation $(A_1,\ldots,A_n)$ is \EFX{} if for any pair of agents $i$ and $j$, we have $v_i(A_i) \ge v_i(A_j\setminus\{g\})$ for any good $g \in A_j$. As noted in \cite{CGH19envy}: ``\emph{Arguably, \EFX{} is the best fairness analog of envy-freeness
of indivisible items.}''

While EFX is certainly a stronger notion of fairness than EF1, it is not known if \EFX{} allocations always exist beyond three agents and some restricted classes of valuations. To put it simply, we quote Babaioff et al.~\cite{BEF21best}:
``{\em achieving \EFX{} in the best-of-both-worlds setting seems to be currently beyond reach}''. Our goal in this work is to obtain ex-post \EFX{} alongside meaningful ex-ante guarantees. %

\subsection{Our Results}

\begin{figure}[t]%
	\begin{center}
	    \scalebox{1}{
	        \begin{tikzpicture}[scale=0.9, every node/.style={scale=0.9}]
	            \tikzstyle{onlytext}=[]
	            \tikzset{venn circle/.style={circle,minimum width=0mm,fill=#1,opacity=0.1}}

	            \draw [line width=30pt,opacity=0.4,color=red,line cap=round,rounded corners] (-4,-4) -- (-2,-4) -- (2,-2) -- (8,-2);
	            \node[onlytext] () at (6.5,-2) {\footnotesize\begin{tabular}{c}{\Cref{thm:SDEF_EFX}}\\{(counterexample)}\end{tabular}};

	            \draw [line width=30pt,opacity=0.4,green,line cap=round,rounded corners] (-1.5,-6) -- (0,-6) -- (2.5,-2) -- (3.5,-2) -- (5,-4.7) -- (8,-4.7);
	            \node[onlytext] () at (6.5,-4.7) {\footnotesize\begin{tabular}{c}{\Cref{prop:sdEF_EF1+PO_lexicographic}}\\{(polynomial time)}\end{tabular}};

	            \draw [line width=30pt,opacity=0.4,green,line cap=round,rounded corners] (-4,-2) -- (-2,-2) -- (2,-6) -- (8,-6);
	            \node[onlytext] () at (6.5,-6) {\footnotesize\begin{tabular}{c}{\Cref{thm:dependent_rounding}}\\{(polynomial time)}\end{tabular}};

                    \draw [line width=30pt,opacity=0.2,black,line cap=round,rounded corners] (-4,-8) -- (0,-8) -- (2,-2) -- (3.5,-2) -- (5,-3.4) -- (8,-3.4);
	            \node[onlytext] () at (6.5,-3.4) {\begin{tabular}{c}{\footnotesize Aziz et al.~\cite{AFS+24best}}\end{tabular}};

                \node[onlytext] (ex-ante fair) at (3,-0.5) {\textbf{ex-ante fairness}};
                \draw[-, line width=1pt] (-5,-1) -- (8,-1);
                \node[onlytext] (ex-ante SD-EF) at (3,-2) {\begin{tabular}{c}{ex-ante}\\{\SDEF{}}\end{tabular}};
                \node[onlytext] (ex-ante EF) at (3,-4) {\begin{tabular}{c}{ex-ante}\\{\EF{}}\end{tabular}};
                \node[onlytext] (ex-ante almostEF) at (3,-6) {\begin{tabular}{c}{ex-ante}\\{$\frac{9}{10}$-\EF{}}\end{tabular}};
                \node[onlytext] (ex-post fair) at (-3,-0.5) {\textbf{ex-post fairness}};
                \node[onlytext] (ex-post EFXPO) at (-3,-2) {\begin{tabular}{c}{ex-post}\\{\EFX{}+\PO{}}\end{tabular}};
	            \node[onlytext] (ex-post EFX) at (-3,-4) {\begin{tabular}{c}{ex-post}\\\EFX{}\end{tabular}};
	            \node[onlytext] (ex-post EF1) at (-3,-8) {\begin{tabular}{c}{ex-post}\\{\EF{1}}\end{tabular}};
	            \node[onlytext] (ex-post EF1PO) at (-1,-6) {\begin{tabular}{c}{ex-post}\\{\EF{1}+\PO{}}\end{tabular}};

	            \draw[->, line width=1pt] (ex-ante SD-EF) -- (ex-ante EF);
	            \draw[->, line width=1pt] (ex-ante EF) -- (ex-ante almostEF);
	            \draw[->, line width=1pt] (ex-post EFXPO) -- (ex-post EFX);
	            \draw[->, line width=1pt] (ex-post EFX) -- (ex-post EF1);
	            \draw[->, line width=1pt] (ex-post EFXPO) -- (ex-post EF1PO);
	            \draw[->, line width=1pt] (ex-post EF1PO) -- (ex-post EF1);
	    \end{tikzpicture}
	 }
	\end{center}
	\caption{A summary of our results for \emph{lexicographic} preferences. The arrows denote logical implications between fairness notions. The property combinations known from prior work are shaded in gray, while the positive and negative results shown by us are in green and red, respectively. The above figure is inspired by~\cite[Figure~1]{AFS+24best}.}
	\label{fig:relations-lexicographic}
\end{figure}

\begin{figure}[t]%
	\begin{center}
	    \scalebox{1}{
	        \begin{tikzpicture}[scale=0.9, every node/.style={scale=0.9}]
	            \tikzstyle{onlytext}=[]
	            \tikzset{venn circle/.style={circle,minimum width=0mm,fill=#1,opacity=0.1}}

	            \draw [line width=30pt,opacity=0.4,color=red,line cap=round,rounded corners] (-2,-2) -- (8,-2);
	            \node[onlytext] () at (6.5,-2) {\footnotesize\begin{tabular}{c}{\Cref{thm:SDEF_EFX}}\\{(even for lexicographic)}\end{tabular}};

	            \draw [line width=30pt,opacity=0.4,green,line cap=round,rounded corners] (-2.7,-6) -- (-0,-6) -- (2,-6) -- (4,-6) -- (5.5,-6) -- (8,-6);
	            \node[onlytext] () at (6.5,-6) %
	            {\footnotesize\begin{tabular}{c}{\Cref{thm:bobw-charity-monotone}}\\{(pseudopolynomial time)}\end{tabular}};

	            \draw [line width=30pt,opacity=0.4,teal,line cap=round,rounded corners] (-3.3,-4) -- (0,-4) -- (2,-8) -- (8,-8);
	            \node[onlytext] () at (6.5,-8) %
	            {\footnotesize\begin{tabular}{c}{\Cref{thm:Ex-ante_half-Prop_ex-post_EFX-with-charity}}\\{(pseudopolynomial time)}\end{tabular}};

	            \draw [line width=30pt,opacity=0.2,black,line cap=round,rounded corners] (-2.5,-8) -- (0,-8) -- (2.5,-6) -- (3.5,-6) -- (5,-4) -- (8,-4);
                \node[onlytext] () at (6.5,-4) {\footnotesize Feldman et al.~\cite{FMN+24breaking}};

                \node[onlytext] (ex-ante fair) at (3,-0.5) {\textbf{ex-ante fairness}};
                \draw[-, line width=1pt] (-4,-1) -- (8,-1);
                \node[onlytext] (ex-ante SD-EF) at (3,-2) {\begin{tabular}{c}{ex-ante}\\{\SDEF{}}\end{tabular}};
                \node[onlytext] (ex-ante EF) at (3,-4) {\begin{tabular}{c}{ex-ante}\\{\EF{}}\end{tabular}};
                \node[onlytext] (ex-ante halfEF) at (3,-6) {\begin{tabular}{c}{ex-ante}\\{$\frac12\EF{}$}\end{tabular}};
                \node[onlytext] (ex-ante half-Prop) at (3,-8) {\begin{tabular}{c}{ex-ante}\\{$\frac{1}{2}$-\Prop{}}\end{tabular}};
                \node[onlytext] (ex-post fair) at (-1.5,-0.5) {\textbf{ex-post fairness}};
                \node[onlytext] (ex-post EFX) at (-1.5,-2) {\begin{tabular}{c}{ex-post}\\{\EFX{}}\end{tabular}};
	            \node[onlytext] (ex-post EFX-with-charity) at (-1.5,-4) {\begin{tabular}{c}{ex-post}\\\EFXBoundedCharity{}\end{tabular}};
	            \node[onlytext] (ex-post half-efx) at (-1.5,-8) {\begin{tabular}{c}{ex-post}\\{$\frac{1}{2}$-\EFX{} and $\EF{1}$}\end{tabular}};
	            \node[onlytext] (ex-post half-fair) at (-1.5,-6) {\begin{tabular}{c}{ex-post}\\{EFX-with-charity}\end{tabular}};
	            \draw[->, line width=1pt] (ex-ante SD-EF) -- (ex-ante EF);

	            \draw[->, line width=1pt] (ex-ante EF) -- (ex-ante halfEF);
	            \draw[->, line width=1pt] (ex-ante halfEF) -- (ex-ante half-Prop);

	            \draw[->, line width=1pt] (ex-post EFX) -- (ex-post EFX-with-charity);
                \draw[->, line width=1pt] (ex-post EFX-with-charity) to [out=200,in=155](ex-post half-efx);
	            \draw[->, line width=1pt] (ex-post EFX-with-charity) -- (ex-post half-fair);
	    \end{tikzpicture}
	 }
	\end{center}
	\caption{A summary of our main results for \emph{subadditive} valuations. The arrows denote logical implications between fairness notions. The property combinations known from prior work are shaded in gray, while the positive and negative results shown by us are in green/teal and red, respectively. The above figure is inspired by~\cite[Figure~1]{AFS+24best}.}
	\label{fig:relations-subadditive}
\end{figure}

We present two sets of results: the first achieves best-of-both-worlds guarantees with ex-post EFX when agents have \emph{lexicographic} preferences---a setting where EFX allocations are known to always exist. The second achieves best-of-both-worlds guarantees with ex-post \emph{EFX-with-charity} when agents have {\em monotone} and \emph{subadditive} valuations---settings where EFX allocations are not known to exist. Our results are summarized in \Cref{fig:relations-lexicographic,fig:relations-subadditive}.

\paragraph{Lexicographic valuations.} Informally, under lexicographic valuations, an agent values the goods according to a strict order, say $g_1 \> g_2 \> \dots \> g_m$, such that it prefers any bundle containing $g_1$ over any bundle that doesn't, subject to which, it prefers any bundle containing $g_2$ over any bundle that doesn't, and so on. More concretely, an additive valuation function $v$ is \emph{lexicographic} if all individual goods have unique values and for any pair of distinct bundles $S,T \subseteq M$, the following holds: $v(S) > v(T)$ if and only if there exists a good $g \in S \setminus T$ such that $\{g' \in T : v(g') > v(g)\} \subseteq S$. 

An ``optimal'' best-of-both-worlds result would be to combine ex-post \EFX{} with the ex-ante guarantee of Aziz et al.~\cite{AFS+24best}, namely {\em sd-envy-freeness} (\SDEF{}). %
We show that such a randomized allocation may not exist %
even for lexicographic valuations.

\begin{restatable}[\textbf{Incompatibility of ex-ante \SDEF{} and ex-post \EFX{}}]{theorem}{SDEFandEFX}
There exists an instance with lexicographic preferences where no randomized allocation is simultaneously ex-ante sd-envy-free $(\SDEF{})$ and ex-post $\EFX{}$.
\label{thm:SDEF_EFX}
\end{restatable}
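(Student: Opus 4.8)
The plan is to exhibit a single lexicographic instance and argue that the set of marginal profiles forced by ex-ante $\SDEF{}$ is disjoint from the set of marginals realizable by distributions over ex-post $\EFX{}$ allocations. Concretely, I would (i) enumerate all $\EFX{}$ allocations of the instance --- there are finitely many, and the lexicographic structure makes them easy to list --- and (ii) show that the linear program ``find a distribution over these allocations whose marginals satisfy every $\SDEF{}$ inequality'' is infeasible, e.g.\ by producing a Farkas-type certificate: a nonnegative combination of the $\SDEF{}$ constraints that no convex combination of $\EFX{}$ allocations can meet.

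The two ingredients I would lean on are a structural description of $\EFX{}$ under lexicographic preferences and the strength of $\SDEF{}$. For the former: since valuations are lexicographic, $v_i(A_i) \ge v_i(A_j \setminus \{g\})$ for all $g \in A_j$ holds iff either the $\succ_i$-top good of $A_i \cup A_j$ lies in $A_i$, or $A_j$ is a singleton. Two consequences drive the construction. First, if a good $g$ is the favorite good of two or more agents, then in every $\EFX{}$ allocation its holder receives nothing else (a contested favorite is a forced singleton), since any other agent coveting $g$ would otherwise witness an $\EFX{}$ violation. Second, an agent can hold a bundle of size $\ge 2$ only if every other agent simultaneously holds a good it strictly prefers to everything in that bundle. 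For the latter: $\SDEF{}$ forces, for every agent $i$, every threshold $k$, and every other agent $j$, the prefix inequality $\sum_{\ell \le k} x_{i, g^i_\ell} \ge \sum_{\ell \le k} x_{j, g^i_\ell}$, where $g^i_1 \succ_i g^i_2 \succ_i \cdots$ is $i$'s ranking and $x_{i,g}$ is the marginal probability that $i$ receives $g$. Taking $k=1$ gives $x_{i, g^i_1} \ge 1/n$, while taking $k = m$ (the worst-good threshold) forces all agents to receive the same expected number of goods $m/n$.

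The construction I would aim for pits these against each other. A contested favorite pins down a rigid ``favorite-is-a-singleton'' skeleton, and the remaining goods are arranged asymmetrically so that some agent can attain its required expected number of goods $m/n$ only by accumulating goods it ranks low, because $\EFX{}$ routes that agent's highly-ranked goods to the others; this should make a top-$k$ prefix-dominance constraint for that agent unsatisfiable even though the total-goods constraint is met. The hard part is that naive or symmetric instances are simply \emph{not} counterexamples: whenever the instance admits enough symmetry, the uniform distribution over the symmetric family of $\EFX{}$ allocations already satisfies every $\SDEF{}$ inequality --- for identical preferences, for example, the uniform mixture over the $n!$ labelings of the canonical allocation (``top $n-1$ goods as singletons, everything else as one bundle'') is $\SDEF{}$, and similar symmetrizations rescue many small instances. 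So the real work is to break this symmetry, designing preferences and a good count for which the finite set of $\EFX{}$ allocations is genuinely lopsided, and then to certify the resulting LP infeasibility. I expect verifying the $\EFX{}$ characterization on the chosen instance (to be sure no stray $\EFX{}$ allocation reintroduces feasibility) and exhibiting the explicit dual certificate to be the main obstacles.
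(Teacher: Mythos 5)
Your strategy is exactly the one the paper uses---fix a lexicographic instance, use the characterization of \EFX{} under lexicographic preferences (your ``top good of $A_i\cup A_j$ in $A_i$, or $A_j$ a singleton'' condition is equivalent to the paper's \Cref{prop:EFX_characterization}, that every envied agent holds exactly one good) to enumerate the finitely many \EFX{} allocations, and then show that the sd-\EF{} prefix inequalities admit no feasible mixture over them. Your two structural consequences (a contested favorite forces a singleton; a bundle of size $\ge 2$ requires every other agent to hold something it strictly prefers) are both correct and are precisely what makes the enumeration tractable. However, as written this is a plan, not a proof: the theorem is an existence claim about an instance, and no instance is ever exhibited. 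You correctly identify the construction and the LP-infeasibility certificate as ``the main obstacles,'' but those obstacles \emph{are} the content of the theorem, so the argument has a genuine gap until they are discharged.

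For what it is worth, the missing instance is small and fits your template almost exactly: three agents, four goods, with $g_1$ the contested favorite of agents $1$ and $2$ (so its holder is always a singleton), $g_2$ the favorite of agent $3$, and the tails of the three rankings arranged asymmetrically ($1\colon g_1\succ g_3\succ g_4\succ g_2$; $2\colon g_1\succ g_2\succ g_4\succ g_3$; $3\colon g_2\succ g_3\succ g_4\succ g_1$). Your characterization then yields exactly four \EFX{} allocations, and the infeasibility does not require a Farkas certificate: writing the mixture weights as $\alpha_1,\dots,\alpha_4$, a handful of the prefix inequalities (the $k=1$ constraints between agents $1$ and $2$ on $g_1$, agent $2$'s top-two and top-three prefixes against agent $3$, and the $k=m$ constraints) force $\alpha_1=\alpha_2=\alpha_3=\alpha_4=0$ by direct elimination, contradicting $\sum_\ell\alpha_\ell=1$. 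One further caution for your write-up: your intuition that the contradiction comes from the ``expected number of goods'' constraint $m/n$ is not how it plays out in this instance---the binding constraints are the short prefixes on the contested goods---so you should not commit to that mechanism before you have the instance in hand; the safe route is simply to list all \SDEF{} inequalities for the enumerated support and eliminate.
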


The conflict between ex-ante \SDEF{} and ex-post \EFX{} prompts a relaxation in the ex-ante benchmark, raising the following natural question: {\em Is it possible to achieve ex-ante \EF{} with ex-post \EFX{}?}
Though we do not yet know the answer to this question, we are able to show an equally
positive best-of-both-worlds result by slightly weakening the ex-ante guarantee from \EF{} to 
0.9-\EF{} and compensating it with the  stronger
ex-post guarantee of \EFX{}$+$\PO{}, where \PO{} is Pareto optimality.

Consider the randomized allocation $\X = \{(p_1,A^1),\ldots,(p_{q},A^{q})\}$. 
For any $\alpha > 0$, the randomized allocation $\X$
is said to be ex-ante $\alpha$-\EF{} if for any pair of agents $i$ and $j$, we have
$\sum_{\ell=1}^q  p_\ell v_i( A^\ell_{i}) \ge \alpha \cdot \sum_{\ell=1}^q  p_\ell v_i( A^\ell_{j})$, i.e.,
the expected value of agent $i$ for its own bundle is at least $\alpha$ times the expected value it has for agent $j$'s bundle.

Pareto optimality is a key notion of
economic efficiency. An allocation $A = (A_1,\ldots,A_n)$ is {\em Pareto optimal} (\PO{}) if there does not exist any allocation $A' = (A'_1,\ldots,A'_n)$ such that $v_i(A'_i) \ge v_i(A_i)$ for all $i \in [n]$ and $v_j(A'_j) > v_j(A_j)$ for some $j \in [n]$.\footnote{If there exists such an allocation $A'$ then $A'$ is said to {\em Pareto dominate} $A$.} 
One of the main open problems stated in \cite{AFS+24best}  is on strengthening their ex-post guarantee from \EF{1} to \EF{1}+\PO{}. We show the following result for lexicographic valuations.

\begin{restatable}[\textbf{Ex-ante $\frac{9}{10}$-\EF{} and ex-post \EFX{}+\PO{}}]{theorem}{dependentrounding}
There exists a polynomial-time algorithm that, given as input any instance with lexicographic valuations, returns a randomized
allocation that is ex-ante $\frac{9}{10}$-\EF{} and ex-post \EFX{} and \PO{}.
\label{thm:dependent_rounding}
\end{restatable}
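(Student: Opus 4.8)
The plan is to decouple the ex-ante and ex-post requirements through the marginals of the randomized allocation. For any randomized allocation $\X = \{(p_1,A^1),\ldots,(p_q,A^q)\}$, let $z_{ig} = \sum_{\ell\,:\,g\in A^\ell_i} p_\ell$ be the probability that agent $i$ receives good $g$. Since valuations are additive, linearity gives $\sum_\ell p_\ell v_i(A^\ell_i) = \sum_g z_{ig}\,v_i(g)$ and $\sum_\ell p_\ell v_i(A^\ell_j) = \sum_g z_{jg}\,v_i(g)$, so $\X$ is ex-ante $\frac{9}{10}$-\EF{} if and only if the fractional allocation $z$ is fractionally $\frac{9}{10}$-\EF{}. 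Hence it suffices to produce a fractional allocation $z$ that (i) is fractionally $\frac{9}{10}$-\EF{} and (ii) can be written as a convex combination of integral allocations that are each \EFX{} and \PO{}; the coefficients of that combination are then the desired distribution. Note that equal division $z_{ig}=1/n$ is fractionally \emph{exactly} \EF{}, so the entire loss in the ex-ante guarantee will come solely from having to support the distribution on \EFX{}$+$\PO{} allocations.

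The first ingredient I would establish is a structural characterization of \EFX{} for lexicographic valuations. Because such valuations are super-decreasing (each good is worth more than all lower-ranked goods combined), an agent's value for a bundle is dominated by its most preferred good in that bundle, and a short case analysis on the comparison ``remove the envied agent's least-valued good'' yields: an allocation is \EFX{} if and only if for every agent $j$ with $|A_j|\ge 2$ and every other agent $i$, agent $i$ owns a good in $A_i$ that it prefers to every good in $A_j$. Equivalently, bundles of size at least two may contain only ``low-demand'' goods that no other agent ranks above its own top good. This is precisely the invariant a rounding procedure must preserve, and it identifies the right fractional object: the contested (``popular'') goods should be handed out essentially as singletons, i.e.\ through a matching. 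I would also record the companion characterization of \PO{} for lexicographic valuations (the absence of an improving exchange along the agents' preference orders), which the completion step below must respect.

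The algorithm then proceeds by \emph{dependent rounding}. Partition the goods into popular goods (those that some agent is forced to hold as a singleton in any \EFX{} allocation) and the remaining low-demand goods, and set up a bipartite fractional matching on the popular goods in which each popular good is fully fractionally assigned while each agent receives at most one unit; I would choose these fractional degrees to be as balanced as \EFX{} feasibility permits, so that the induced marginals are as close to equal division as possible. Applying dependent rounding to this fractional matching produces a distribution over integral matchings that preserves the marginals and in which every agent receives at most one popular good, so the popular part of each bundle is a singleton and the \EFX{} characterization above is satisfied. Each integral matching is then completed by assigning the low-demand goods in a fixed, preference-respecting manner that keeps every bundle \EFX{} (the appended goods are dominated for all other agents) and admits no improving exchange, hence is \PO{}.

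The crux, and the main obstacle, is the simultaneous control of \PO{} and the ex-ante loss. Because dependent rounding preserves marginals, the fractional \EF{}-guarantee transfers verbatim to the ex-ante guarantee, so perfect balance of the fractional matching would already give ex-ante \EF{}; the difficulty is that \EFX{} feasibility (singleton popular bundles) together with the demand that every completion be \PO{} obstructs perfect balance. I expect the bulk of the work to split into (a) proving that the completion step never creates a Pareto-improving exchange in any allocation of the support---a global property, and the place where the lexicographic \PO{} characterization must be used with care---and (b) quantifying the residual imbalance. For (b), the super-decreasing structure lets one bound each agent's expected value for its own bundle against its expected value for any other agent's bundle by tracking only the matching probabilities on popular goods plus a controlled contribution from the low-demand goods; carrying out this worst-case accounting is what yields the constant $\frac{9}{10}$ in place of $1$.
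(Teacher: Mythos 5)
Your high-level plan---reduce ex-ante fairness to the marginals of a fractional allocation, use the lexicographic \EFX{} characterization (envied agents must hold singletons) to constrain the support, and round via dependent rounding---does match the skeleton of the paper's argument. But the proposal has a genuine gap: everything that actually produces the constant $\frac{9}{10}$ is deferred. The paper's fractional object is not a vaguely ``balanced'' matching on ``popular goods'' (a set you never define computably); it is the output of the eating algorithm run for exactly one unit of time, and the completion step gives the \emph{entire} tail of leftover goods to a \emph{single} agent chosen uniformly at random among those holding a last-consumed good---not a fixed, spread-out assignment of low-demand goods, which would create multiple size-$\ge 2$ bundles and break the singleton condition for envied agents. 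Crucially, the paper proves that this natural pipeline with Birkhoff--von Neumann rounding achieves only $\frac{3k}{3k+1}$-\EF{}, where $k$ is the last-consumed mass, and exhibits a tight instance at $k=2$ where the bound $\frac{6}{7}<\frac{9}{10}$ is attained. So the ``worst-case accounting'' you defer to step (b) provably does \emph{not} yield $\frac{9}{10}$ for the generic rounding; a new construction is needed precisely at $k=2$: merge the last-consumed goods into a super-good of capacity $2$, apply dependent rounding so that the two recipients of the super-good are negatively correlated, and then randomize which of the two gets its favorite last good versus the rest of the tail. None of this appears in your outline, and without it the theorem's constant is unsupported.

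A second, more localized error: you assert that because dependent rounding preserves marginals, ``the fractional \EF{}-guarantee transfers verbatim to the ex-ante guarantee.'' This is false for the completed allocation. The tail goods are assigned conditionally on the realized matching, so agent $i$'s probability of receiving a valuable leftover good $g$ is not determined by the matching marginals; under a valid B-vN decomposition it can be perfectly anti-correlated with the event that $i$ is eligible to receive the tail (this is exactly what the paper's tight example exploits). Controlling these conditional probabilities---via the negative-correlation property $\Pr[g\in Z_i \wedge g\in Z_j]\le \Pr[g\in Z_i]\Pr[g\in Z_j]$ of dependent rounding applied to the super-good---is the technical heart of the proof, together with a case analysis on $k$ ($k=1$ gives exact \EF{}, $k\ge 3$ gives $\frac{3k}{3k+1}\ge\frac{9}{10}$, and $k=2$ requires the new algorithm). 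Your proposal identifies the right obstruction but does not overcome it.
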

\noindent{Thus} for lexicographic valuations, it is possible to combine ex-ante {\em almost envy-freeness}, i.e., $\frac{9}{10}$-\EF{}, 
with ex-post ``almost envy-freeness for indivisible goods'' and economic efficiency, i.e., \EFX{}+\PO{}. 
While it is known that \EFX{}+\PO{} allocations exist under lexicographic preferences~\cite{HSV+21fair}, the above result establishes the first best-of-both-worlds guarantee in this setting.

\paragraph{\bf Monotone valuations.} 
A valuation function $v$ is monotone if for any subsets of goods $S,T \subseteq M$ such that $S \subseteq T$, we have $v(S) \le v(T)$. Feldman et al.~\cite{FMN+24breaking} show that, for monotone valuations, any constant approximation of ex-post \EFX{} is incompatible with every constant approximation of ex-ante-\EF{}. Hence, we consider a relaxation of \EFX{} called \emph{EFX-with-charity} proposed by Caragiannis et al.~\cite{CGH19envy}, which refers to a partial allocation $B=(A_1,\ldots,A_n)$ that is \EFX{}. The unallocated goods in $M \setminus \cup_{i=1}^nA_i$ are said to be donated to \emph{charity}. 

We will consider the variant of \EFX-with-charity{} where no agent envies the set donated to charity. 
Thus $B = (A_1,\ldots,A_n,P)$ where 
$(A_1,\ldots,A_n)$ is \EFX{} and $P = M \setminus \cup_{i=1}^nA_i$ 
is the set donated to charity; moreover $v_i(A_i) \ge v_i(P)$ for all $1 \le i \le n$.
It is known that \EFX-with-charity{} allocations always exist for monotone
valuations and can be computed in pseudopolynomial time~\cite{CKM+21little}.
We show the following best-of-both-worlds result for monotone valuations.

\begin{restatable}[\textbf{Ex-ante $\frac{1}{2}$-\EF{} and ex-post \EFX-with-charity{}}]{theorem}{BoBWCharityMonotone}
There exists a pseudopolynomial-time algorithm that, given any instance of the fair division problem with monotone valuations,  outputs an allocation that is ex-post \EFX-with-charity{} and ex-ante \nicefrac{1}{2}-\EF.
\label{thm:bobw-charity-monotone}
 \end{restatable}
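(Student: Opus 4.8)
The plan is to produce a probability distribution supported \emph{entirely} on \EFXcharity{} allocations---which makes the ex-post guarantee hold by construction---and then argue that its marginals are balanced enough to be ex-ante \nicefrac{1}{2}-\EF{}. The ex-post side can be arranged as a black box: by~\cite{CKM+21little}, for monotone valuations an \EFXcharity{} allocation, i.e.\ a partial allocation $(A_1,\dots,A_n)$ that is \EFX{} together with a charity pool $P=M\setminus\bigcup_i A_i$ that no agent envies, always exists and is computable in pseudopolynomial time. So I only need to guarantee that every allocation I place in the support is the output of some invocation of this routine. All of the difficulty is in the ex-ante guarantee, and the obstacle is already visible in a single deterministic \EFXcharity{} allocation: a lone good $g\in A_j$ whose marginal value to $i$ is enormous is perfectly consistent with \EFX{} (removing $g$ destroys the envy, so $v_i(A_i)\ge v_i(A_j\setminus\{g\})$), and yet $v_i(A_j)$ can exceed $v_i(A_i)$ by an unbounded factor. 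Hence no single allocation can be $\alpha$-\EF{} for any constant $\alpha$, and the factor of two must be recovered purely through randomization.

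To recover it, I would symmetrize over the \emph{priority} among agents. Draw a uniformly random permutation $\sigma$ of $[n]$, feed it as a priority order into the \EFXcharity{} construction of~\cite{CKM+21little} to obtain an allocation $B^\sigma$, and output the distribution $\mathbf{X}$ that places mass $\nicefrac{1}{n!}$ on each $B^\sigma$. Every $B^\sigma$ is \EFXcharity{}, so $\mathbf{X}$ is ex-post \EFXcharity{}; the point of injecting $\sigma$ is that for any fixed ordered pair $(i,j)$ the two agents are equally likely to precede one another. The first thing to prove is therefore a \emph{priority lemma}: the construction can be made to respect $\sigma$ in the sense that whenever $i$ precedes $j$, agent $i$ does not envy agent $j$ in $B^\sigma$, i.e.\ $v_i(B^\sigma_i)\ge v_i(B^\sigma_j)$, while the \EFX{} and no-envy-toward-charity invariants continue to hold for every pair. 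I would establish this either by threading $\sigma$ through the existence argument of~\cite{CKM+21little} (letting higher-priority agents commit to their bundles first) or by post-processing an arbitrary \EFXcharity{} allocation along envy paths and checking that the invariants survive.

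Granting the priority lemma, the ex-ante bound is a short averaging argument. Writing $\mathbf{1}[\cdot]$ for the indicator of an event over the random $\sigma$, the lemma gives $v_i(B^\sigma_i)\,\mathbf{1}[i\prec j]\ge v_i(B^\sigma_j)\,\mathbf{1}[i\prec j]$ pointwise, so $\mathbb{E}[v_i(A_i)]\ge \mathbb{E}[v_i(A_j)\,\mathbf{1}[i\prec j]]$. Splitting $\mathbb{E}[v_i(A_j)]=\mathbb{E}[v_i(A_j)\,\mathbf{1}[i\prec j]]+\mathbb{E}[v_i(A_j)\,\mathbf{1}[j\prec i]]$ and bounding both pieces against $\mathbb{E}[v_i(A_i)]$ then yields $\mathbb{E}[v_i(A_j)]\le 2\,\mathbb{E}[v_i(A_i)]$, which is exactly ex-ante \nicefrac{1}{2}-\EF{}. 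I expect the genuinely hard step to be controlling the ``adversarial'' term $\mathbb{E}[v_i(A_j)\,\mathbf{1}[j\prec i]]$---the realizations in which $j$ grabs the good that $i$ covets before $i$ gets to choose---and it is precisely here that the charity relaxation earns its keep: since the impossibility of Feldman et al.~\cite{FMN+24breaking} rules out any constant ex-ante guarantee alongside ex-post \EFX{}, the slack created by donating contested goods to charity is what makes the factor of two attainable at all. Finally, to keep the support polynomial and the running time pseudopolynomial, one restricts $\sigma$ to a polynomial-size family of orders over which the same pairwise bounds can be verified; this is routine bookkeeping once the priority lemma and the adversarial-term bound are in hand.
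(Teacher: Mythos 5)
Your overall architecture (randomize over an \EFXcharity{} construction so that the ex-post guarantee is automatic, then prove ex-ante $\frac{1}{2}$-\EF{} by an "equal opportunity" symmetry between $i$ and $j$) is the same as the paper's, and your priority lemma is essentially provable: if at each step the minimal envied subset is handed to the highest-priority agent among those who envy it, then at the last iteration in which $j$'s final bundle $Q$ is assigned, a higher-priority $i$ cannot be in the envious set, so $v_i(X_i)\ge v_i(Q)=v_i(X_j)$ at that moment and hence at termination (utilities are non-decreasing). That correctly bounds $\E[v_i(A_j)\,\mathbf{1}[i\prec j]]\le\E[v_i(A_i)]$.

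However, there is a genuine gap at exactly the point you flag: the adversarial term $\E[v_i(A_j)\,\mathbf{1}[j\prec i]]$ is never bounded, and your sentence "bounding both pieces against $\E[v_i(A_i)]$ then yields the claim" asserts the conclusion without an argument. The natural way to close it---pairing each permutation $\sigma$ with the permutation $\sigma'$ that transposes $i$ and $j$---does not work as stated, because the two runs diverge completely after the first iteration in which $i$ and $j$ both envy the chosen minimal envied subset: the pools, the subsequent minimal envied subsets, and the final bundles all change, so there is no pointwise inequality relating $v_i\bigl(B^{\sigma}_j\bigr)$ to $v_i\bigl(B^{\sigma'}_i\bigr)$. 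The paper sidesteps this by randomizing \emph{per iteration} rather than via a global priority order: at each step the recipient is drawn uniformly from the set $H_r$ of all agents envying $Q_r$, and the ex-ante bound is proved in the stronger stochastic-dominance form $\Pr[v_i(X_i)\ge T]\ge\frac{1}{2}\Pr[v_i(X_j)\ge T]$ for every threshold $T$. The threshold localizes the comparison to the \emph{first significant iteration}---the first time a bundle worth at least $T$ to $i$ is contested by both $i$ and $j$ and won by one of them---where $i$ wins with conditional probability exactly $\frac{1}{2}$; and if no such iteration occurs, then whenever $j$ receives its final bundle $i$ did not envy it, so $v_i(X_j)\ge T$ forces $v_i(X_i)\ge T$. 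To complete your proof you would need an analogue of this coupling for the random-priority scheme, which is the entire content of the theorem; as written, the proposal proves only half of the required inequality.
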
 

There is an even stronger variant of \EFX-with-charity{} where only $n-1$ goods are donated to charity~\cite{CKM+21little}.  %
We will call this notion  ``\EFXBoundedCharity{}''.
It was shown in~\cite{CKM+21little} that for monotone valuations, an \EFXBoundedCharity{} allocation always exists and can be computed in pseudopolynomial time. A natural question is whether the ex-post guarantee in \Cref{thm:bobw-charity-monotone}
can be strengthened to \EFXBoundedCharity{}. We give a partially positive answer to this question---this result holds
for a class of valuations sandwiched between additive valuations and monotone valuations: these are {\em subadditive} valuations defined below.

\begin{definition}
A valuation function $v$ is subadditive if for any two subsets of goods $S$ and $T$, we have $v(S) + v(T) \ge v(S\cup T)$.  
\end{definition}

Thus, subadditive valuations generalize
additive valuations. Moreover, every submodular valuation (one with decreasing marginal values) is subadditive.
We show a best-of-both-worlds result for subadditive valuations with the stronger ex-post guarantee of \EFXBoundedCharity{} at the expense of weakening the ex-ante guarantee from half-envy-freeness to 
half-proportionality.

\begin{restatable}[\textbf{Ex-ante $\frac{1}{2}$-\Prop{} and ex-post \EFXBoundedCharity{}}]{theorem}{PSandCharity}
There exists an algorithm with pseudopolynomial running time algorithm that, given as input any instance with subadditive valuations, returns a randomized allocation that is ex-ante $\frac{1}{2}\text{-}\Prop{}$ and ex-post \EFXBoundedCharity{}.
\label{thm:Ex-ante_half-Prop_ex-post_EFX-with-charity}
\end{restatable}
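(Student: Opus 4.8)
The plan is to exhibit an explicit probability distribution supported on \EFXBoundedCharity{} allocations whose expectation is $\frac12$-\Prop{}. The ex-post half of the statement is immediate: by~\cite{CKM+21little}, for monotone—and hence subadditive—valuations an \EFXBoundedCharity{} allocation always exists and is computable in pseudopolynomial time, with at most $n-1$ goods in the charity pool $P$ and with $v_i(A_i)\ge v_i(P)$ for every agent $i$ (no one envies charity). Every allocation on which we place mass will be produced by this subroutine, so the ex-post guarantee holds automatically. The entire difficulty lies in arranging the randomization so that, even though each individual allocation is a rigid \EFX{} object, every agent collects in expectation at least a $\frac{1}{2n}$-fraction of $v_i(M)$.

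The engine for the ex-ante bound is a single inequality that uses subadditivity rather than mere monotonicity. Fix any \EFXBoundedCharity{} allocation $(A_1,\dots,A_n,P)$. Since $A_i\subseteq M\setminus P$ we have $v_i(P)\le v_i(A_i)\le v_i(M\setminus P)$ by the no-envy-of-charity condition and monotonicity, while subadditivity gives $v_i(M)\le v_i(M\setminus P)+v_i(P)$. Combining these yields $v_i(M\setminus P)\ge \tfrac12\, v_i(M)$ for every $i$: the goods that actually get allocated are always worth at least half of the grand bundle. This is precisely the origin of the factor $\frac12$, and it reduces the target from $v_i(M)$ to the allocated value—it now suffices to guarantee that, in expectation, each agent receives at least a $\frac1n$-fraction of $v_i(M\setminus P)$, because $\frac1n\cdot\frac12 v_i(M)=\frac1{2n}v_i(M)$.

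To obtain that $\frac1n$-fraction I would pass to the dual. A distribution meeting all $n$ constraints $\sum_\ell p_\ell\, v_i(A^\ell_i)\ge \frac1{2n}v_i(M)$ exists iff, by a standard minimax argument over the (finite) set of \EFXBoundedCharity{} allocations, for every weight vector $\lambda\ge 0$ there is a single such allocation with $\sum_i\lambda_i v_i(A_i)\ge\frac1{2n}\sum_i\lambda_i v_i(M)$; Carathéodory's theorem then bounds the support by $n+1$. The natural witness for each $\lambda$ is the \EFXBoundedCharity{} allocation of maximum $\lambda$-weighted welfare, which I would obtain by running the~\cite{CKM+21little} procedure with its potential biased toward the high-weight agents, and whose value I would charge against $\sum_i\lambda_i v_i(M\setminus P)$ using the inequality of the previous paragraph together with the \EFX{} and no-envy-of-charity guarantees. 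The distribution realizing the primal is expected to be the uniform-permutation/eating-style randomization underlying \Cref{thm:bobw-charity-monotone}, with its \EFXcharity{} subroutine replaced by the bounded-charity one, so that the $\frac1n$ emerges from each agent occupying each priority role with probability $\frac1n$.

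The main obstacle I anticipate is exactly this last reconciliation. Unlike \Cref{thm:bobw-charity-monotone}, where the looser \EFXcharity{} relaxation leaves enough slack to permute or re-weight bundles, an \EFXBoundedCharity{} allocation is tightly tied to the agents' valuations: one cannot hand agent $i$ the bundle that the algorithm tailored for agent $j$ without destroying \EFX{}. Hence the support allocations cannot be generated by relabeling one base allocation; they must be genuinely recomputed, and the delicate point is to show that, across the randomization, each agent still accumulates a full $\frac1n$-share of the allocated value while every realized allocation keeps $|P|\le n-1$. Making the coverage argument and the bounded-charity invariant hold simultaneously—rather than trading one against the other—is the crux, and it is what forces both the pseudopolynomial running time inherited from the~\cite{CKM+21little} subroutine and the weakening of the ex-ante guarantee from $\frac12$-\EF{} down to $\frac12$-\Prop{}.
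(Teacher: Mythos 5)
Your proposal has a genuine gap at its center. The ex-post half and the observation that subadditivity plus no-envy-of-charity give $v_i(M\setminus P)\ge\tfrac12 v_i(M)$ are fine, but the entire ex-ante guarantee is then deferred to the claim that for every weight vector $\lambda\ge 0$ there exists a single \EFXBoundedCharity{} allocation with $\sum_i\lambda_i v_i(A_i)\ge\frac{1}{2n}\sum_i\lambda_i v_i(M)$. You invoke minimax/LP duality correctly as a reduction, but you never establish this witness-existence claim, and it is not a routine step: even the pure case $\lambda=e_i$ demands a deterministic \EFXBoundedCharity{} allocation in which a designated agent $i$ gets at least $\frac{1}{2n}v_i(M)$ under subadditive valuations, and the mixed-$\lambda$ case cannot be assembled from the pure cases because the set of achievable utility vectors over \EFX{} allocations is not convex. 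The proposed mechanism---``running the \cite{CKM+21little} procedure with its potential biased toward the high-weight agents''---is not accompanied by any argument that the resulting allocation meets the required weighted-welfare bound, so the crux you yourself flag is exactly where the proof is missing. (A smaller issue: the quantity ``a $\frac1n$-fraction of $v_i(M\setminus P)$ in expectation'' is ambiguous since $P$ varies across the support.)

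The paper avoids this difficulty entirely with a two-stage construction that you did not consider. It first runs the randomized minimal-envied-subset swap algorithm (Algorithm~\ref{alg:random-charity-swap}), whose output $Y$ is already known from \Cref{thm:bobw-charity-monotone} to be ex-ante $\tfrac12$-\EF{} with no agent envying the charity pool; it then deterministically extends \emph{each realization} $Y$ to an \EFXBoundedCharity{} allocation $X$ via the little-charity algorithm of \cite{CKM+21little}, using only the fact that this extension never decreases any agent's utility, so $v_i(Y_i)\le v_i(X_i)$. The ex-ante bound then falls out of a single subadditivity chain:
\begin{align*}
v_i(M)\;\le\;\E[v_i(P)]+\E[v_i(Y_i)]+\sum_{j\ne i}\E[v_i(Y_j)]\;\le\;2n\cdot\E[v_i(Y_i)]\;\le\;2n\cdot\E[v_i(X_i)],
\end{align*}
where the middle inequality uses $v_i(P)\le v_i(Y_i)$ pointwise and $\E[v_i(Y_j)]\le 2\E[v_i(Y_i)]$ from the ex-ante $\tfrac12$-\EF{} of $Y$. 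No duality, welfare maximization, or per-$\lambda$ witness is needed. If you want to salvage your route, you would have to actually prove the witness claim for all $\lambda$; the far easier path is to reuse the randomization already built for \Cref{thm:bobw-charity-monotone} and treat the bounded-charity step as a utility-monotone postprocessing of each support allocation.
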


\subsection{Overview of Our Proof Techniques}

Aziz et al.~\cite{AFS+24best} showed that a randomized allocation that is ex-ante sd-envy-free (\SDEF{}) and ex-post \EF{1} can be achieved by the Probabilistic Serial (PS) procedure~\cite{BM01new}, also known as the ``eating'' algorithm. In this algorithm, all agents simultaneously consume---or, colloquially, ``eat''---their favorite available good at a uniform speed. Once a good is fully consumed by a subset of agents, each of those agents proceeds to consuming its next favorite available good at the same speed. It is easy to see that running the eating algorithm until all goods are consumed results in an sd-envy-free fractional allocation. By using the Birkhoff-von Neumann (B-vN) decomposition, Aziz et al.~\cite{AFS+24best} showed that this sd-envy-free fractional allocation can be expressed as a convex combination over \EF{1} integral allocations (see \Cref{subsec:Proof_sdEF_EF1+PO_lexicographic} for details). By interpreting the convex decomposition as a probability distribution, the existence of an ex-ante \SDEF{} and ex-post \EF{1} randomized allocation follows. The eating algorithm is the foundation of many of our results.

\paragraph{Results for lexicographic preferences.}
For lexicographic preferences, Hosseini et al.~\cite{HSV+21fair} have characterized \EFX{} allocations as those where any envied agent receives a single good. Thus, to obtain ex-post \EFX{}, we should avoid assigning multiple goods to a potentially envied agent in the ex-post decomposition.

The eating algorithm appears to be a natural starting point for achieving ex-ante envy-freeness. A particularly useful feature of the eating algorithm is its \emph{anytime fairness} property, which means that the fractional partial allocation at any stage of the algorithm is ex-ante envy-free. If we run the eating algorithm to completion, we might be forced to assign multiple goods to envied agents in the ex-post allocations. Thus, it seems reasonable to terminate the eating algorithm early. Specifically, we run the eating algorithm for only \emph{one} unit of time, which means that each agent consumes a total fractional mass of one unit.\footnote{The algorithm of Feldman et al.~\cite{FMNP23} also runs the eating algorithm for one unit of time.}

Each agent $i$ can consume several goods fractionally over the course of the one unit-time of the eating algorithm: let $g_i$ be the \emph{last} (or most recent) good consumed by $i$ during the eating algorithm. Let $L$ be the set of all the $g_i$ goods, i.e., $L = \bigcup_{i \in N} g_i$. We refer to the set $L$ as the set of {\em last consumed goods} and
form a {\em super-good} $s$ by unifying all the fractional consumed pieces of goods in $L$ (see \Cref{fig0}). 
Let $k$ be the sum of fractions of goods in $L$ that have been consumed by all agents. It is easy to see that $k$ is an integer.%

{
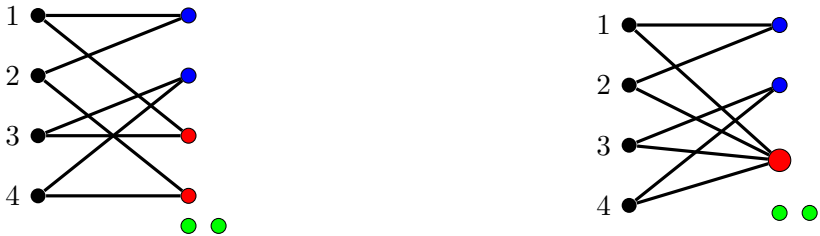
\begin{figure}[h]\centering
\begin{minipage}{0.45\textwidth}
\centering
    \begin{tikzpicture}
    \node[circle,fill=black,label=left:$1$,inner sep = 2pt] (a1) at (0,1.2){};
    \node[circle,fill=black,label=left:$2$,inner sep = 2pt] (a2) at (0,0.4){};
    \node[circle,fill=black,label=left:$3$,inner sep = 2pt] (a3) at (0,-0.4){};
    \node[circle,fill=black,label=left:$4$,inner sep = 2pt] (a4) at (0,-1.2){};

    \node[circle,fill=blue,draw=black,inner sep = 2pt] (b1) at (2,1.2){};
    \node[circle,fill=blue,draw=black,inner sep = 2pt] (b2) at (2,0.4){};
    \node[circle,fill=red,draw=black,inner sep = 2pt] (b3) at (2,-0.4){};
    \node[circle,fill=red,draw=black,inner sep = 2pt] (b4) at (2,-1.2){};
    \node[circle,fill=green,draw=black,inner sep = 2pt] (b5) at (2,-1.6){};
    \node[circle,fill=green,draw=black,inner sep = 2pt] (b6) at (2.4,-1.6){};
     
    \draw[very thick] (a1) -- (b1);
    \draw[very thick] (a1) -- (b3);
    \draw[very thick] (a2) -- (b1);
    \draw[very thick] (a2) -- (b4);
    \draw[very thick] (a3) -- (b3);
    \draw[very thick] (a3) -- (b2);
     \draw[very thick] (a4) -- (b4);
    \draw[very thick] (a4) -- (b2);
    \end{tikzpicture}
\end{minipage}\hspace{0.3cm}
\begin{minipage}{0.45\textwidth}
\centering
    \begin{tikzpicture}
    \node[circle,fill=black,label=left:$1$,inner sep = 2pt] (a1) at (0,1.2){};
    \node[circle,fill=black,label=left:$2$,inner sep = 2pt] (a2) at (0,0.4){};
    \node[circle,fill=black,label=left:$3$,inner sep = 2pt] (a3) at (0,-0.4){};
    \node[circle,fill=black,label=left:$4$,inner sep = 2pt] (a4) at (0,-1.2){};

    \node[circle,fill=blue,draw=black,inner sep = 2pt] (b1) at (2,1.2){};
    \node[circle,fill=blue,draw=black,inner sep = 2pt] (b2) at (2,0.4){};
    \node[circle,fill=red,draw=black,inner sep = 3pt] (b4) at (2,-0.6){};
    \node[circle,fill=green,draw=black,inner sep = 2pt] (b5) at (2,-1.3){};
    \node[circle,fill=green,draw=black,inner sep = 2pt] (b6) at (2.4,-1.3){};
      
    \draw[very thick] (a1) -- (b1);
    \draw[very thick] (a1) -- (b4);
    \draw[very thick] (a2) -- (b1);
    \draw[very thick] (a2) -- (b4);
    \draw[very thick] (a3) -- (b4);
    \draw[very thick] (a3) -- (b2);
     \draw[very thick] (a4) -- (b4);
    \draw[very thick] (a4) -- (b2);
    \end{tikzpicture}
\end{minipage}\hspace{0.3cm}
\caption{The blue goods are not in $L$ while the red ones (they are the last consumed goods) are in $L$ and the green goods are untouched. The figure on the right has a super-good formed by unifying the two red goods.}
\label{fig0}
\end{figure}
}

Let $G$ be the complete bipartite graph on vertex set $N \cup S$, where $N$ is the set of agents and $S$ is the set of all
goods that have been fully consumed during the one unit-time of the eating algorithm along with the super-good $s$ with 
{\em capacity} $k$. Consider the fractional matching $X = (x_{ij})_{i \in N, j \in S}$ in $G$ that corresponds to the output of the eating algorithm when run for one unit of time. Since the super-good $s$ has capacity $k$, we have $\sum_ix_{is} = k$. 

\emph{Rounding.} Writing $X$ as a convex combination of integral matchings in $G$
(Birkhoff-von Neumann decomposition) will give us an ex-post $\EFX{}+\PO{}$ guarantee and a careful case analysis of the envy relations yields an ex-ante guarantee of $\frac{6}{7}$-\EF{} (see Algorithm~\ref{alg:utse} and \Cref{thm:warmup}). More precisely, it gives an ex-ante guarantee of $\frac{3k}{3k+1}$ (by \Cref{thm:utse-upper-bounnd}). Moreover, when $k = 1$, we can show an ex-ante \EF{} guarantee (by \Cref{thm:low-k}). Since $\frac{3k}{3k+1} \ge \frac{9}{10}$ for $k \ge 3$, we have the desired ex-ante guarantee for all $k$, except $k = 2$.

We need a new idea for $k = 2$ and this will be the {\em dependent rounding} technique of Gandhi et al.~\cite{GKSADependent}. If $k = 2$, then we will round $X$ into an integral allocation 
using dependent rounding (instead of Birkhoff-von Neumann decomposition) and prove the ex-ante guarantee of $\frac{9}{10}$-\EF{} (see Algorithm~\ref{alg:dependent-rounding} and \Cref{thm:k-two}). 

To summarize, the main steps in our algorithm are as follows:
\begin{enumerate}
    \item Run the eating algorithm for one unit of time (so each agent consumes the equivalent of one good).
    \item Sample an integral allocation from the resulting fractional allocation. This step can be done via Birkhoff-von Neumann decomposition (which results in $\frac{6}{7}$-\EF{}) or via dependent rounding when $k=2$ (which achieves $\frac{9}{10}$-\EF{}).
    \item Allocate all unallocated goods (which we call the \emph{tail}) to an agent chosen uniformly at random from the set of unenvied agents in the integral allocation.
\end{enumerate}

\paragraph{Results for subadditive and monotone valuations.}
Our results for subadditive and monotone valuations arise from a novel technique to generate randomized allocations. This technique involves adding randomness to a simple algorithm that computes an \EFXcharity{} allocation, and works as follows:
\begin{enumerate}
    \item Initialize all goods as unallocated.
    \item Let $Z$ be an inclusion-wise minimal envied set of unallocated goods. If no such $Z$ exists, terminate.
    \item From the set of agents who envy $Z$, pick an agent $i$ uniformly at random. Swap $Z$ with the allocated bundle of agent $i$. Return to Step 2.
\end{enumerate}

This algorithm outputs a randomized allocation that is ex-ante $\nicefrac{1}{2}$-\EF{} and ex-post \EFXcharity{} (\Cref{thm:bobw-charity-monotone}).

Before we discuss why this algorithm works, we highlight its significance. The standard approach for computing best-of-both-worlds allocations is to first construct a fractional allocation and then round it; see, for example, the works of Feldman et al.~\cite{FMN+24breaking}, Aziz et al.~\cite{AFS+24best}, and Babaioff et al.~\cite{BEF21best}. Such an approach has severe limitations as we move beyond subadditive to general monotone valuations since rounding algorithms seldom capture the synergies between goods in a bundle. This makes proving ex-ante guarantees highly challenging. Sidestepping these issues, the algorithm we provide is simple and to the best of our knowledge, the first best-of-both-worlds guarantee for general monotone valuations. We believe this technique is likely to have applications beyond this paper. 

The key insight in proving the ex-ante guarantee is that whenever agent $j$ gets a bundle $Z$ that agent $i$ envies, agent $i$ would have had an ``equal opportunity'' of getting that bundle. Therefore, even if agent $i$ has zero utility in the current allocation, we can charge the envy to the case when $i$ receives $Z$. This insight leads to the ex-ante guarantee of
$\nicefrac{1}{2}$-\EF{}. The ex-post guarantee follows straightforwardly from the description of the algorithm.

We can strengthen the ex-post \EFXcharity{} guarantee to \EFXBoundedCharity{} by simply adding the following Step 4 to the above approach: Run the \EFXBoundedCharity{} algorithm of Chaudhury et al.~\cite{CKM+21little} on the partial allocation computed by the first three steps. The ex-ante fairness guarantee is weakened from $\nicefrac{1}{2}$-\EF{} to $\nicefrac{1}{2}$-\Prop{}, and the preference domain for which we can show this guarantee shrinks from monotone to subadditive valuations (\Cref{thm:Ex-ante_half-Prop_ex-post_EFX-with-charity}). This loss occurs since we lose control over the allocation distribution when we run the algorithm of \cite{CKM+21little} as a black box.

\paragraph{\bf Organization of the paper.} %
\Cref{sec:impossible} has a sketch of the impossibility result (\Cref{thm:SDEF_EFX}).
The proof of \Cref{thm:dependent_rounding} is given in \Cref{sec:Results-Lexicographic}. 
\Cref{thm:bobw-charity-monotone,thm:Ex-ante_half-Prop_ex-post_EFX-with-charity} are proved in \Cref{sec:Results-Additive}. We discuss the preliminaries in \Cref{sec:prelims}.

\subsection{Background and Related Results}
The twin goals of ex-ante and ex-post fairness in resource allocation problems have been studied traditionally in the economics literature and, more recently, also within computer science. Early works in this direction were in context of the \emph{random assignment} problem, where $n$ objects must be allocated to $n$ individuals with each individual receiving one object. These works include \cite{HZ79efficient}, \cite{BM01new}, and \cite{BCK+13designing}.

Closest to our work are those of Aziz et al.~\cite{AFS+24best}, Babaioff et al.~\cite{BEF21best}, and Feldman et al.~\cite{FMNP23}. 
\begin{itemize}
    \item The question of whether an ex-ante sd-envy-free and ex-post envy-free up to one good (\EF{1}) allocation always exists was formulated by Freeman et al.~\cite{FSV20best}, who showed a positive answer by means of a variant of the Probabilistic Serial procedure. Shortly after, Aziz~\cite{A20simultaneously} showed that this result can, in fact, be achieved through the Probabilistic Serial procedure itself. Note that \cite{AFS+24best} is the merger of the works of Freeman et al.~\cite{FSV20best} and Aziz~\cite{A20simultaneously}.
    \item The follow-up work of Babaioff et al.~\cite{BEF21best} took a complementary approach by asking for \emph{share-based} (instead of envy-based) ex-post fairness. They provided an algorithm for computing an ex-ante proportional (\Prop{}) and ex-post half truncated proportional share, i.e., $\frac{1}{2}$-\TPS{} (and thus, $\frac{1}{2}$-\MMS{}) allocation.\footnote{\TPS{} is a fair share measure defined in \cite{BEF21best} that is stronger than \MMS{}.}
    \item More recently, it was shown by Feldman et al.~\cite{FMNP23} that there is a 
    randomized allocation that is ex-ante $\frac{1}{2}$-\EF{} and ex-post \EF{1} and $\frac{1}{2}$-\EFX{};\footnote{An allocation $(A_1,\ldots,A_n)$ is $\frac{1}{2}$-\EFX{} if for any pair of agents $i$ and $j$: $v_i(A_i) \ge \frac{1}{2}v_i(A_j\setminus\{g\})$ for any $g \in A_j$.} furthermore, such an allocation can be computed in polynomial time.
\end{itemize}

Other related works have considered binary additive valuations~\cite{AAG+15online} and submodular valuations with binary marginals~\cite{BEF21fair}, the design of fractional maximum Nash welfare allocation~\cite{HPP+20fair}, and the study of ex-ante and ex-post fairness in an online allocation setting~\cite{MNR21fair}. Additionally, Caragiannis et al.~\cite{CKK21interim} have studied \emph{interim envy-freeness} which is a fairness notion that is stronger than ex-ante but weaker than ex-post envy-freeness.

\section{Preliminaries}
\label{sec:prelims}

Given any $r \in \mathbb{N}$, let $[r] \coloneqq \{1,2,\dots,r\}$.

\paragraph{Problem instance.}
An \emph{instance} $\langle N, M, \V \rangle$ of the fair division problem is defined by a set of $n$ \emph{agents} $N \coloneqq \{1,2,\dots,n\}$, a set of $m$ \emph{indivisible goods} $M \coloneqq \{g_1,\dots,g_m\}$, and a \emph{valuation profile} $\V \coloneqq \{v_1,v_2,\dots,v_n\}$ that specifies the cardinal preferences of each agent $i \in N$ via a \emph{valuation function} $v_i:2^M \rightarrow \mathbb{N} \cup \{0\}$.
We will assume throughout that $v_i(\emptyset)=0$. For simplicity, we will write $v_i(g)$ in place of $v_i(\{g\})$ to denote agent $i$'s numerical value for good $g$. We will also use $g \>_i g'$ to denote $v_i(g) > v_i(g')$, and $g \succeq_i g'$ to denote $v_i(g) \geq v_i(g')$.

\paragraph{Valuation classes.} There are three broad classes of valuations discussed in this paper.
\begin{itemize}
    \item \emph{Monotone valuations:} A valuation function $v_i$ is monotone if for any $S \subseteq T \subseteq M$, $v_i(S) \le v_i(T)$.
    \item \emph{Subadditive valuations:} A valuation function $v_i$ is subadditive if it is monotone and for any $S, T \subseteq M$, $v_i(S \cup T) \le v_i(S) + v_i(T)$.
    \item \emph{Lexicographic valuations:} A valuation function $v_i$ is \emph{lexicographic} if all individual goods have unique values, i.e., for any $j \neq k$, $v(g_j) \neq v(g_k)$, and for any pair of distinct bundles $S,T \subseteq M$, the following holds: 
$$v(S) > v(T) \text{ if and only if there exists a good } g \in S \setminus T \text{ such that } \{g' \in T : v(g') > v(g)\} \subseteq S.$$ Note that a lexicographic valuation function induces a strict total order over the set of bundles $2^M$.

\end{itemize}

\paragraph{Fractional and randomized allocations.}
A \emph{fractional allocation} of the set $M$ of goods among the agents in $N$ is given by a non-negative $n \times m$ matrix $X$, where $X_{i,j}$ denotes the fraction of good $g_j$ assigned to agent $i$, so that for every $j \in [m]$, we have $\sum_{i \in [n]} X_{i,j} \leq 1$. We say that a fractional allocation $X$ is \emph{complete} if it allocates one unit of each good, that is, for every $j \in [m]$, $\sum_{i \in [n]} X_{i,j} = 1$; otherwise, we say that the fractional allocation is \emph{partial}.

We will assume the individual goods to be \emph{homogeneous}, which means that an agent's value for a fraction of a good is equal to the said fraction of its value for the entire good. Therefore, by the additivity of the valuations, the value (or \emph{utility}) derived by an agent $i \in N$ under a fractional allocation $X$ can be defined as $v_i(X_i) \coloneqq \sum_j X_{i,j}\cdot v_i(g_j)$.

An \emph{integral} allocation is a special case of a fractional allocation when all entries in the corresponding matrix are in $\{0,1\}$. 
For an integral allocation $A$, we will write $A_i \coloneqq \{j \in M : A_{i,j} = 1\}$ to denote the set of goods (or the \emph{bundle}) assigned to agent $i$. Note that an integral allocation can be equivalently represented as an ordered tuple $A = (A_1,\dots,A_n)$. For simplicity, we will use the term `allocation' to refer to an `integral allocation', and will explicitly write `fractional allocation' otherwise. 

A \emph{randomized} allocation is a probability distribution (or a lottery) over integral allocations, and is specified by a collection of $q$ ordered pairs $\X \coloneqq \{(p_1,A^1),\dots,(p_q,A^q)\}$ such that $\sum_{\ell \in [q]} p_\ell = 1$ and $p_\ell \geq 0$ for all $\ell \in [q]$; here $p_\ell$ is the probability that the realized allocation is $A^\ell$. The set $\{A^1,\dots,A^q\}$ is called the \emph{support} of the randomized allocation $\X$. Note that a randomized allocation $\X$ can be naturally associated with a fractional allocation~$X \coloneqq \sum_{\ell \in [q]} p_\ell A^\ell$. 

\paragraph {Birkhoff-von Neumann decomposition.}
A square matrix $X$ of non-negative real numbers is said to be \emph{doubly stochastic} if each of its rows and each of its columns sum up to $1$. A doubly stochastic matrix all of whose entries are in $\{0,1\}$ is called a \emph{permutation matrix}. \Cref{prop:BvN} below recalls the well-known Birkhoff-von Neumann (or B-vN) decomposition~\cite{LP09matching}.

\begin{restatable}[\cite{B46three,vN53certain}]{proposition}{BvN}
Let $X$ be an $m \times m$ doubly stochastic matrix. There is a polynomial-time algorithm that, given $X$ as input, returns a set of $m \times m$ permutation matrices $A^1,\dots,A^q$, where $q \leq m^2-m+2$, and a set of nonnegative numbers $p_1,\dots,p_q$, where $\sum_{\ell \in [q]} p_\ell = 1$, such that $X = \sum_\ell p_\ell A^\ell$.
\label{prop:BvN}
\end{restatable}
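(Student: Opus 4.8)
The plan is to establish the result constructively via a greedy extraction procedure whose single structural ingredient is Hall's marriage theorem. Given a doubly stochastic matrix $X$, define its \emph{support graph} $G(X)$ to be the bipartite graph on vertex sets $R = [m]$ (rows) and $C = [m]$ (columns) with an edge $(i,j)$ whenever $X_{i,j} > 0$. First I would show that $G(X)$ always admits a perfect matching. To this end, fix any set $S \subseteq R$ of rows and let $N(S) \subseteq C$ be its neighborhood. The total mass $\sum_{i \in S,\, j} X_{i,j}$ equals $|S|$ since every row of $X$ sums to $1$; on the other hand, all of this mass sits in the columns of $N(S)$, and each such column contributes at most $1$ to the total (since every column sums to $1$). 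Hence $|S| \le |N(S)|$, so Hall's condition holds and $G(X)$ has a perfect matching. This matching is exactly a permutation $\sigma$, i.e., a permutation matrix $P$ with $P_{i,\sigma(i)} = 1$ and support contained in that of $X$.

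The extraction step then proceeds as follows. Let $p \coloneqq \min_{i} X_{i,\sigma(i)}$ be the smallest entry of $X$ along the matching, and note $p > 0$. If $p = 1$ then $X = P$ is already a permutation matrix and we stop. Otherwise set $X' \coloneqq \frac{1}{1-p}(X - pP)$. I would verify that $X'$ is again doubly stochastic: each row sum of $X - pP$ is $1 - p$ (the row sums of $X$ and of $pP$ being $1$ and $p$ respectively), and similarly for columns, so dividing by $1 - p$ restores unit sums; nonnegativity holds because $X_{i,\sigma(i)} \ge p$ along the matching while all other entries are untouched. Crucially, the entry achieving the minimum becomes $0$, so $X'$ has strictly fewer positive entries than $X$ while no new positive entry is ever created. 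Recording the pair $(p, P)$ and recursing on $X'$, the extracted weights sum to $1$ by construction, and unwinding the recursion expresses $X$ as the desired convex combination $\sum_\ell p_\ell A^\ell$.

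For the quantitative bound I would track the number of positive entries. This count starts at no more than $m^2$, drops by at least one in every iteration except the last, and the procedure terminates precisely when the residual matrix is itself a permutation matrix (exactly $m$ positive entries). Hence the number of extracted matrices $q$ satisfies $m \le m^2 - (q-1)$, that is, $q \le m^2 - m + 1 \le m^2 - m + 2$, as claimed. A polynomial running time is immediate: each iteration computes a single bipartite perfect matching (polynomial time, e.g.\ via augmenting paths) followed by an $O(m^2)$ matrix update, and there are at most $m^2$ iterations.

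I expect the one genuinely substantive step to be the existence of the perfect matching in $G(X)$; everything after that is bookkeeping. The mass-counting verification of Hall's condition is the heart of the argument, and it is precisely the place where double stochasticity---equal, positive row and column sums---is used. The normalization constant $\frac{1}{1-p}$ and the termination analysis are routine once this structural fact is in hand.
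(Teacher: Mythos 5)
The paper does not prove this statement; it is recalled as the classical Birkhoff--von Neumann theorem with citations to Birkhoff and von Neumann, so there is no in-paper argument to compare against. Your proof is the standard constructive one---verify Hall's condition for the support graph by mass counting, extract a permutation matrix with the minimum weight along the matching, renormalize, and recurse---and it is correct: the Hall verification, the preservation of double stochasticity, the strict decrease in the number of positive entries, the termination criterion (a doubly stochastic matrix with exactly $m$ positive entries is a permutation matrix), and the resulting bound $q \le m^2 - m + 1 \le m^2 - m + 2$ all check out, as does the polynomial running time. The only imprecision is the phrase ``the extracted weights sum to $1$'': the raw minima $p_1, p_2, \dots$ recorded at each level do not themselves sum to $1$; the coefficients in the final decomposition are the telescoping products $p_1,\ (1-p_1)p_2,\ (1-p_1)(1-p_2)p_3, \dots$, which do sum to $1$. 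Your sentence about ``unwinding the recursion'' makes clear you intend exactly this, so it is a matter of wording rather than a gap.
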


Consider a special case where the number of agents equals the number of goods, i.e., $n = m$. Let $X$ be a fractional allocation in which each agent consumes a unit amount of goods in total. The corresponding allocation matrix $X$ can be observed to be doubly stochastic, and can therefore be decomposed into permutation matrices, each of which corresponds to an integral allocation. The said decomposition can also be interpreted as a randomized allocation $\X \coloneqq \{(p_1,A^1),\dots,(p_q,A^q)\}$. Interestingly, due to additive valuations, the expected utility of each agent in the randomized allocation $\X$ turns out to be equal to its utility under the fractional allocation $X$. That is, for every agent $i \in N$, $\E[v_i(\X_i)] = \sum_\ell p_\ell v_i(A^\ell_i) = v_i(X_i)$.

We will also use the following generalization of the Birkhoff-von Neumann theorem (\Cref{prop:BvN}) to nonsquare matrices.

\begin{restatable}[\cite{KCP10complexity}]{proposition}{BvNgeneral}
Let $X$ be an $n \times m$ matrix with nonnegative entries such that, for every row $i \in [n]$, $\sum_{j=1}^m X_{i,j} = 1$ and every column $j \in [n]$, $\sum_{i=1}^n X_{i,j} \leq 1$. Then, for some $q \in \mathbb{N}$, there exist $n \times m$ matrices $A^1,\dots,A^q$ and a set of nonnegative numbers $p_1,\dots,p_q$, where $\sum_{\ell \in [q]} p_\ell = 1$ and $X = \sum_{\ell \in [q]} p_\ell A^\ell$ such that for every $\ell \in [q]$,
\begin{itemize}
    \item the elements of $A^\ell$ are in $\{0,1\}$,
    \item for every $i \in [n]$, $\sum_{j=1}^m A^\ell_{i,j} = 1$, and
    \item for every $j \in [m]$, $\sum_{i=1}^n A^\ell_{i,j} \leq 1$.
\end{itemize}
Moreover, $q$ is $\mathcal{O}((m+n)^2)$ and the matrices $A^1,\dots,A^q$ and the coefficients $p_1,\dots,p_q$ can be computed in polynomial time.
\label{prop:BvN_general}
\end{restatable}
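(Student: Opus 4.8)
The plan is to reduce this nonsquare statement to the square Birkhoff--von Neumann theorem (\Cref{prop:BvN}) by padding $X$ into a square doubly stochastic matrix, decomposing the padded matrix, and then discarding the padding. First, observe that the hypotheses force $n \le m$: summing all entries of $X$ row-by-row gives $\sum_{i,j} X_{i,j} = n$, while summing column-by-column gives $\sum_{i,j} X_{i,j} \le m$. Hence there is enough slack among the columns to absorb the unit row masses, and the padding construction below is feasible.

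The key construction is to build an $m \times m$ doubly stochastic matrix $\tilde{X}$ whose first $n$ rows coincide with those of $X$. For each column $j \in [m]$, let $d_j \coloneqq 1 - \sum_{i=1}^n X_{i,j} \ge 0$ be its \emph{deficit}, and note that $\sum_{j=1}^m d_j = m - n$. I would append $m - n$ dummy rows forming a nonnegative $(m-n) \times m$ block $Y$ all of whose row sums equal $1$ and whose $j$-th column sum equals $d_j$; the matrix $\tilde{X}$ is then obtained by stacking $Y$ beneath $X$. Such a block $Y$ exists because finding it is a \emph{balanced} transportation problem: the total supply $m-n$ (one unit per dummy row) equals the total demand $\sum_j d_j = m - n$. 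A concrete $Y$ can be produced in polynomial time by, for instance, the northwest-corner rule. By construction, every row of $\tilde{X}$ sums to $1$, and every column $j$ sums to $\sum_i X_{i,j} + d_j = 1$, so $\tilde{X}$ is doubly stochastic.

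Next I would apply \Cref{prop:BvN} to write $\tilde{X} = \sum_{\ell=1}^q p_\ell \tilde{A}^\ell$ as a convex combination of $m \times m$ permutation matrices, with $q \le m^2 - m + 2$. Let $A^\ell$ denote the $n \times m$ matrix obtained by keeping only the first $n$ rows of $\tilde{A}^\ell$. Restricting the identity $\tilde{X} = \sum_\ell p_\ell \tilde{A}^\ell$ to its first $n$ rows immediately yields $X = \sum_\ell p_\ell A^\ell$. It remains to verify that each $A^\ell$ has the claimed structure: its entries lie in $\{0,1\}$ because $\tilde{A}^\ell$ is a $0/1$ matrix; each row of the permutation matrix $\tilde{A}^\ell$ contains exactly one $1$, so every row of $A^\ell$ sums to exactly $1$; and each column of $\tilde{A}^\ell$ contains exactly one $1$, so deleting rows can only remove that $1$, giving column sums at most $1$. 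Finally, $q \le m^2 - m + 2 = \mathcal{O}(m^2) = \mathcal{O}((m+n)^2)$ (using $n \le m$), and every step---computing the deficits, constructing $Y$, and running the square decomposition---is polynomial time.

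I do not anticipate a genuine obstacle: essentially all the content lives in the padding step, and the only things to check there are that the dummy block $Y$ is \emph{feasible} (guaranteed by the balanced supply/demand identity $\sum_j d_j = m-n$) and \emph{polynomial-time computable} (guaranteed by any constructive transportation solver such as the northwest-corner rule). The remaining verifications are immediate from the defining property that a permutation matrix has exactly one $1$ in each row and each column, so restricting to a subset of rows preserves row sums exactly and can only decrease column sums.
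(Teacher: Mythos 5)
Your proof is correct. Note, however, that the paper does not actually prove \Cref{prop:BvN_general} at all---it is imported as a black box from the cited reference---so there is no in-paper argument to compare against; what you have produced is a self-contained derivation of the cited result from the square Birkhoff--von Neumann theorem (\Cref{prop:BvN}), which the paper does state and use. Your padding argument is sound on every point I checked: the hypotheses do force $n \le m$ (total mass is exactly $n$ by rows and at most $m$ by columns); the deficits satisfy $\sum_j d_j = m-n$, so the dummy block $Y$ is a balanced transportation problem and is constructible in polynomial time; all entries of $X$ and $Y$ lie in $[0,1]$ automatically because rows sum to $1$ with nonnegative entries, so $\tilde{X}$ is genuinely doubly stochastic; and restricting each permutation matrix $\tilde{A}^\ell$ to its first $n$ rows preserves row sums exactly while only decreasing column sums, giving the three claimed structural properties. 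The bound $q \le m^2 - m + 2 = \mathcal{O}((m+n)^2)$ and the polynomial running time follow as you say. The only cosmetic remark is that the proposition's hypothesis reads ``every column $j \in [n]$,'' which is evidently a typo for $j \in [m]$; your proof correctly treats the column-sum condition as holding for all $m$ columns, which is what both the intended statement and your feasibility computation $\sum_j d_j = m - n$ require.
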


\subsection{Fairness Notions for Fractional Allocations}

Given a fractional allocation $X$, %
the \emph{utility} derived by agent $i$ under $X$ is $v_i(X_i) \coloneqq \sum_{j \in [m]} X_{i,j} \cdot v_i(g_j)$. We say that a fractional allocation $X$ is
\begin{itemize}
\item[(a)] $\alpha$-\emph{envy-free} ($\alpha$-\EF{}) for a given $\alpha \in [0,1]$ if, for every pair of agents $i,j \in N$, we have $v_i(X_i) \geq \alpha \cdot v_i(X_j)$. When $\alpha = 1$, we write \EF{} instead of $1$-\EF{}.
\item[(b)] $\alpha$-\emph{proportional} ($\alpha$-\Prop{}) for a given $\alpha \in [0,1]$ if, for every agent $i \in N$, we have $v_i(X_i) \geq \alpha \cdot \frac{v_i(M)}{n}$. When $\alpha = 1$, we write \Prop{} instead of $1$-\Prop{}.
\end{itemize}

Note that for any given $\alpha \in [0,1]$, $\alpha$-\EF{} $\Rightarrow$ $\alpha$-\Prop{}. A fairness notion stronger than all the aforementioned ones is \emph{sd-envy-freeness} (\SDEF{}); here `sd' refers to first-order stochastic dominance. To describe this notion formally, we will find it convenient to work with the \emph{ordinal representation} of agents' preferences. Given a valuation function $v_i$, we will write $\succeq_i$ to denote the weak order induced over the set of all bundles $2^M$ by $v_i$; thus, for any pair of bundles $S,T \subseteq M$, we have $v_i(S) \geq v_i(T) \Leftrightarrow S \succeq_i T$.

\paragraph{sd-envy-freeness.}
Given the \emph{ordinal} preferences of agents, \SDEF{} requires that the fractional allocation be envy-free under all cardinal and additive valuations consistent with the ordinal preferences. Formally, given a pair of fractional allocations $X$ and $Y$, we say that agent $i$ (weakly) \emph{sd-prefers} $X_i$ to $Y_i$, denoted as $X_i \weakSDprefers_i Y_i$, if for every good $g \in M$, we have $ \sum_{g_j \, : \, g_j \, \succeq_i \, g} X_{i,j} \geq \sum_{g_j \, : \, g_j \, \succeq_i \, g} Y_{i,j} $. It is known that $X_i \weakSDprefers_i Y_i$ implies that for \emph{any} cardinal additive valuation function $v_i$ that is consistent with the ordinal preferences $\succeq_i$, we have $v_i(X_i) \geq v_i(Y_i)$.

A fractional allocation $X$ is called \emph{sd-envy-free} (\SDEF{}) if for every pair of agents $i,j \in N$, we have $X_i \weakSDprefers_i X_j$. Thus, $\SDEF{} \Rightarrow \EF{}$.%

\subsection{Fairness and Efficiency Notions for Integral Allocations}
\label{sec:envy-based-fairness}
\paragraph{Fairness notions.}
An integral allocation $A = (A_1,\dots,A_n)$ is said to be 
\begin{itemize}
\item[(a)] \emph{envy-free} (\EF{}) if for every pair of agents $i,j \in N$, $v_i(A_i) \geq v_i(A_j)$~\cite{GS58puzzle,F67resource},
\item[(b)] \emph{envy-free up to any good} (\EFX{}) if for every pair of agents $i,j \in N$ and for every good $g \in A_j$, we have $v_i(A_i) \geq v_i(A_j \setminus \{g\})$~\cite{CKM+19unreasonable},%
\item[(c)] \emph{envy-free up to one good} (\EF{1}) if for every pair of agents $i,j \in N$ such that $A_j \neq \emptyset$, we have $v_i(A_i) \geq v_i(A_j \setminus \{g\})$ for some good $g \in A_j$~\cite{LMM+04approximately,B11combinatorial}. 
\end{itemize}

It is easy to verify that $\EF{} \Rightarrow \EFX{} %
\Rightarrow \EF{1}$ and that all implications are strict.\\

Another relaxation of \EFX{} is \emph{EFX-with-charity}~\cite{CGH19envy,CKM+21little} which corresponds to an integral allocation among $n+1$ agents comprising of $n$ \emph{main} agents in $N$ and a hypothetical agent $n+1$ (also known as \emph{charity}). 

\paragraph{\EFXUnenviedCharity.}
A partition $(A_1,\ldots,A_n, P)$ of the set of goods $M$ is an \EFXUnenviedCharity{} allocation if the following properties are satisfied:
\begin{enumerate}
    \item The partial allocation $A = (A_1,\ldots,A_n)$ is \EFX{}, i.e., for any pair of main agents $i, j \in N$, we have
    $v_i(A_i) \ge v_i(A_j\setminus \{g\})$ for any good $g \in A_j$.
    \item No main agent envies the set $P$ %
    donated to charity, i.e., for every $i \in N$, $v_i(A_i) \geq v_i(P)$.
    
\end{enumerate}

\paragraph{\EFXBoundedCharity.}
A partition $(A_1,\ldots,A_n, P)$ of the set of goods $M$ is an \EFXBoundedCharity{} allocation if it is an \EFXUnenviedCharity{} allocation and satisfies the following additional property:

3. $|P|$ is less than the number of {\em unenvied} main agents; thus, in particular, $|P| \le n-1$.\footnote{Note that, without loss of generality, the number of envied agents is \emph{strictly} less than $n$. To see this, consider the \emph{envy graph} associated with a given partial allocation, where the vertices denote the agents, and edge $(i,j)$ denotes that agent $i$ envies agent $j$. If every agent is envied, then there is at least one cycle of envy relations in the underlying envy graph. By repeatedly resolving these cycles (via cyclic swaps), the envy graph can be made acyclic while maintaining \EFX{}.}\\

An \EFXBoundedCharity{} partition entails additional fairness guarantees. In particular, it can be transformed into a complete allocation can transform it into a complete allocation that is $\frac{1}{2}$-\EFX{} and \EF{1} \cite[Theorem~3.5]{CKM+21little}.

\paragraph{Pareto optimality.} An integral allocation $A$ is said to Pareto dominate another integral allocation $B$ if all agents are weakly better off under $A$ and some agent is strictly better off, i.e., for every agent $i\in N$, we have $v_i(A_i) \geq v_i(B_i)$, and for some agent $j \in N$, we have $v_j(A_j) > v_j(B_j)$. A \emph{Pareto optimal} (integral) allocation is one that is not Pareto dominated by another (integral) allocation.

\paragraph{Fairness/efficiency for randomized allocations.} We say that a randomized allocation $\X \coloneqq \{(p_1,A^1),\dots,(p_q,A^q)\}$ satisfies a property $P$ \emph{ex-ante} if 
the associated fractional allocation $X \coloneqq \sum_{\ell \in [q]} p_\ell A^\ell$ satisfies $P$. Similarly, we say that $\X$ satisfies a property $Q$ \emph{ex-post} if every integral allocation in the support of $\X$ satisfies $Q$; see \Cref{fig:relations-subadditive}.

\section{An Impossibility Result}
\label{sec:impossible}
We first establish the incompatibility between ex-ante sd-envy-freeness and ex-post \EFX{} under lexicographic preferences.
Let us first recall what is {\em sd-envy-freeness}.
Given the \emph{ordinal} preferences of agents, \SDEF{} requires that the fractional allocation be envy-free under all cardinal and additive valuations consistent with the ordinal preferences. 
\begin{itemize}
    \item Formally, given a pair of fractional allocations $X$ and $Y$, we say that agent $i$ (weakly) \emph{sd-prefers} $X_i$ to $Y_i$, denoted as $X_i \weakSDprefers_i Y_i$, if for every good $g \in M$, we have $ \sum_{g_j \, : \, g_j \, \succeq_i \, g} X_{i,j} \geq \sum_{g_j \, : \, g_j \, \succeq_i \, g} Y_{i,j} $. 
\end{itemize}
It is known that $X_i \weakSDprefers_i Y_i$ implies that for \emph{any} cardinal additive valuation function $v_i$ consistent with the ordinal preferences $\succeq_i$, we have $v_i(X_i) \geq v_i(Y_i)$.
Our goal is to prove the following result.

\SDEFandEFX*

Towards proving the above result, we will use the following characterization of \EFX{} allocations for lexicographic preferences.

\begin{restatable}[\cite{HSV+21fair}]{proposition}{EFXCharacterization}
When agents have lexicographic preferences, an allocation is \EFX{} if and only if every envied agent receives exactly one good.
\label{prop:EFX_characterization}
\end{restatable}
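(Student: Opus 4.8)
The plan is to prove both directions of the biconditional, with the reverse (sufficiency) direction being essentially immediate and the forward (necessity) direction carrying the real content. The single most useful tool I would establish first is a reformulation of the lexicographic order in terms of symmetric differences: for an agent with lexicographic valuation $v$ and two distinct bundles $S,T$, one has $v(S) > v(T)$ if and only if the $v$-highest-ranked good in the symmetric difference $S \triangle T$ belongs to $S$. This is a direct unpacking of the definition of lexicographic valuations from \Cref{sec:prelims}: writing $g^\ast$ for the top good of $S \triangle T$, if $g^\ast \in S$ then every good of $T$ ranked above $g^\ast$ cannot lie in $S \triangle T$ (by maximality of $g^\ast$), hence lies in $S \cap T \subseteq S$, which is exactly the witness required by the definition, with $g^\ast \in S \setminus T$. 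The reverse implication follows by exchanging the roles of $S$ and $T$, using that a lexicographic valuation induces a strict total order, so $v(S) \neq v(T)$. I would record this as a short preliminary claim.

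For the sufficiency direction ($\Leftarrow$), suppose every envied agent receives exactly one good, and fix an arbitrary ordered pair of agents $(i,j)$ together with a good $g \in A_j$. If agent $j$ is envied by someone, then $A_j$ is a singleton, so $A_j \setminus \{g\} = \emptyset$ and $v_i(A_i) \ge 0 = v_i(A_j \setminus \{g\})$. If instead $j$ is unenvied, then in particular $i$ does not envy $j$, so $v_i(A_i) \ge v_i(A_j) \ge v_i(A_j \setminus \{g\})$, where the last inequality is monotonicity. In either case the \EFX{} condition holds for the pair $(i,j)$. Note that this direction uses only monotonicity and not the lexicographic structure.

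For the necessity direction ($\Rightarrow$), I would argue by contradiction. Suppose the allocation is \EFX{} but some envied agent $j$ receives two or more goods, and let $i$ be an agent who envies $j$, so that $v_i(A_j) > v_i(A_i)$. Since $A_i$ and $A_j$ are disjoint bundles of an allocation, $A_i \triangle A_j = A_i \cup A_j$, and the reformulation above shows that the $v_i$-highest good $h$ of $A_i \cup A_j$ lies in $A_j$. Because $|A_j| \ge 2$, I can choose a good $g \in A_j$ with $g \neq h$. Then $h$ still lies in $A_j \setminus \{g\}$ and remains the $v_i$-highest good of $(A_j \setminus \{g\}) \cup A_i = (A_j \setminus \{g\}) \triangle A_i$, so the reformulation yields $v_i(A_j \setminus \{g\}) > v_i(A_i)$, contradicting \EFX{}. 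Hence every envied agent receives at most one good; moreover an agent with an empty bundle cannot be envied, since envy would require $0 = v_i(\emptyset) > v_i(A_i) \ge 0$, so an envied agent receives exactly one good.

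I expect the main obstacle to be cleanly establishing the symmetric-difference reformulation from the algebraic definition of lexicographic valuations: the maximality argument and the edge cases (empty bundles, and the uniqueness of the highest good, which follows from the distinct-values assumption) are precisely where the lexicographic hypothesis is genuinely invoked. Once that reformulation is in place, both directions are short, and it also makes transparent why sufficiency holds for arbitrary monotone valuations while necessity is special to lexicographic preferences.
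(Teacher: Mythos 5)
The paper does not prove this proposition: it is imported verbatim from Hosseini et al.~\cite{HSV+21fair} (note the citation in the proposition header), so there is no in-paper argument to compare yours against. Judged on its own, your proof is correct and self-contained. The symmetric-difference reformulation is a faithful unpacking of the paper's definition of lexicographic valuations: if the top good $g^\ast$ of $S \triangle T$ lies in $S$, then $g^\ast \in S \setminus T$ and every $g' \in T$ with $v(g') > v(g^\ast)$ must lie in $S \cap T$ by maximality, which is exactly the existential witness; the converse follows by symmetry together with totality of the induced order. Both directions of the characterization then go through as you describe. The details you could easily have fumbled are all handled: in the necessity direction you use that $A_i$ and $A_j$ are disjoint (so the symmetric difference is the union, and remains so after deleting $g \neq h$), and you separately rule out envied agents with empty bundles via $v_i(\emptyset) = 0 \le v_i(A_i)$. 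Your closing observation---that sufficiency needs only monotonicity while necessity genuinely uses the lexicographic structure---is also accurate and matches how the proposition is used in the paper (e.g., in the proof of \Cref{thm:SDEF_EFX} and in the tail-assignment step of Algorithm~\ref{alg:utse}, only the ``every envied agent holds a singleton'' direction does real work).
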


\begin{proof}[Proof Sketch]
Consider the following instance with four goods $g_1,\dots,g_4$ and three agents $1,2,3$ with lexicographic preferences:
\begin{align*}
1: ~&~ g_1 \> g_3 \> g_4  \> g_2\\
2: ~&~ g_1 \> g_2 \> g_4  \> g_3\\
3: ~&~ g_2 \> g_3 \> g_4  \> g_1
\end{align*}
We use the ordinal representation of the lexicographic preferences in the above instance. Thus, for example, agent 1 prefers any bundle containing good $g_1$ over any bundle that does not, subject to that it prefers any bundle containing $g_3$ over any bundle without it, and so on.

Using \Cref{prop:EFX_characterization}, we deduce that there are four possible EFX allocations in this instance:
\begin{align*}
    \NiceMatrixOptions{code-for-first-row=\scriptstyle,code-for-first-col=\scriptstyle}
    A^1{}=\begin{bNiceMatrix}[first-row,first-col]
      & g_1 & g_2 & g_3 & g_4 \\
      1 & 1 & \cdot & \cdot & \cdot \\
      2 & \cdot & 1 & \cdot & \cdot \\
      3 & \cdot & \cdot & 1 & 1 
  \end{bNiceMatrix}, \qquad 
  A^2{}=\begin{bNiceMatrix}[first-row,first-col]
      & g_1 & g_2 & g_3 & g_4 \\
      1 & 1 & \cdot & \cdot & \cdot \\
      2 & \cdot & \cdot & 1 & 1 \\
      3 & \cdot & 1 & \cdot & \cdot 
  \end{bNiceMatrix}, 
  \\
  \NiceMatrixOptions{code-for-first-row=\scriptstyle,code-for-first-col=\scriptstyle}
  A^3{}=\begin{bNiceMatrix}[first-row,first-col]
      & g_1 & g_2 & g_3 & g_4 \\
      1 & \cdot & \cdot & 1 & \cdot \\
      2 & 1 & \cdot & \cdot & \cdot \\
      3 & \cdot & 1 & \cdot & 1 
  \end{bNiceMatrix}, \qquad
  A^4{}=\begin{bNiceMatrix}[first-row,first-col]
      & g_1 & g_2 & g_3 & g_4 \\
      1 & \cdot & \cdot & 1 & 1 \\
      2 & 1 & \cdot & \cdot & \cdot \\
      3 & \cdot & 1 & \cdot & \cdot 
  \end{bNiceMatrix}.
\end{align*}
No randomized allocation supported by these four EFX allocations is sd-EF. Complete details can be found in \Cref{apdx:impossible}.
\end{proof}

\section{Positive Results for Lexicographic Valuations}
\label{sec:Results-Lexicographic}

In this section we first discuss two natural approaches for simultaneously achieving ex-ante \EF{} and ex-post $\EFX{}+\PO{}$ under lexicographic preferences. The first approach involves uniformly randomizing over round-robin sequences, while the second approach concerns the ``eating'' algorithm~\cite{AFS+24best}. We show that the former approach fails to be ex-ante envy-free, while the latter fails ex-post \EFX{}. However, as we will show in subsequent sections, combining the essential ideas from each of these approaches will take us closer to our desired goal of ex-ante \EF{} and ex-post $\EFX{}+\PO{}$.

\subsection*{Achieving ex-post \EFX{}+\PO{}}
For lexicographic preferences, Hosseini et al.~\cite{HSV+21fair} characterized the class of $\EFX{}+\PO{}$ allocations using \emph{picking sequences}. A picking sequence is a sequence $(a_1, a_2, \dots,a_m)$ of length of $m$, where $a_i \in N$ for every $i \in [m]$. A picking sequence naturally induces an allocation as follows: Agents take turns following the sequence, with each picking its favorite remaining item. For example, given an instance with four goods, the sequence $\sigma = (1, 2, 3, 2)$ corresponds to an allocation where agents $1$, $2$, and $3$ pick their favorite remaining good in that order, and finally agent~$2$ picks the leftover good. It can be observed that under lexicographic preferences, an allocation is \PO{} if and only if it is induced by some picking sequence~\cite{ABL+19efficient,HSV+21fair}.

For any permutation $\sigma$ of the agents in $N$, Hosseini et al.~\cite{HSV+21fair} defined a family of picking sequence-based algorithms as follows: First, construct a partial allocation of $n$ goods by executing the picking sequence $\sigma$. That is, a partial allocation, say $A$, of $n$ goods is constructed by allowing each agent to pick its favorite remaining good according to $\sigma$. Next, to allocate the remaining $(m-n)$ goods, pick an arbitrary picking sequence of length $(m-n)$ of the \emph{unenvied} agents in $A$. We call the resulting $m$-length picking sequence a \emph{$\sigma$-unenvied sequence}.

Hosseini et al.~\cite{HSV+21fair} showed that for any choice of permutation $\sigma$, the resulting $\sigma$-unenvied sequence induces an $\EFX{}+\PO{}$ allocation. Furthermore, every $\EFX{}+\PO{}$ allocation for the given instance is achieved by some picking sequence in the aforementioned family (i.e., a $\sigma$-unenvied sequence for some permutation $\sigma$). \Cref{prop:EFX+PO_characterization} recalls this result.

\begin{restatable}[\cite{HSV+21fair}]{proposition}{EFXPOCharacterization}
For lexicographic preferences and any permutation $\sigma$ of the agents, the allocation computed by any $\sigma$-unenvied sequence satisfies \EFX{} and
\PO{}. Conversely, any \EFX{}+\PO{} allocation can be computed by some $\sigma$-unenvied sequence for some choice of $\sigma$.
\label{prop:EFX+PO_characterization}
\end{restatable}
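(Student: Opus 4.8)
The plan is to prove the two directions separately, in both cases leaning on the characterizations already on the table: the picking-sequence characterization of \PO{} (an allocation is \PO{} iff it is induced by some picking sequence) and the \EFX{} characterization of \Cref{prop:EFX_characterization} (an allocation is \EFX{} iff every envied agent holds a single good). The workhorse throughout is the elementary reformulation of envy under lexicographic preferences: agent $i$ envies $j$ precisely when $A_j$ contains a good that $i$ ranks strictly above its own most preferred good $t_i$ (the $\succ_i$-maximal good in $A_i$). For each agent $i$ I would also record the structural fact that \emph{every} good $i$ prefers to $t_i$ is the unique good of some envied agent: such a good lies in another agent's bundle, so by the reformulation it witnesses envy of that agent, which by \Cref{prop:EFX_characterization} must therefore be a singleton.

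\emph{Forward direction} (any $\sigma$-unenvied sequence is \EFX{}$+$\PO{}): \PO{} is immediate, since a $\sigma$-unenvied sequence is a picking sequence. For \EFX{} I would invoke \Cref{prop:EFX_characterization} and show that no agent holding two or more goods is ever envied. The key observation is that every good remaining after the first $n$ picks was available at each of those picks, so (by greedy choice) each agent prefers its phase-one good to every leftover good. Hence the extra goods an unenvied-in-$A$ agent $j$ collects in phase two are ranked, by every other agent $i$, below $i$'s own phase-one good; since $j$ was also unenvied after phase one, $i$ still does not envy $j$ in the final allocation. Thus all multi-good agents are unenvied, every envied agent holds exactly one good, and \EFX{} follows.

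\emph{Converse direction} (every \EFX{}$+$\PO{} allocation $A^*$ arises this way): this is where the real work lies. Let $E$ be the envied agents---each a singleton by \Cref{prop:EFX_characterization}---and $U = N\setminus E$. I would first argue the envy graph of $A^*$ is acyclic: any envy cycle consists of singleton-holders, and rotating the goods around the cycle is a Pareto-improving cyclic swap, contradicting \PO{}. Acyclicity lets me choose a permutation $\sigma$ in which every envied agent precedes each of its enviers. Running $\sigma$ as phase one then makes each agent greedily pick exactly its top good $t_i$: by induction along $\sigma$, the structural fact guarantees that all goods $i$ prefers to $t_i$ are singletons held by envied agents preceding $i$, hence already taken, while $t_i$ itself can be claimed only by agents that envy $i$ and so come later. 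Consequently the phase-one partial allocation $A$ has $A_i = \{t_i\}$ for all $i$ and the same envied set $E$ as $A^*$. Finally I would allocate the leftover goods $R = \bigcup_{j\in U}(A_j^*\setminus\{t_j\})$ among $U$: restricting to the instance $\langle U, R\rangle$, the target sub-allocation is \PO{} there, because $t_j$ is $j$'s favourite good in $\{t_j\}\cup R$, so any Pareto improvement on $R$ would lift to one on $A^*$. The picking-sequence characterization of \PO{} then yields a phase-two picking sequence over the unenvied agents $U$ that, starting from the available set $R$, reconstructs $A^*$.

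The main obstacle I anticipate is entirely in the converse: forcing the phase-one order $\sigma$ to reproduce the top goods exactly. This is delicate because greedy picking could divert an agent to a leftover good it happens to prefer; the two structural facts---that everything above any agent's top is an envied singleton, and that \PO{} kills envy cycles---are precisely what rule this out and deliver a valid topological order. Some care is also needed with boundary cases (notably $m \ge n$, so that phase one can hand every agent a good, which is consistent with \EFX{}$+$\PO{}), but these are routine once the structure above is in place.
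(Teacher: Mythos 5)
The paper does not actually prove this proposition: it is imported verbatim from Hosseini et al.~\cite{HSV+21fair} (the statement carries the citation and the surrounding text merely ``recalls this result''), so there is no in-paper proof to compare against. Your blind argument is, as far as I can check, a correct and self-contained proof of the cited result. The forward direction is right: \PO{} is immediate from the picking-sequence characterization, and your observation that every leftover good was available at each phase-one turn (so every agent prefers its phase-one pick to the entire tail) combines with unenviedness after phase one to show multi-good agents stay unenvied, which is exactly what \Cref{prop:EFX_characterization} demands. The converse is also sound, and you correctly identify where the work is: acyclicity of the envy graph via \PO{} (the cyclic-swap argument), a reverse-topological order $\sigma$, the induction showing each agent picks its top good $t_i$ of $A^*_i$ (your ``structural fact'' that every good above $t_i$ is a singleton held by an earlier, envied agent is precisely what makes the induction close, including the base case, where it forces $t_{\sigma(1)}$ to be $\sigma(1)$'s global favorite), the observation that the phase-one envy graph coincides with that of $A^*$, and the lifting of a Pareto improvement on the restricted instance $\langle U, R\rangle$ back to $A^*$ via additivity. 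Two small points worth making explicit if you write this up: (i) under \EFX{} with $m \ge n$ every bundle of $A^*$ is nonempty (an empty-handed agent would envy some agent holding at least two goods), so $t_i$ is well defined; and (ii) the claim that no predecessor of $i$ in $\sigma$ takes $t_i$ follows most cleanly from the induction hypothesis itself (each predecessor $j$ picks $t_j \in A^*_j$, disjoint from $A^*_i$), rather than from the envy relation.
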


Thus, in order to achieve ex-post $\EFX{}+\PO{}$ for lexicographic preferences, it suffices to construct a probability distribution over $\sigma$-unenvied sequences. We examine a specific $\sigma$-unenvied sequence where, after the initial picking phase according to $\sigma$, the remaining $(m-n)$ goods are assigned to the last agent in $\sigma$.

\subsubsection*{The Uniform Permutation Algorithm}

Consider an algorithm that takes as input any instance with lexicographic valuations and constructs a randomized allocation as follows: First, a permutation $\sigma$ of agents in $N$ is chosen uniformly at random. The agents take turns according to $\sigma$, and on its turn, each agent picks its favorite remaining item. Finally, all remaining items are assigned to the last agent according to $\sigma$. We call this \UnifPerm{} algorithm.\footnote{When the number of items equals the number of agents, this algorithm is also known as \emph{random priority} or \emph{random serial dictatorship} in the economics literature~\cite{BM01new}.}

It can be seen from \Cref{prop:EFX+PO_characterization} that the above algorithm is ex-post $\EFX{}+\PO{}$. At first glance, one might expect that uniformly randomizing over all permutations would result in ex-ante \EF{}. However, \Cref{eg:Unif_Perm_Fails_3/4-EF} shows that for any $\eps>0$, \UnifPerm{} fails to achieve ex-ante $(\frac{3}{4}+\eps)$-\EF{}. The intuition behind this failure is as follows: For two agents, $i$ and $j$, while the probability of $i$ going before or after $j$ is the same, the accumulated \emph{value deficit} for agent $i$ in the latter case could still outweigh any \emph{advantage} gained in the former.

\begin{example}[\UnifPerm{} fails ex-ante $(\frac{3}{4}+\eps)$-\EF{}]
Consider an instance with six goods and six agents. For $\eps > 0$ sufficiently small, the valuations are given as follows:
\begin{table}[h]
    \centering
    \begin{tabular}{c|cccccc}
        & $g_1$ & $g_2$ & $g_3$ & $g_4$ & $g_5$ & $g_6$\\  
        \cmidrule{1-7}
        $1$ & $1+16\epsilon$ & $1$ & $8\epsilon$ & $4\epsilon$ & $2\epsilon$ & $\epsilon$\\  
        $2$ & $1$ & $1+16\epsilon$ & $8\epsilon$ & $4\epsilon$ & $2\epsilon$ & $\epsilon$\\
        $3$ & $1$ & $\epsilon$ & $8\epsilon$ & $4\epsilon$ & $2\epsilon$ & $1+16\epsilon$\\
        $4$ & $1$ & $\epsilon$ & $8\epsilon$ & $4\epsilon$ & $2\epsilon$ & $1+16\epsilon$\\
        $5$ & $1$ & $\epsilon$ & $8\epsilon$ & $4\epsilon$ & $2\epsilon$ & $1+16\epsilon$\\
        $6$ & $1$ & $\epsilon$ & $8\epsilon$ & $4\epsilon$ & $2\epsilon$ & $1+16\epsilon$
    \end{tabular}
    \label{tab: 0.75 ex-ante EF example}
\end{table}

Let $\X$ denote the randomized allocation computed by the \UnifPerm{} algorithm. The associated fractional allocation $X$ is given by (note that there are $6! = 720$ permutations):
\begin{table}[H]
    \centering
    \begin{tabular}{c|cccccc}
        & $g_1$ & $g_2$ & $g_3$ & $g_4$ & $g_5$ & $g_6$\\  
        \cmidrule{1-7}
        $1$ & $288/720$ & $144/720$ & $72/720$ & $96/720$ & $120/720$ & $0$ \\
        $2$ & $0$ & $576/720$ & $24/720$ & $48/720$ & $72/720$ & $0$ \\
        $3$ & $108/720$ & $0$ & $156/720$ & $144/720$ & $132/720$ & $180/720$\\
        $4$ & $108/720$ & $0$ & $156/720$ & $144/720$ & $132/720$ & $180/720$\\
        $5$ & $108/720$ & $0$ & $156/720$ & $144/720$ & $132/720$ & $180/720$\\
        $6$ & $108/720$ & $0$ & $156/720$ & $144/720$ & $132/720$ & $180/720$
    \end{tabular}
\end{table}

The ratio of agent $1$'s value for its own bundle and that of agent $2$'s bundle under the fractional allocation $X$ is given by:
\begin{align*}
\frac{\E[v_1(X_1)]}{\E[v_1(X_2)]} &= \frac{288 \cdot v_1(g_1) + 144 \cdot v_1(g_2) + 72 \cdot v_1(g_3) + 96 \cdot v_1(g_4) + 120 \cdot v_1(g_5)}{576 \cdot v_1(g_2) + 24 \cdot v_1(g_3) + 48 \cdot v_1(g_4) + 72 \cdot v_1(g_5)} \\ 
&= \frac{432 + 5808\eps}{576+528\eps} = \frac{3}{4} + 5412\eps,
\end{align*}
which is an increasing function in $\eps$ with a minima of $0.75$. Thus, for any given $\eps' > 0$, by choosing $\eps > 0$ to be sufficiently small, we observe that $\X$ fails ex-ante $(\frac{3}{4}+\eps')$-\EF{}.\qed
\label{eg:Unif_Perm_Fails_3/4-EF}
\end{example}

Despite the failure of \UnifPerm{} in achieving better than ex-ante $\frac{3}{4}$-\EF{}, we show that it nevertheless provides a nontrivial ex-ante guarantee. The proof of this claim can be found in \Cref{apdx:lexicographic}.

\begin{restatable}[\textbf{Guarantee for \UnifPerm{}}]{theorem}{UniformPermutation}
Given as input any instance with lexicographic valuations, the \UnifPerm{} algorithm computes a randomized allocation that is ex-ante $\frac{1}{2}$-\EF{} and ex-post \EFX{} and \PO{}. %
\label{thm:Uniform_Permutation}
\end{restatable}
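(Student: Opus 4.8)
\textbf{Proof proposal for \Cref{thm:Uniform_Permutation}.}

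The plan is to verify the two guarantees separately. The ex-post claim is immediate from the structural characterization: the \UnifPerm{} algorithm realizes, for each sampled permutation $\sigma$, a $\sigma$-unenvied sequence (the initial round assigns $n$ goods one per agent according to $\sigma$, and the leftover goods all go to the last agent, who is necessarily unenvied after the initial round). Hence by \Cref{prop:EFX+PO_characterization} every allocation in the support is \EFX{} and \PO{}, so the randomized allocation is ex-post \EFX{}$+$\PO{}. The real work is the ex-ante $\tfrac{1}{2}$-\EF{} bound, and this is where I expect the main obstacle to lie.

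For the ex-ante guarantee, fix an ordered pair of agents $i$ and $j$; I must show $\E[v_i(X_i)] \ge \tfrac12 \E[v_i(X_j)]$. The natural approach is a \emph{coupling} (or pairing) argument on permutations. Pair each permutation $\sigma$ with the permutation $\sigma'$ obtained by swapping the positions of $i$ and $j$; this is an involution on the $n!$ permutations, and the two are equally likely. The key idea is to compare, within each such pair, the value agent $i$ derives for its own bundle against the value it assigns to agent $j$'s bundle. When $i$ precedes $j$ in $\sigma$, agent $i$ picks before $j$ from an identical pool, so under lexicographic preferences $i$'s own pick is weakly preferred (by $i$) to whatever $j$ subsequently receives; intuitively $i$ does at least as well as $j$ in $i$'s own estimation. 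The difficulty is the reverse case, where $j$ precedes $i$: here $j$ may grab a good that $i$ values highly, and $i$'s deficit relative to $j$ can be large. The goal of the factor $\tfrac12$ is exactly to absorb this asymmetry: I want to argue that the ``good'' case ($i$ before $j$) contributes enough own-value to cover, up to a factor of two, the envy incurred in the ``bad'' case ($j$ before $i$).

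Concretely, I would try to establish the pointwise/aggregated inequality
\begin{align*}
\sum_{\sigma} v_i\bigl(X_i^\sigma\bigr) \;\ge\; \tfrac12 \sum_{\sigma} v_i\bigl(X_j^\sigma\bigr),
\end{align*}
where $X^\sigma$ denotes the integral allocation induced by the \UnifPerm{} run on permutation $\sigma$, and the sums are over all $n!$ permutations. Grouping the permutations into the swap-pairs $\{\sigma,\sigma'\}$, it suffices to show for each pair that $v_i(X_i^\sigma) + v_i(X_i^{\sigma'}) \ge \tfrac12\bigl(v_i(X_j^\sigma) + v_i(X_j^{\sigma'})\bigr)$. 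The most delicate point is handling the leftover (tail) goods, which in \UnifPerm{} are dumped entirely onto the last agent: when $j$ happens to be last in $\sigma$, agent $j$'s bundle can be large, inflating $v_i(X_j^\sigma)$, and I must ensure the paired permutation still furnishes enough own-value for $i$. I expect that the lexicographic structure---specifically, that an agent's value for a bundle is governed by its single most-preferred good in that bundle, via \Cref{prop:EFX_characterization}---lets me control these tail contributions and reduce each pair to a comparison of at most a few top goods. Verifying this reduction carefully, and confirming that $\tfrac12$ (rather than a larger constant) is what survives the worst pairing, is the crux; the remaining steps are routine arithmetic over the paired permutations.
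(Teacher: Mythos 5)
Your proposal matches the paper's proof: it obtains ex-post \EFX{}$+$\PO{} from \Cref{prop:EFX+PO_characterization} by noting that each sampled permutation realizes a $\sigma$-unenvied sequence, and it reduces ex-ante $\tfrac12$-\EF{} to a per-pair inequality over partner permutations obtained by swapping the positions of $i$ and $j$, exactly as the paper does. The crux you deferred is closed precisely along the lines you anticipate: writing $g_i$ and $g_j$ for the goods picked at the earlier of the two swapped positions in $\sigma$ and its partner $\tau$, the paper uses $v_i(g_i) > v_i(g_j)$ (since $g_j$ is still available at $i$'s turn in $\sigma$), which by lexicographic valuations gives $v_i(g_i) > v_i(A_j)$ even when $A_j$ is the tail, and in the worst subcase $g_j \in A_j$ it bounds $v_i(A_j) + v_i(g_j) = 2v_i(g_j) + v_i(A_j \setminus \{g_j\}) < 2v_i(g_i) + v_i(A_i) \le 2\bigl(v_i(g_i) + v_i(A_i)\bigr)$, using that $A_j \setminus \{g_j\}$ is still available at $i$'s later turn in $\tau$.
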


\subsubsection*{The ``Eating'' Algorithm}

As mentioned earlier, it was shown in \cite{AFS+24best} that a randomized allocation that is ex-ante sd-envy-free and ex-post \EF{1} can be achieved by Probabilistic Serial~\cite{BM01new} or the ``eating'' algorithm. It is reasonably straightforward to show that the fractional allocation constructed by PS can be decomposed as a probability distribution over allocations, each of which is the outcome of a picking sequence (see \Cref{subsec:Proof_sdEF_EF1+PO_lexicographic} for more details). For lexicographic preferences, any allocation generated by a picking sequence is known to be Pareto optimal \cite{ABL+19efficient,HSV+21fair}. Thus, the existence of an ex-ante \SDEF{} and ex-post $\EF{1}+\PO{}$ randomized allocation follows.

\begin{restatable}[\textbf{Ex-ante \SDEF{} and ex-post \EF{1}+\PO{}}]{proposition}{SDEFandEFOnePOLexicographic}
There is a polynomial-time algorithm that, given as input any instance with lexicographic valuations, returns a randomized allocation that is ex-ante sd-envy-free $(\SDEF{})$ and ex-post \EF{1} and \PO{}.
\label{prop:sdEF_EF1+PO_lexicographic}
\end{restatable}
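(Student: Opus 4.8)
The plan is to realize the claimed randomized allocation as the Birkhoff--von Neumann-style decomposition of the fractional allocation produced by the Probabilistic Serial (``eating'') procedure, and then to argue that each integral allocation in the support is itself the outcome of a picking sequence, so that $\PO{}$ comes for free from the structure theory of lexicographic preferences. Concretely, I would first run the eating algorithm to completion on the given lexicographic instance. This produces a complete fractional allocation $X$ which, as established by Aziz et al.~\cite{AFS+24best}, is ex-ante $\SDEF{}$; this handles the ex-ante half of the statement immediately, since $X$ is exactly the fractional allocation associated with the randomized allocation we output, and the notion of ex-ante $\SDEF{}$ is by definition a property of $X$.

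\emph{Decomposition into picking-sequence allocations.} The heart of the argument is to exhibit a decomposition $X = \sum_{\ell} p_\ell A^\ell$ where each $A^\ell$ is an integral allocation that arises from some picking sequence. The eating algorithm proceeds in a sequence of ``events,'' where an event is the moment at which some good becomes fully consumed. Between consecutive events the set of goods each agent is eating is fixed, and at each event the agents who just finished a good move on to their next favorite \emph{available} good. The key observation for lexicographic preferences is that the order in which goods are fully consumed, together with which agent(s) were eating them, can be read off as a picking order: I would argue that one can decompose the fractional assignment so that each realized integral allocation assigns goods to agents in a manner consistent with some picking sequence that respects the order of consumption events. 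Because each agent always eats its most-preferred \emph{available} good, every good an agent receives in a realized allocation is, at the time of selection, that agent's favorite among the remaining goods along the induced sequence --- which is precisely the defining property of a picking-sequence allocation. I would make this precise either by appealing to a suitable B-vN-type decomposition (\Cref{prop:BvN_general}) applied event-by-event, or by directly citing the decomposition argument sketched in \Cref{subsec:Proof_sdEF_EF1+PO_lexicographic}.

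\emph{Ex-post $\EF{1}$ and $\PO{}$.} Once every $A^\ell$ is known to be induced by a picking sequence, Pareto optimality is immediate from the cited characterization~\cite{ABL+19efficient,HSV+21fair} that, under lexicographic preferences, an allocation is $\PO{}$ if and only if it is induced by some picking sequence. For ex-post $\EF{1}$, I would invoke the guarantee of Aziz et al.~\cite{AFS+24best} that the B-vN decomposition of the eating-algorithm's fractional allocation yields integral allocations that are each $\EF{1}$; alternatively, $\EF{1}$ of picking-sequence allocations can be argued directly, since in a sequential picking process any envy an agent has toward another can be removed by deleting the envied agent's first-picked good. Combining these, every allocation in the support is simultaneously $\EF{1}$ and $\PO{}$, giving ex-post $\EF{1}+\PO{}$.

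\emph{Main obstacle.} The step I expect to require the most care is verifying that the decomposition can be carried out so that \emph{every} integral allocation in the support is genuinely a picking-sequence allocation (and not merely $\EF{1}$). The subtlety is that a generic B-vN decomposition of $X$ need not respect the consumption order, and an arbitrary integral matching consistent with $X$ might fail to be induced by any picking sequence. I would address this by performing the decomposition respecting the temporal structure of the eating process --- decomposing within each event interval and concatenating --- so that the induced selection order is preserved in every realized allocation; the polynomial running time then follows from the polynomial bounds on the number of events and on the size of each local decomposition (\Cref{prop:BvN_general}).
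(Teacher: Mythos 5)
Your overall architecture matches the paper's: run Probabilistic Serial to completion, get ex-ante \SDEF{} and ex-post \EF{1} from Aziz et al., and get \PO{} by showing that every allocation in the support is induced by a picking sequence and invoking the characterization of \PO{} under lexicographic preferences. The gap is in the central step. You worry that a generic B-vN decomposition ``need not respect the consumption order'' and propose an event-by-event decomposition as a fix, but this is both unnecessary and problematic. It is unnecessary because \emph{any} allocation $A^\ell$ in the support of the standard decomposition (applied to the $m \times m$ doubly stochastic matrix over \emph{representative agents}, one per agent per unit-time round) is automatically a picking-sequence allocation: the sequence is built \emph{a posteriori} from $A^\ell$ by ordering the goods according to the times $t_g$ at which their assigned owners \emph{began} consuming them, and an induction shows that when $o(g_j)$'s turn arrives all later-indexed goods are still at least partially available (their owners have not yet started on them), so $o(g_j)$, having chosen to eat $g_j$ at time $t_{g_j}$, must weakly prefer $g_j$ to all of them and hence picks it. Note that the relevant ordering is by the owner's start-of-consumption time, not by ``the order in which goods are fully consumed'' as you write; ordering by full-consumption time does not support that induction, since a good assigned (in a particular $A^\ell$) to its later eater may finish early even though that eater started late.

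Your proposed fix also has two concrete problems. First, within an ``event interval'' each agent consumes only a fraction of a good and the row sums are below one, so \Cref{prop:BvN_general} does not apply interval-by-interval, and it is not specified how concatenating local decompositions yields a single distribution over integral allocations of the whole instance. Second, and more importantly, the ex-post \EF{1} guarantee of Aziz et al. is proved for the specific round-based representative-agent decomposition; if you substitute a different decomposition you must re-prove \EF{1} for it, and your fallback argument --- that in any picking-sequence allocation envy can be eliminated by removing the envied agent's first-picked good --- is false for arbitrary picking sequences (consider the sequence $(1,1,1,2)$ over four goods: agent $2$ may still envy agent $1$ after removing agent $1$'s first pick). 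It holds here only because the induced sequences are balanced, with each agent appearing exactly once per block of $n$ positions, which is precisely what the representative-agent construction guarantees. The missing idea, then, is to keep the standard decomposition and prove that each of its support allocations is a picking-sequence allocation via the start-time ordering, so that \EF{1} and \PO{} are obtained simultaneously for the same decomposition.
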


The ex-post fairness guarantee in \Cref{prop:sdEF_EF1+PO_lexicographic} cannot be strengthened to \EFX{} since, as shown in \Cref{thm:SDEF_EFX}, ex-ante \SDEF{} is incompatible with ex-post \EFX{}. Intuitively, the failure of ex-post \EFX{} occurs because the eating algorithm distributes \emph{all} goods among the agents, and the total fractional amount of goods received by each agent remains equal. When this fractional allocation is decomposed into integral allocations, an envied agent may receive multiple goods, which violates the \EFX{} property for lexicographic preferences (\Cref{prop:EFX_characterization}).

Hence our algorithm in the next section runs the eating algorithm only for one unit of time. Additionally, it uses the idea of assigning the set of all remaining items, which we call the \emph{tail}, to a single agent as in the \UnifPerm{} algorithm. By combining these ideas and doing a careful case analysis of the envy relations, our algorithm achieves ex-ante $\frac{6}{7}$-\EF{} alongside ex-post $\EFX{}+\PO{}$.

\subsection{Warm-up: Ex-Ante 6/7-\EF{} and Ex-Post \EFX{}+\PO{}}
\label{subsec:6/7-EF}

In this section, we present an algorithm that outputs a randomized allocation which is ex-ante~$6/7$-\EF{} and ex-post $\EFX{}+\PO{}$. 
We will assume throughout that the number of goods strictly exceeds the number of agents, i.e., $m > n$. This assumption is without loss of generality for the following reason. Suppose $m < n$. Add $n-m$ dummy goods that are the least valued goods for all agents. By notational overloading, let $m$ denote the total number of goods, including the dummy goods. Note that $m = n$. Run the eating algorithm for one unit of time to obtain a fractional allocation $X$ (assuming all agents eat with the same speed of one good per unit time). Observe that $X$ is a doubly stochastic matrix. Let $\{(p_\ell,A^\ell)\}_{\ell \in [q]}$ denote the Birkhoff-von Neumann (B-vN) decomposition of~$X$ (\Cref{prop:BvN}). In each integral allocation $A^\ell$, each agent receives exactly one good, which may be a dummy good. By \Cref{prop:sdEF_EF1+PO_lexicographic}, the resulting randomized allocation is ex-ante \EF{} and ex-post $\EF{1}+\PO{}$. Furthermore, since each agent receives at most one original good, the ex-post guarantee is, in fact, $\EFX{}+\PO{}$. Thus, we assume $m > n$ for the remainder of this section.

\begin{algorithm}[t]
    \begin{spacing}{1.2}
    \DontPrintSemicolon
    \SetAlgoNlRelativeSize{-2} %
    \SetKwInput{KwData}{Input}
    \SetKwInput{KwResult}{Output}
    \KwData{An instance $\langle \, N, M, \mathcal{V} \,\rangle$ with lexicographic valuations}
    \KwResult{A randomized allocation $\Z$}
    \BlankLine
    \BlankLine
    \tcc{\textbf{Phase I: Simultaneous Eating and B-vN Decomposition}}
    \hrule height 1.2pt
    \BlankLine
    $X \gets$ fractional allocation obtained by running the eating algorithm for one unit of time\;
    $g_i \gets$ the last consumed good by agent $i$ during the eating algorithm\;
    $L \gets \bigcup_{i \in N} g_i$ \tcp*{the set of last consumed goods}
    $(p_\ell, A^\ell)_{\ell \in [q]} \gets \textnormal{B-vN Decomposition}(X)$\;
    \BlankLine
    \BlankLine
    \tcc{\textbf{Phase II: Allocate the tail}}
    \hrule height 1.2pt
    \BlankLine
    $\Z \gets $ initialize an empty randomized allocation\;
    \tcc{iterate over allocations in the B-vN support of $X$}
    \For{$\ell \in [q]$}{
        $Z \gets A^\ell$\;
        $T \gets$ set of unallocated goods in $Z$ \tcp*{the unallocated ``tail'' of goods}
        $N_{L} \gets \{ i \in N : Z_i \cap L \neq \emptyset\}$ \tcp*[centre]{set of agents whose bundle in $Z$ contains some last consumed good}
    \BlankLine
        \tcc{Assign the tail to an agent $i$ picked uniformly at random from $N_{L}$}
        \For{$i \in N_L$}{
            $Z \gets (A^\ell_1,\dots,A^\ell_{i-1},A^\ell_i \cup T,A^\ell_{i+1},\dots,A^\ell_n)$ \tcp*[centre]{assign the tail to agent $i$}
            $\Z \gets \Z \cup \{(p_\ell /|N_L|,Z)\}$ \tcp*[centre]{assign the weight $p_\ell / |N_L|$ to the complete allocation $Z$ in the randomized allocation $\Z$}
        }
    }
    \Return{the randomized allocation $\Z$}
    \end{spacing}
    \caption{Uniformly Tailed Simultaneous Eating}
    \label{alg:utse}
\end{algorithm}

\paragraph{Description of our algorithm.} Our algorithm for the $m > n$ case is presented as Algorithm~\ref{alg:utse} and consists of two phases. 

In Phase I, we run the eating algorithm for one unit of time to create an initial fractional allocation $X$ (assuming all agents eat with the same speed of one good per unit time). 
Each agent $i$ can consume several goods over the course of the eating algorithm. For each agent $i \in N$, we define $g_i$ as the \emph{last} (or most recent) good consumed by $i$ during the eating algorithm. We define the set $L$ as the union of all the $g_i$ goods i.e., $L = \bigcup_{i \in N} g_i$. We refer to the set $L$ as the set of {\em last consumed goods}. 
Then we compute the Birkhoff-von Neumann decomposition $\{(p_\ell, A^\ell)\}_{\ell \in [q]}$ of the partial fractional allocation~$X$~(\Cref{prop:BvN_general}). From the above assumption that $m > n$, it follows that each agent receives exactly one good in each partial allocation $A^\ell$.

In Phase II, our algorithm iterates over the partial allocations $A^1,\dots,A^q$ in the B-vN support of $X$, and turns them into complete allocations through a `tail assignment' step. Specifically, consider an arbitrary partial allocation $Z \in \{A^1,\dots,A^q\}$ in the B-vN support of $X$. Let $T$ denote the unallocated goods in $Z$. We refer to this set as the \emph{tail} to emphasize that, due to lexicographic preferences, the goods in $T$ are considered less valuable by any agent than the single good assigned in Phase I.

The algorithm considers the set of agents $N_{L} \coloneqq \{i \in N : Z_i \cap L \neq \emptyset\}$ whose bundles in $Z$ contain some last consumed good in $L$. We will show in \Cref{lem:nl-size} that $N_L$ is always non-empty. From the set $N_L$, the algorithm picks an agent $i$ uniformly at random and assigns to it the \emph{entire} tail of goods $T$. The resulting randomized allocation $\Z$ is returned as the output of the algorithm.

The intuition behind the tail allocation step is as follows: Recall that at the end of Phase I, each agent receives exactly one good in each partial allocation $A^\ell$. Fix an agent $i \in N$. Among all the goods consumed by agent $i$ during the eating algorithm, that is, the goods in $\{j \in M: X_{i,j} > 0\}$, the last consumed good $g_i$ must be agent $i$'s least-preferred good. Therefore, if agent $i$ receives a good from the set $L$ under $A^\ell$---in other words, $i \in N_L$---it might envy other agents who receive its more-preferred consumed goods in $M$. However, agent $i$ does not envy any other agent who receives a good from $L$. This is because the good $g_i$ is agent $i$'s favorite good in $L$. (This, in turn, is because if there is another good $g' \in L$ such that $g' \>_i g_i$, then agent $i$ would have finished eating $g'$ before switching to $g_i$. However, that would imply that $g' \notin L$ since $g'$ is fully consumed strictly before the end of the algorithm, and all other agents must be consuming other goods at the end of the algorithm.)

Recall from \Cref{prop:EFX_characterization} that under lexicographic preferences, an allocation is \EFX{} if and only if the bundle of each envied agent is a singleton. Therefore, to achieve ex-post \EFX{}, our algorithm assigns the set of unallocated goods $T$ to an unenvied agent in $N_L$ chosen uniformly at random. Note that although the output of our algorithm is a probability distribution, the algorithm itself is deterministic and runs in polynomial time (using \cite[Theorem~2]{AFS+24best}).

We will prove the following result in \Cref{sec:6-over-7-EF}.
\begin{restatable}{theorem}{UtseUpperBound}
Algorithm~\ref{alg:utse} returns a randomized allocation that is ex-ante $\frac{3k}{3k+1}$-\EF{}, where $k$ $(\ge 1)$ is the last consumed mass of the input instance.
\label{thm:utse-upper-bounnd}
\end{restatable}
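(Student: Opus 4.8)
\textbf{Proof plan for \Cref{thm:utse-upper-bounnd}.}

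The plan is to fix an arbitrary ordered pair of agents $(i,j)$ and lower-bound the ratio $\E[v_i(\Z_i)]/\E[v_i(\Z_j)]$ by $\frac{3k}{3k+1}$. The starting point is that Phase~I is the eating algorithm run for one unit of time, which produces an \SDEF{} fractional partial allocation $X$; in particular $X$ is ex-ante \EF{}, so the single-good-per-agent part already contributes no net envy. All the envy must therefore be accounted for by the tail assignment in Phase~II. I would decompose agent $i$'s expected value for $j$'s bundle into two contributions: the value $i$ assigns to the single ``main'' good that $j$ receives from the B-vN decomposition, and the value $i$ assigns to the tail $T$ whenever $j$ happens to be the randomly chosen recipient in $N_L$. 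The first contribution is controlled by the \SDEF{}/\EF{} property of $X$; the second is the genuinely new source of envy introduced by the tail step and is where the loss from $1$ to $\frac{3k}{3k+1}$ comes from.

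The key structural facts I would establish first are: (1) $|N_L| = k$ in every allocation $A^\ell$ (this is essentially the definition of $k$ as the last-consumed mass, since $k = \sum_i x_{is}$ equals the number of agents receiving a piece of the super-good, hence the number receiving a good of $L$), so the tail goes to $j$ with conditional probability exactly $1/k$ whenever $j \in N_L$; and (2) by lexicographic preferences every good in the tail $T$ is less preferred by $i$ than $i$'s own main good $g_i' := $ the good $i$ receives in $A^\ell$. Fact~(2) lets me bound $v_i(T)$ relative to $v_i(i\text{'s own bundle})$: since the tail consists of goods each dominated by $i$'s assigned good, and $i$ is unenvied whenever it is the tail recipient, the value $i$ can lose to $j$'s tail is bounded in terms of $i$'s own guaranteed value. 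I would combine these with the ex-ante \EF{} property of the single-good allocation $X$ to write, schematically,
\[
\E[v_i(\Z_j)] \;\le\; \E[v_i(X_j)] + \tfrac{1}{k}\,\E[v_i(T)\cdot \mathbbm{1}[j\in N_L]],
\qquad
\E[v_i(\Z_i)] \;\ge\; \E[v_i(X_i)] + \tfrac{1}{k}\,\E[v_i(T)\cdot \mathbbm{1}[i\in N_L]],
\]
and then use $\E[v_i(X_i)] \ge \E[v_i(X_j)]$ together with a comparison of the tail terms.

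The heart of the argument is a careful case analysis of the envy relation between $i$ and $j$, tracking whether $i$ and/or $j$ lies in $N_L$ and who ends up with the tail. Because $i$ receives its \emph{least} consumed good $g_i$ as its main good precisely when $i \in N_L$, and because $i$ never envies another member of $N_L$ (its main good is its favorite good in $L$, as argued in the algorithm's description), the dangerous case is when $j \in N_L$ gets the tail while $i$'s own main good is small. I would bound the ratio in each case by comparing the mass $i$ has eaten of goods it prefers to the tail against the $1/k$-fraction of tail value that can be charged to $j$. The $\frac{3k}{3k+1}$ bound should emerge from the worst case where $i$'s advantage over $j$ on the main goods is exactly cancelled and the tail tilts things by a $\frac{1}{3k}$ relative amount.

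\textbf{Main obstacle.} I expect the hard part to be handling the coupling introduced by the tail: the events $\{i \in N_L\}$, $\{j \in N_L\}$, and the identity of the tail recipient are all determined jointly by the B-vN allocation $A^\ell$ and the subsequent uniform choice, so the two tail terms above are not independent of the main-good terms. Getting a clean $\frac{3k}{3k+1}$ rather than a weaker constant will require exploiting the lexicographic structure tightly---specifically that $i$'s own main good dominates every tail good, so that whenever the tail hurts $i$ (i.e.\ $j$ gets it), $i$'s compensating main value is itself at least a comparable quantity. Making this charging argument uniform across all $k$ simultaneously, rather than re-deriving it per value of $k$, is the delicate step; I would isolate it as a lemma bounding the per-allocation contribution of the tail before taking the expectation over the B-vN weights $p_\ell$.
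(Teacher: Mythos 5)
Your high-level skeleton---fix a pair $(i,j)$, bound $\E[v_i(\Z_j)]-\E[v_i(\Z_i)]$ against $\E[v_i(\Z_i)]$, establish $|N_L|=k$ in every support allocation, exploit that every tail good is lexicographically dominated by $i$'s Phase-I good, and run a case analysis over the eating timelines---is the same as the paper's. However, the central step you propose, namely invoking the ex-ante \EF{} property of $X$ as a black box ($v_i(X_i)\ge v_i(X_j)$) and then ``comparing the tail terms,'' is too lossy to yield the constant $\frac{3k}{3k+1}$. After cancelling the main-good terms you are left with bounding $\frac{1}{k}\E\bigl[v_i(T)\,\mathbbm{1}[j\in N_L]\bigr]$ against $\E[v_i(\Z_i)]$. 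The dangerous contribution is the event that $i$'s own last-consumed good $g_i$ lands in the tail handed to $j$: this has probability up to $\frac{1}{k}\min\{y,1-x\}$ (where $x,y$ are the amounts $i,j$ eat of their last goods), and $v_i(g_i)$ can dominate $v_i(L\cup U\setminus\{g_i\})$ and be of the same order as $\E[v_i(\Z_i)]$ itself. This route therefore certifies only roughly $\frac{k}{k+1}$-\EF{} (e.g.\ $\frac12$ for $k=1$, where the theorem requires $\frac34$), and your alternative observation that $v_i(T)\le v_i(A^\ell_i)$ pointwise leads to the same $\frac{1}{k}$ envy ratio.

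The missing idea is that the main-good comparison must be kept open rather than collapsed to ``$\ge 0$'': the paper compares the two eating timelines only up to time $1-\max\{x,y\}$ (anytime envy-freeness) and explicitly retains the surplus $i$ accumulates in $[1-\max\{x,y\},1]$, namely $\max\{y-x,0\}\,v_i(L\cup U)+x\,v_i(g_i)$ against $j$'s $\max\{x-y,0\}\,v_i(g^*)+y\,v_i(g_j)$. With this bookkeeping the net coefficient of $v_i(g_i)$ in $\E[v_i(\Z_j)]-v_i(X_i)$ becomes $\frac{1}{k}\min\{y,1-x\}-\max\{x,y\}\le 0$, which licenses the substitution $v_i(g_i)\ge v_i(L\cup U\cup\{g^*\}\setminus\{g_i\})$ and collapses the difference to $\frac{1}{k}\min\{x,1-x\}\,v_i(L\cup U\setminus\{g_i\})$. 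Paired with the lower bound $\E[v_i(\Z_i)]\ge(2-x)\,v_i(L\cup U\setminus\{g_i\})$---which your plan also does not derive, and which is where $v_i(g_i)\ge v_i(L\cup U\setminus\{g_i\})$ enters---the ratio is $\max_{x}\frac{\min\{x,1-x\}}{k(2-x)}=\frac{1}{3k}$, attained at $x=\frac12$. Your ``main obstacle'' paragraph correctly senses that the charging must go against $i$'s main value, but the plan as written supplies no mechanism producing the $2-x$ denominator and the $\min\{x,1-x\}$ numerator, i.e.\ the ``3'' in $\frac{3k}{3k+1}$.
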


Furthermore, we will show that when $k=1$, Algorithm~\ref{alg:utse} is ex-ante \EF{}.
We also have an example (see \Cref{ex:utse-tight}) that shows our analysis for Algorithm~\ref{alg:utse} is tight.

\subsection{Dependent Rounding Algorithm: Ex-Ante 9/10-\EF{} and Ex-Post \EFX{}+\PO{}}
To improve the ex-ante \EF{} guarantee from $6/7$ to $9/10$, observe that we need to only improve our guarantee when $k = 2$. For all $k \ne 2$, Algorithm \ref{alg:utse} outputs a randomized allocation that is at least ex-ante $9/10$-\EF{}. For $k = 2$, we achieve this improvement using dependent rounding. Before we describe our new algorithm, we first describe the key result of dependent rounding, rephrased to fit our context. 

\begin{theorem}[Dependent Rounding \cite{GKSADependent}]\label{thm:dependent-rounding}
Let $X$ be a fractional allocation such that $X_{i, g} \in [0, 1]$ and each good $g$ is allocated an integer amount, $\sum_{i \in N} X_{i, g} \in \mathbb{Z}$. There is a polynomial time algorithm that generates an integral allocation $Z$ from $X$ such that
\begin{enumerate}
    \item[(a)] For each item $g$ and agent $i \in N$, $\Pr[g \in Z_i] = X_{i, g}$
    \item[(b)] For each item $g$, the number of agents who receive $g$ in $Z$ is exactly $\sum_{i \in N} X_{i, g}$ 
    \item[(c)] For each item $g$ and agents $i, j \in N$, $\Pr[g \in Z_i \text{ and } g \in Z_j] \le \Pr[g \in Z_i] \Pr[g \in Z_j]$
\end{enumerate}
\end{theorem}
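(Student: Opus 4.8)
The plan is to realize the statement as a specialization of the bipartite dependent-rounding scheme of~\cite{GKSADependent}, and then to recognize the three required guarantees as the marginal-preservation, degree-preservation, and negative-correlation properties of that scheme read off at the good vertices. First I would encode $X$ as a bipartite graph $G$ on vertex set $N \cup M$, with an edge $e=(i,g)$ of weight $x_e \coloneqq X_{i,g}$ whenever $X_{i,g} > 0$; call $e$ \emph{fractional} if $x_e \in (0,1)$. The algorithm touches only fractional edges: while one exists, find a cycle in the fractional subgraph, or (if that subgraph is a forest) a maximal path, and two-color its edges into matchings $M_1$ and $M_2$. With $\alpha = \min\{\min_{e\in M_1}(1-x_e),\,\min_{e\in M_2}x_e\}$ and $\beta = \min\{\min_{e\in M_1}x_e,\,\min_{e\in M_2}(1-x_e)\}$, add $\alpha$ to every edge of $M_1$ and subtract $\alpha$ from every edge of $M_2$ with probability $\tfrac{\beta}{\alpha+\beta}$, and perform the reverse update with $\beta$ and the complementary probability. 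By construction at least one edge becomes $0$ or $1$ each iteration, so the process halts in at most $|E| \le nm$ rounds, giving the claimed polynomial running time.

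For property~(a) I would check that each $x_e$ evolves as a martingale: for $e \in M_1$ the expected one-step change is $\alpha\cdot\tfrac{\beta}{\alpha+\beta} - \beta\cdot\tfrac{\alpha}{\alpha+\beta}=0$ (symmetrically for $M_2$), and an edge off the chosen structure is untouched. Since the terminal weight lies in $\{0,1\}$, telescoping expectations gives $\Pr[g\in Z_i]=\E[x_{(i,g)}^{\mathrm{final}}]=X_{i,g}$. For property~(b), the point is that on a cycle every vertex meets exactly one edge of $M_1$ and one of $M_2$, so its fractional degree is preserved \emph{exactly}; the same holds for interior vertices of a path, and only the two path endpoints can change degree. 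I would then maintain the invariant that every good with integer fractional degree keeps an integer fractional degree: such a good can never be a leaf of the fractional forest, since its integral incident edges contribute an integer while a single remaining edge of value in $(0,1)$ would make the total non-integer. Hence it is never a path endpoint, its degree is never disturbed, and at termination exactly $\sum_{i\in N}X_{i,g}$ agents receive $g$.

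The main obstacle is property~(c). Here I would fix $e_1=(i,g)$ and $e_2=(j,g)$ and show the product $x_{e_1}x_{e_2}$ is a \emph{supermartingale}. If at most one of the two edges lies on the chosen structure, the untouched weight factors out and the martingale property of the other edge leaves $\E[x_{e_1}x_{e_2}]$ unchanged. The delicate case is when both lie on the structure: sharing $g$, they are precisely the two edges of $g$ on it and therefore fall in opposite matchings, so one rises by exactly the amount the other falls. A short expansion then shows the linear cross terms cancel and $\E[x_{e_1}^{\mathrm{new}}x_{e_2}^{\mathrm{new}}\mid\cdot]=x_{e_1}x_{e_2}-\tfrac{\beta}{\alpha+\beta}\alpha^2-\tfrac{\alpha}{\alpha+\beta}\beta^2\le x_{e_1}x_{e_2}$. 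Combining this with~(a) yields $\Pr[g\in Z_i\wedge g\in Z_j]=\E[x_{e_1}^{\mathrm{final}}x_{e_2}^{\mathrm{final}}]\le X_{i,g}X_{j,g}=\Pr[g\in Z_i]\Pr[g\in Z_j]$, which is exactly~(c). The only thing left to verify is that two edges sharing a vertex always receive opposite colors in the two-coloring, which is immediate for a simple cycle or path.
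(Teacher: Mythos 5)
The paper does not prove this statement; it imports it verbatim from the cited work of Gandhi et al.\ \cite{GKSADependent}, so there is no in-paper proof to compare against. Your reconstruction is correct and is essentially the standard argument from that source: the cycle/path pipage step with the $\bigl(\tfrac{\beta}{\alpha+\beta},\tfrac{\alpha}{\alpha+\beta}\bigr)$ randomization, the martingale property for (a), degree preservation at non-endpoint vertices plus the observation that a good with integer column sum can never be a leaf of the fractional forest for (b), and the supermartingale computation $\E[x_{e_1}^{\mathrm{new}}x_{e_2}^{\mathrm{new}}\mid\cdot]=x_{e_1}x_{e_2}-\alpha\beta$ for the two edges at a shared good for (c). The only points worth stating explicitly are that all cycles are even because the graph is bipartite (so the alternating two-coloring into matchings is well defined) and that the one-step (super)martingale inequalities hold conditionally on the history, so the telescoping is an application of the tower property even though the choice of cycle or path is adaptive; both are immediate.
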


\begin{algorithm}[t]%
    \begin{spacing}{1.2}
    \DontPrintSemicolon
    \SetAlgoNlRelativeSize{-2} %
    \SetKwInput{KwData}{Input}
    \SetKwInput{KwResult}{Output}
    \KwData{An instance $\langle \, N, M, \mathcal{V} \,\rangle$ with lexicographic valuations and last consumed }
    \KwResult{(Randomized) Integral allocation $Z$}
    \vspace{0.2cm} 
    
    \tcc{\textbf{Phase 1: Simultaneous Eating and Dependent Rounding}}
    \hrule height 1.2pt
    \vspace{0.2cm}
    
    $X \gets$ fractional allocation obtained by running the eating algorithm for one time unit\\
    $g_i \gets$ last consumed good by agent $i$ during Step 1 \\
    $L \gets \bigcup_{i\in N} g_i$\\
    Create a super-good $s$ \\
    $X_{i, s} = X_{i, g_i}$ for all $i \in N$ \\
    Remove all goods in $L \cup U$ \tcp*{All goods in $L \cup U$ are replaced with $s$}
    $Z \gets$ integral allocation sampled from $X$ using `Dependent Rounding' 
    \vspace{0.2cm} 
    
    \tcc{\textbf{Phase 2: Allocating the remaining items}}
    \hrule height 1.2pt
    \vspace{0.2cm}
    $i, j \gets$ agents who receive `$s$' in $Z$\tcp*{Since $k = 2$, \textit{exactly} two agents receive the super-good in $A$}
    $(a, b) \gets$ uniform random permutation of $(i,j)$ \\
    $Z_a \gets \{g_a\}$\\
    $Z_{b} \gets L \cup U \setminus \{g_a\}$\\

    \Return{$Z$}
    \end{spacing}
    \caption{Dependent Rounding Algorithm for $k = 2$}
    \label{alg:dependent-rounding}
\end{algorithm}
\vspace{0.1in}

Note that both (a) and (b) are satisfied by the B-vN decomposition as well. The property that separates dependent rounding from B-vN decomposition is (c), which states that the probability that two agents receive the same item is negatively correlated. Indeed, point (c) is implied by (b) when the total amount a good is allocated is $1$. However, as we describe below, we exploit this property by creating an artificial item (called a super-good) whose total allocation amount is $2$. 

Our algorithm works as follows: we run the eating algorithm for one unit of time to create a fractional allocation $X$, similar to Algorithm \ref{alg:utse}. In this fractional allocation, we replace the set of last consumed goods $L$ with a single {\em super-good} $s$ such that each agent consumes the super-good the same amount as its respective last good. The total amount this super-good is allocated is equal to the \textit{last consumed mass} $k = 2$. We round this new fractional allocation (with the super-good) into an integral allocation $Z$ using dependent rounding (\Cref{thm:dependent-rounding}).

Since $k = 2$, there are two agents who receive the super-good in $Z$ (according to Property (b)). Moreover, all items other than $L \cup U$ are allocated in $Z$. To allocate the items in $L \cup U$, we uniformly at random pick one of the two agents who receive the super-good; say we pick agent $i$ and not agent $j$. We give $i$ its favorite item $g_i$ in $L \cup U$, and give $j$ all the remaining items $L \cup U \setminus \{g_i\}$.

We remark that although this algorithm generates a randomized allocation, the support of this randomized allocation may be exponential (unlike Algorithm \ref{alg:utse}). However, since sampling an integral allocation can be done in polynomial time, the pseudocode of Algorithm \ref{alg:dependent-rounding} outputs a randomized integral allocation instead of the complete distribution over integral allocations.

Following the same arguments of \Cref{prop:ex-post utse}, Algorithm \ref{alg:dependent-rounding} always outputs an EFX+PO allocation. We therefore only focus on the ex-ante \EF{} guarantee.  
To provide some intuition on why dependent rounding may lead to a better result, consider the proof of \Cref{thm:utse-upper-bounnd}. In the proof, we lower bound $\E[v_i(Z_i)]$ using $v_i(X_i)$; that is, we assume in the worst-case that $i$ does not receive any utility from the goods it receives in Phase 2. As \Cref{ex:utse-tight} shows, we cannot strengthen this assumption using the B-vN decomposition alone since it could be that some item in $L$ (say $g$) that provides high utility to $i$ is always allocated when $i$ receives $g_i$. Therefore $g$ is never part of the tail that $i$ receives. 

Using dependent rounding with the super-good $s$, the probability that agent $i$ is allocated the super-good is negatively correlated with the probability that any agent whose last consumed good is $g$ is allocated the super-good. Therefore, there is a higher probability that agent $i$ receives $g$ in Phase 2. Taking this into account leads to a stronger lower bound of $\E[v_i(Z_i)]$. However, this is not straightforward since there is now a chance that agent $i$ could be allocated the super-good together with another agent whose last consumed good is $g_i$. In such a case, agent $i$ receives $g_i$ with only half probability. So there could be some cases where $\E[v_i(Z_i)]$ is {\em less than} $v_i(X_i)$. However, again, the probability of this bad event happening can be upper bounded using the negative correlation property of dependent rounding. Following this intuition, using some careful calculations, we prove that Algorithm \ref{alg:dependent-rounding} outputs a randomized allocation that is ex-ante $\nicefrac{9}{10}$-\EF{}. 

We will prove the following result in \Cref{sec:main-algo}.

\begin{restatable}{theorem}{KTwo}
When the last consumed mass $k = 2$, Algorithm \ref{alg:dependent-rounding} outputs an allocation that is ex-ante $\frac{9}{10}$-\EF{} and ex-post \EFX{} and \PO{}.
\label{thm:k-two}
\end{restatable}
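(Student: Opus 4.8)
The plan is to establish the ex-post and ex-ante guarantees separately, with the bulk of the work going into the ex-ante $\frac{9}{10}$-\EF{} bound. For the ex-post guarantee, I would invoke the same reasoning as in \Cref{prop:ex-post utse} (already cited in the paper): in the allocation $Z$ produced by Algorithm~\ref{alg:dependent-rounding}, exactly one agent $a$ among the two super-good recipients ends up with the singleton $\{g_a\}$, while every other agent receives exactly one good from Phase~1, and the single agent $b$ receives the tail $L \cup U \setminus \{g_a\}$. By the characterization in \Cref{prop:EFX_characterization}, \EFX{} holds iff every envied agent gets a singleton, so I would argue that $b$ (the only agent with a multi-good bundle) is unenvied, exactly as in the warm-up analysis, and that the induced picking-sequence structure gives \PO{} via \Cref{prop:EFX+PO_characterization}.

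The heart of the proof is the ex-ante bound. Fix an ordered pair of agents $(i,j)$; the goal is $\E[v_i(Z_i)] \ge \tfrac{9}{10}\,\E[v_i(Z_j)]$. First I would set up notation: let $U$ denote the untouched goods, and recall that after Phase~1's rounding, property~(a) of \Cref{thm:dependent-rounding} gives $\Pr[g \in Z_i] = X_{i,g}$ for every ordinary good $g$, and $\Pr[s \in Z_i] = X_{i,s} = X_{i,g_i}$ for the super-good. The key observation, already flagged in the overview, is that $v_i(X_i)$ is the "base" utility agent $i$ is essentially guaranteed from ordinary goods plus its consumption of its own last good, and the improvement over the B-vN analysis comes from the extra chance that $i$ grabs a high-value good $g \in L$ during Phase~2. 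Concretely, I would decompose $\E[v_i(Z_i)]$ into (i) the contribution from the fully-consumed ordinary goods, which matches the fractional value exactly by property~(a), and (ii) the contribution from Phase~2, where $i$ receives something from $L\cup U$ only when $i$ is one of the two super-good recipients. Here the subtle point is the conditional split: when $i$ and another agent $j'$ (whose last good is $g_{j'}$) both receive $s$, agent $i$ gets $g_i$ only with probability $\tfrac12$, and otherwise gets the remaining tail $L\cup U\setminus\{g_{j'}\}$; I would carefully track both the "$i$ wins the coin flip" and "$i$ loses" cases, since in the losing case $i$ may still pick up valuable goods of $L$ that were previously always co-allocated away from it.

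The analysis then splits on how $\E[v_i(Z_j)]$ accumulates. I would bound the numerator below and the denominator above. For the denominator, agent $j$'s expected value for its own bundle is controlled by $v_i(X_j)$ plus whatever $j$ gains in Phase~2; because only super-good recipients receive tail goods and the tail goods are lexicographically dominated by the Phase-1 good, I would argue that the Phase-2 inflation of $v_i(Z_j)$ is limited. The crucial quantitative lever is property~(c), the negative correlation: the event that $i$ receives $s$ and the event that the agent who "blocks" $i$'s favorite $L$-good receives $s$ satisfy $\Pr[\text{both}] \le \Pr[\cdot]\Pr[\cdot]$, which lower-bounds the probability that $i$ actually captures that high-value good in Phase~2 by a constant factor strictly better than what B-vN (which permits perfect positive correlation) allows. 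Plugging these bounds into the ratio and optimizing the worst-case configuration over the possible values of $i$'s goods is where the constant $\tfrac{9}{10}$ should emerge.

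I expect the main obstacle to be handling the "bad" event identified in the overview, namely that $i$ may be co-allocated the super-good with another agent whose last good is exactly $g_i$, in which case $i$ keeps $g_i$ only with probability $\tfrac12$ and could dip below its fractional value $v_i(X_i)$. The delicate part is showing that the \emph{gain} from the negative-correlation effect (a better shot at capturing a high-value $L$-good that B-vN would always deny) strictly outweighs this $\tfrac12$-probability \emph{loss}, uniformly over all worst-case value configurations. I would isolate this as a self-contained inequality—bounding the loss term by $\Pr[s\in Z_i]$ times a constant and the gain term from below using property~(c)—and then verify that the resulting ratio never falls below $\tfrac{9}{10}$, treating $k=2$ (exactly two super-good recipients) as fixed throughout. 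Since the paper notes that $k=2$ is precisely the case Algorithm~\ref{alg:utse} could only achieve $\tfrac{6}{7}$, I would also sanity-check the bound against \Cref{ex:utse-tight} to confirm that dependent rounding genuinely breaks the $\tfrac{6}{7}$ barrier on that instance.
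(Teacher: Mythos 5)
Your overall architecture matches the paper's: ex-post \EFX{}+\PO{} via the arguments of \Cref{prop:ex-post utse}, and for the ex-ante bound, an upper bound on $\E[v_i(Z_j)]$ and a lower bound on $\E[v_i(Z_i)]$ in which the exact Phase-2 probabilities from dependent rounding replace the weak bounds of \Cref{obs:fractional}, with the negative-correlation property controlling both the ``gain'' (capturing a high-value good of $L$) and the ``loss'' (the coin-flip when $i$ is co-allocated the super-good with another agent in $A_{g_i}$). You also correctly identify the bad event and the fact that $\E[v_i(Z_i)]$ can dip below $v_i(X_i)$. All of this is consistent with the paper's proof, which formalizes these events as $E_{ig}$ and bounds them via \Cref{lem:depround-basic,lem:depround-probabilities,lem:depround-lower-bound}.

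However, there is a genuine gap at the decisive step. After assembling the bounds, one is left with controlling a ratio of the form $\sum_{g} \alpha_g v_i(g) \big/ \sum_{g} \beta_g v_i(g)$ over $g \in L \cup U \cup \{g^*\}$, where the numerator coefficients $\alpha_g$ have \emph{mixed signs} (strongly negative on $g_i$, positive on the other goods) and the values $v_i(g)$ are constrained only by the lexicographic ordering, e.g.\ $v_i(g_i) \ge v_i(L \cup U \setminus \{g_i\})$. Your plan to ``optimize the worst-case configuration over the possible values of $i$'s goods'' is not a procedure: bounding each coefficient separately and dropping the negative ones loses exactly the compensation that makes the bound work, since the entire argument hinges on transferring the negative mass on $v_i(g_i)$ against the positive mass on the lexicographically dominated goods. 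The paper resolves this with a dedicated device, \Cref{lem:upperbound-trick}, which shows that under lexicographic valuations the ratio is bounded by $\max_g (\alpha_{g^\star} + \alpha_g)/(\beta_{g^\star} + \beta_g)$ where $g^\star$ is the top good; this reduces the problem to the three-case coefficient analysis of \Cref{lem:coeff-bounds} (cases $g = g^*$, $g \in L\cup U\setminus\{g_i\}$, $g = g_i$, with indicator terms for whether $g_j = g_i$), after which the numerical bound $0.094 < 1/9$ follows by single-variable calculus. Without this lemma or an equivalent transfer mechanism, your proposal does not close; identifying and proving that reduction is the actual content of the theorem beyond the setup you describe.
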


\subsection{Correctness of Algorithm~\ref{alg:utse}}
\label{sec:6-over-7-EF}
\begin{restatable}[]{proposition}{ex-post guarantees} Algorithm \ref{alg:utse} returns a randomized allocation that is ex-post \EFX{}\textnormal{+}\PO{}.
\label{prop:ex-post utse}
\end{restatable}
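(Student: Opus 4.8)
The goal is to show that Algorithm~\ref{alg:utse} returns a randomized allocation that is ex-post \EFX{}+\PO{}. Since the ex-post property requires \emph{every} integral allocation in the support of $\Z$ to satisfy \EFX{}+\PO{}, the plan is to fix an arbitrary allocation $Z$ in the support and verify the two properties separately, invoking the structural characterizations available for lexicographic preferences.

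For \PO{}, the plan is to appeal to the fact that under lexicographic preferences an allocation is Pareto optimal if and only if it is induced by some picking sequence~\cite{ABL+19efficient,HSV+21fair}. Each allocation $Z$ in the support arises from a partial allocation $A^\ell$ (where every agent receives exactly one good, using $m > n$) followed by assigning the entire tail $T$ to a single agent $i \in N_L$. I would argue that this process corresponds to a valid picking sequence: the first $n$ picks realize the partial allocation $A^\ell$ (each agent taking its most preferred good among those it consumed fractionally in Phase~I), and the remaining $m-n$ picks are all assigned to the tail-recipient $i$, who greedily takes its favorite remaining good each time. Since every allocation induced by a picking sequence is \PO{}, this establishes the \PO{} guarantee.

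For \EFX{}, the key tool is \Cref{prop:EFX_characterization}: under lexicographic preferences, an allocation is \EFX{} if and only if every envied agent receives exactly one good. So the plan is to show that in $Z$, any agent holding more than one good is unenvied. The only agent with a multi-good bundle is the tail-recipient $i \in N_L$, so it suffices to prove that no agent envies $i$. The crucial observation (already sketched in the paper's discussion of the tail allocation) is that $i$'s single good from Phase~I is $g_i$, its \emph{last consumed} good, and every good in the tail $T$ is valued by every agent below its own Phase-I good; more precisely, $g_i$ is agent $i$'s least-preferred among all goods $i$ consumed, and because $i \in N_L$ received a good in $L$, the tail goods are all ranked below the goods handed out in Phase~I. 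I would argue that for any other agent $j$, the single good $j$ received in Phase~I is preferred by $j$ to every good in $i$'s bundle $A^\ell_i \cup T$: indeed $g_i \in L$ is a last-consumed good and the tail consists of untouched/low-ranked goods, so under the lexicographic order $j$ does not envy $i$. This is where the care lies—one must confirm that the agent $i$ is selected precisely from $N_L$ (the unenvied agents), which relies on the lemma (\Cref{lem:nl-size}) that $N_L$ is nonempty together with the argument that agents in $N_L$ are exactly the unenvied ones.

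The main obstacle I anticipate is the \EFX{} verification, specifically pinning down that the tail-recipient is genuinely unenvied. The subtlety is the lexicographic comparison: I must show that for any envious-candidate $j$, the good $j$ holds lexicographically dominates the \emph{entire} augmented bundle $A^\ell_i \cup T$, not merely individual goods. This requires using the definition of lexicographic preference on bundles together with the fact that $g_i$ is $i$'s favorite good in $L$ and that all tail goods rank strictly below every good allocated in Phase~I (in every agent's order, because the eating algorithm fully consumes the Phase-I goods before touching the tail). By contrast, the \PO{} argument is routine once the picking-sequence correspondence is set up. I would therefore devote most of the proof to carefully establishing the no-envy-of-$i$ claim and then conclude \EFX{} via \Cref{prop:EFX_characterization}.
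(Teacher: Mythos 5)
Your overall strategy matches the paper's: both proofs rest on (i) the Phase-I partial allocation being realizable by a picking sequence and (ii) the tail recipient being unenvied; the paper then concludes \EFX{}+\PO{} in one stroke via the $\sigma$-unenvied-sequence characterization (\Cref{prop:EFX+PO_characterization}), whereas you split the conclusion into \PO{} via picking sequences and \EFX{} via \Cref{prop:EFX_characterization}. That split is fine, but two of your supporting claims are wrong as stated and would need to be replaced.

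First, your justification that the first $n$ picks realize $A^\ell$ by ``each agent taking its most preferred good among those it consumed fractionally in Phase~I'' fails: an agent $i \in N_L$ is assigned $g_i$, which is by definition its \emph{least}-preferred consumed good, so a greedy pick over consumed goods does not reproduce $A^\ell$. The correct argument (the paper's \Cref{rem:eating_1_unit_time}, via \Cref{prop:PS_Picking_Sequence}) orders the picks by the time at which each agent \emph{started} consuming its assigned good; at that moment every later-assigned good is still at least partially available, so the assigned good is the picker's favorite among what remains. Second, your claim that every tail good ranks below every Phase-I good ``in every agent's order'' is false in general: an untouched good $g \in U$ can be ranked by agent $j$ above some good $h$ that a \emph{different} agent received in Phase~I (agent $j$ only ever avoided $g$ in favor of goods it was itself eating). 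What is true, and is all you need, is the agent-relative statement: agent $j$'s own Phase-I good satisfies $h_j \succeq_j g_j \succ_j g$ for every $g \in (L \cup U)\setminus\{g_j\}$ (since $g_j$ is $j$'s favorite good in $L\cup U$, all of which remains available until the end), and the tail recipient's final bundle is contained in $L \cup U$ (in $(L\cup U)\setminus\{g_j\}$ when $j \in N_L$, $j \neq i$). Lexicographic domination of the whole bundle then gives that $i$ is unenvied, and \Cref{prop:EFX_characterization} yields \EFX{}. With these two repairs your proof goes through and is essentially the paper's argument.
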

\begin{proof}

We prove this by first showing that all integral allocations in the Birkhoff-von Neumann decomposition of $X$ can be induced by picking sequences.

\begin{lemma} \label{sequencible} Any integral partial allocation $A$ in Line 4 of Algorithm \ref{alg:utse} can be computed via a picking sequence. 
\end{lemma}
\begin{proof}
    This directly follows from \Cref{rem:eating_1_unit_time} in \Cref{subsec:Proof_sdEF_EF1+PO_lexicographic}. 
\end{proof}

 All the remaining $m-n$ unallocated goods are assigned to only an agent who has received a \textit{last consumed good} in $A$. Observe that such an agent is unenvied in $A$ and due to the lexicographic property, it remains unenvied in the final allocation at the end of Phase 2. Moreover, since all remaining goods are allocated to a single agent, the final allocation can be computed by a $\sigma$-\textit{unenvied sequence}. By \Cref{prop:EFX+PO_characterization}, it follows that the final allocation is \EFX{}+\PO{}. 
\end{proof}

\begin{restatable}[\textbf{Ex-ante $\frac{6}{7}$-\EF{} and ex-post \EFX{}+\PO{}}]{theorem}{warmup}
There exists a polynomial-time algorithm that, given as input any instance with lexicographic valuations, returns a randomized
allocation that is ex-ante $\frac{6}{7}$-\EF{} and ex-post \EFX{} and \PO{}.
\label{thm:warmup}
\end{restatable}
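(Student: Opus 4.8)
The plan is to combine the two results that are already set up in the excerpt. First, the ex-post guarantee is immediate from Proposition~\ref{prop:ex-post utse}, which shows that Algorithm~\ref{alg:utse} always returns a randomized allocation whose support consists of EFX+PO allocations. So the entire content of the theorem is in the ex-ante bound, which is governed by Theorem~\ref{thm:utse-upper-bounnd}: the randomized allocation is ex-ante $\frac{3k}{3k+1}$-EF, where $k \ge 1$ is the last consumed mass. The key observation is that the function $k \mapsto \frac{3k}{3k+1}$ is increasing in $k$, so its minimum over all integers $k \ge 1$ is attained at $k=1$, giving $\frac{3}{3+1} = \frac{3}{4}$. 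This is worse than $\frac{6}{7}$, so a naive application does not immediately close the gap.

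The resolution is the special treatment of the smallest case. The excerpt notes that when $k=1$ we can show a \emph{full} ex-ante EF guarantee (via Theorem~\ref{thm:low-k}), which dominates $\frac{6}{7}$. Hence I would split into two cases: if $k=1$, invoke the full EF guarantee; if $k \ge 2$, invoke Theorem~\ref{thm:utse-upper-bounnd} and use monotonicity to get $\frac{3k}{3k+1} \ge \frac{3\cdot 2}{3\cdot 2+1} = \frac{6}{7}$. Taking the worse of the two cases yields ex-ante $\frac{6}{7}$-EF across all instances. Polynomial running time follows because the eating algorithm run for one unit of time, the Birkhoff--von Neumann decomposition (Proposition~\ref{prop:BvN_general}), and the tail-assignment loop are all polynomial-time, with the total number of produced allocations bounded by $q \cdot n = \mathcal{O}((m+n)^2 \cdot n)$.

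Concretely, the main steps in order are: (i) cite Proposition~\ref{prop:ex-post utse} to establish that every allocation in the support of $\Z$ is EFX+PO; (ii) state the ex-ante bound $\frac{3k}{3k+1}$ from Theorem~\ref{thm:utse-upper-bounnd}; (iii) verify monotonicity of $\frac{3k}{3k+1}$ and handle $k \ge 2$ to obtain $\frac{6}{7}$; (iv) handle $k=1$ separately using the stronger EF guarantee; and (v) note polynomial-time computability. The genuinely hard part is \emph{not} in this theorem's proof — it lives inside Theorem~\ref{thm:utse-upper-bounnd}, which requires the careful case analysis of envy relations to establish the $\frac{3k}{3k+1}$ bound (bounding $\E[v_i(Z_i)]$ from below by $v_i(X_i)$ while controlling the expected value $i$ places on another agent's bundle, accounting for the randomized tail assignment among agents in $N_L$). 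Since I may assume Theorem~\ref{thm:utse-upper-bounnd} as stated, this theorem itself reduces to the elementary case split and the observation that $k=2$ is the binding constraint for the $\frac{6}{7}$ threshold.
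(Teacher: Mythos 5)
Your proposal is correct and matches the paper's proof exactly: the paper likewise obtains \Cref{thm:warmup} by combining \Cref{prop:ex-post utse} for the ex-post \EFX{}+\PO{} guarantee with \Cref{thm:utse-upper-bounnd} (giving $\frac{3k}{3k+1}\ge\frac{6}{7}$ for $k\ge 2$) and \Cref{thm:low-k} (full ex-ante \EF{} when $k=1$), using that $k$ is a positive integer by \Cref{lem:nl-size}. You also correctly identify that the substantive work lives in \Cref{thm:utse-upper-bounnd} rather than in this case split.
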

Towards proving the ex-ante guarantee in \Cref{thm:warmup}, we define the {\em last consumed mass}, denoted by $k$, as the total consumed amount of the goods in $L$, i.e., 
$$k \coloneqq \sum_{i \in N} \sum_{g \in L} X_{i, g}.$$

As we show in \Cref{lem:nl-size}, $k$ is guaranteed to be a positive integer. We show that our algorithm computes a randomized allocation that is ex-ante $\frac{3k}{3k+1}$-\EF{} when $k > 1$ and ex-ante \EF{} when $k = 1$. The basic approach for this proof is to fix two agents $i$ and $j$, and bound the values $\E[v_i(\Z_j)] - \E[v_i(\Z_i)]$ and $\E[v_i(\Z_i)]$. By showing an upper bound on the ratio of these two values, i.e., $\frac{\E[v_i(\Z_j)] - \E[v_i(\Z_i)]}{\E[v_i(\Z_i)]} \leq \frac{1}{3k}$, we obtain the desired approximation of ex-ante envy-freeness.

We require two properties of the eating algorithm. For any $z \in [0, 1]$, let $X(z)$ denote the fractional allocation that results from running the eating algorithm for $z$ units of time. 

\begin{itemize}
    \item Property (a): Each agent is allocated {\em exactly one} good in $L$ under the fractional allocation $X$. If any good is in $L$, it remains available to be eaten till the very end of the eating algorithm. Therefore, if an agent starts consuming some good in $L$, it continues to do so till the end of the eating process.
    \item Property (b): The eating algorithm satisfies \emph{anytime envy-freeness}. Formally, for any $i, j \in N$ and $z \in [0, 1]$, $v_i(X_i(z)) \ge v_i(X_j(z))$, where $X_i \coloneqq (X_{i,1},\dots,X_{i,m})$ and $X_j \coloneqq (X_{j,1},\dots,X_{j,m})$ denote the fractional bundles of agents $i$ and $j$ under $X$, respectively.
\end{itemize}

We start by proving an important result about the B-vN decomposition. For this, we define the set $U$ as the set of \emph{uneaten} goods in the fractional allocation $X$; that is, $U \coloneqq \{g \in M: X_{i, g} = 0 \text{ for all } i \in N\}$.

\begin{lemma}
Fix some partial allocation $Z \in \{A^1,\dots,A^q\}$ in the support of the B-vN decomposition of $X$. The following two properties hold:
\begin{enumerate}
    \item[(i)] The set of unallocated goods in $Z$ is a subset of $L \cup U$, and 
    \item[(ii)] $|N_{L}| = k \ge 1$ where $N_L = \{i \in N: Z_i \cap L \ne \emptyset\}$, and
\end{enumerate}
Specifically, $k = |N_L| \ge 1$ implies that $k$ is a positive integer.
\label{lem:nl-size}
\end{lemma}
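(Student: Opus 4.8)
The plan is to analyze $X$ by classifying each good according to its total consumed mass, i.e.\ its column sum $c(g) \coloneqq \sum_{i \in N} X_{i,g}$, and then transfer this structure to each support allocation $A^\ell$ via the Birkhoff--von Neumann decomposition (\Cref{prop:BvN_general}). The central structural claim is that the goods split into three types: the uneaten goods $U$ (with $c(g) = 0$); the last-consumed goods $L$ (those being eaten at time $1$, so $c(g) \in (0,1]$); and a third set $C$ of goods fully consumed strictly before time $1$ (with $c(g) = 1$ and $g \notin L$). The key observation justifying this trichotomy is that any good with $0 < c(g) < 1$ must lie in $L$: if $g$ is consumed by some agent but not fully consumed, then since the eating algorithm never makes an exhausted good available again, no agent eating $g$ ever switches away from it (a switch would require a more-preferred good to reappear), so whoever begins eating $g$ continues until time $1$; hence $g$ is being eaten at time $1$ and $g \in L$. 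This is precisely the monotonicity underlying Property~(a).

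For part~(i), I would show that every good $g \notin L \cup U$ is allocated in $Z$. By the trichotomy such a good lies in $C$ and has $c(g) = 1$. Since $X = \sum_\ell p_\ell A^\ell$ with $\sum_\ell p_\ell = 1$ and each column sum of $A^\ell$ lying in $\{0,1\}$, the identity $\sum_\ell p_\ell \cdot (\text{column sum of } A^\ell \text{ at } g) = c(g) = 1$ forces the column sum of $A^\ell$ at $g$ to equal $1$ for every $\ell$ in the support; that is, $g$ is allocated in every support allocation, in particular in $Z$. Contrapositively, every unallocated good of $Z$ lies in $L \cup U$, which is part~(i). The same averaging argument applied to $U$-goods (where $c(g)=0$) shows they are never allocated.

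For part~(ii) and the integrality of $k$, I would first rewrite $k$ using Property~(a): since each agent $i$ consumes exactly one good of $L$, namely $g_i$, we have $\sum_{g \in L} X_{i,g} = X_{i,g_i}$ and hence $k = \sum_{i \in N} X_{i,g_i}$. Counting total consumed mass in two ways gives $n = \sum_{g} c(g) = |C| + k$, because each agent eats unit mass over one unit of time, each $C$-good contributes $1$, the $L$-goods contribute $k$ in total, and the $U$-goods contribute $0$. This already shows $k = n - |C|$ is an integer, and $k \ge 1$ follows since each agent eats $g_i$ over a positive final interval, so $X_{i,g_i} > 0$ and thus $k > 0$. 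Finally, in $Z = A^\ell$ each of the $n$ agents receives exactly one good (as $m > n$), all $|C| = n - k$ goods of $C$ are allocated (by the part~(i) argument), and no $U$-good is allocated; hence exactly $n - |C| = k$ of the allocated goods lie in $L$. Since distinct agents receive distinct goods, $|N_L|$ equals this count, giving $|N_L| = k$.

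The main obstacle is establishing the trichotomy cleanly, and specifically the claim that a partially- but not fully-consumed good must be a last-consumed good. This is where the dynamics of the eating algorithm (goods are only ever depleted, never replenished, so an agent abandons its current good only upon that good's exhaustion) must be invoked carefully. Once this monotonicity is in hand, both parts reduce to the averaging argument over the B-vN support together with the total-mass bookkeeping $n = |C| + k$.
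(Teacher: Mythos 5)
Your proposal is correct and follows essentially the same route as the paper's proof: the same trichotomy of goods (your $C$ is the paper's $M'$ of positively-consumed goods outside $L$, shown to be fully consumed because agents only abandon exhausted goods), the same observation that fully consumed goods appear in every support allocation of the B-vN decomposition (which the paper states directly and you justify via the averaging identity), and the same mass-counting argument $n = |M'| + k$ to get integrality of $k$ and $|N_L| = k$. The only differences are minor elaborations, e.g.\ your explicit $X_{i,g_i} > 0$ argument where the paper simply calls $k$ trivially positive.
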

\begin{proof}
Consider any good $g$ that is positively consumed in $X$, i.e., $X_{i, g} > 0$ for some $i \in N$, but $g$ is not included in the set $L$. We claim that $\sum_{h \in N} X_{h, g} = 1$ for such a good $g$; in other words, the good $g$ must be fully consumed. Indeed, if $g \notin L$, then agent $i$ must have started consuming the good $g$ but switched to another good before the eating algorithm terminates. Since agents only switch to another good in the eating algorithm if the previous good is fully consumed, we must have that $\sum_{h \in N} X_{h, g} = 1$. Let $M'$ be the set of positively consumed goods in $X$ that are not in $L$; note that $M' \cup L \cup U = M$.

Since each good in $M'$ is fully consumed, it is allocated in every allocation in the support of the B-vN decomposition of $X$. This proves the first property. 

The eating algorithm runs for one unit of time and agents eat at the speed of one good per unit time; therefore, the total consumed amount across all goods is $\sum_{i \in N} \sum_{g \in M} X_{i, g}$ is $|N|=n$. Therefore the total mass of items in $M'$ which are consumed is
\begin{align*}
    \sum_{i \in N} \sum_{g \in M'} X_{i, g} = \sum_{i \in N} \sum_{g \in M} X_{i, g} - \sum_{i \in N} \sum_{g \in L} X_{i, g} = n - k.
\end{align*}
As discussed above, each item in $M'$ is fully consumed. Therefore, the left hand side of the above inequality is an integer, which implies $k$ is an integer as well. Additionally, this means that there are exactly $n-k$ goods in $M'$.

Recall that the partial allocation $Z$ assigns exactly one good to each agent from the set of goods that are consumed in the eating algorithm. Out of the $n$ agents, exactly $n-k$ agents receive goods in $M'$. This leaves $k$ agents, who are allocated goods in $L$. $k$ is trivially positive, and from the above discussion, $k$ is an integer. We can therefore conclude that $k$ is at least $1$. This proves the second property.
\end{proof}

The above lemma shows that while the \emph{set} $N_L$ may vary depending on the sampled partial allocation $Z$, the number of agents in the set $N_L$ is the same for any $Z$. Our next observation bounds from below agent $i$'s expected value for its own bundle in the randomized allocation $\Z$, namely $\E[v_i(\Z_i)]$. 

\begin{lemma}
Suppose agent $i$ consumes $x \ge 0$ amount of its last consumed good $g_i$ in the fractional allocation $X$. Then, $\E[v_i(\Z_i)] \ge (2-x) \cdot v_i(L \cup U \setminus \{g_i\})$.
\label{obs:lower-bound}
\end{lemma}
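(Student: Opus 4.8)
The plan is to prove the (slightly stronger) fact that the Phase~I bundle alone already yields the bound, namely $v_i(X_i) \ge (2-x)\,v_i(L \cup U \setminus \{g_i\})$, and then observe that the tail can only add value. Throughout write $W := (L \cup U) \setminus \{g_i\}$ and $V := v_i(W)$. The first step is to pin down the preference position of $g_i$ relative to $L \cup U$. At the final instant of the eating process agent $i$ is consuming $g_i$; by Property~(a) every good in $L$ remains available until the very end, and every good in $U$ is never consumed and hence is also available throughout, so at that instant all of $L \cup U$ is available to $i$. Since the eating rule has each agent consume its favorite available good, we get $g_i \succ_i g$ for every $g \in W$, i.e.\ $g_i$ is agent $i$'s most preferred good in $L \cup U$. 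Because $i$ eats in strictly decreasing order of preference and ends at $g_i$, every other good it consumes is strictly preferred to $g_i$; and each such good lies in $M' := M \setminus (L \cup U)$, since $i$ consumes exactly one good of $L$, namely $g_i$.

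Next I would invoke the value gaps forced by lexicographic preferences: for every good $g$ we have $v_i(g) > \sum_{g' \prec_i g} v_i(g')$. Applying this to $g_i$ gives $v_i(g_i) > v_i(W) = V$, since $W \subseteq \{g' \prec_i g_i\}$. Applying it to any good $g$ that $i$ consumes other than $g_i$—for which $\{g_i\} \cup W \subseteq \{g' \prec_i g\}$—gives the stronger gap $v_i(g) > v_i(g_i) + V$.

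I would then lower-bound $v_i(X_i)$ by accounting for the unit mass that $i$ consumes: mass $x$ on $g_i$ and the remaining mass $1-x$ on goods each worth more than $v_i(g_i)+V$. This yields
\[
v_i(X_i) \ge x\,v_i(g_i) + (1-x)\bigl(v_i(g_i)+V\bigr) = v_i(g_i) + (1-x)V \ge V + (1-x)V = (2-x)V,
\]
where the last inequality uses $v_i(g_i) \ge V$. Finally, in the randomized allocation $\Z$ agent $i$ always receives its Phase~I good and possibly, in addition, the tail $T$; since $v_i \ge 0$ the tail contributes nonnegatively, and the Birkhoff--von~Neumann decomposition preserves the expected Phase~I value, $\E[v_i(\text{Phase~I good})] = \sum_g X_{i,g}\,v_i(g) = v_i(X_i)$. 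Hence $\E[v_i(\Z_i)] \ge v_i(X_i) \ge (2-x)V$, as claimed.

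The step I expect to require the most care is the structural claim that every good $i$ consumes apart from $g_i$ lies in $M'$ and is strictly preferred to $g_i$, together with the matching lexicographic gap $v_i(g) > v_i(g_i)+V$: this is precisely what upgrades the trivial estimate $v_i(X_i) \ge v_i(g_i)$ to the factor $(2-x)$, and it relies on carefully combining the ``eats-in-decreasing-order'' dynamics of the eating algorithm with the geometric-type value separation that lexicographic preferences impose. The remaining bookkeeping (availability of $L \cup U$ at termination, and that the tail and the expectation-preserving property of B-vN only help) is routine by comparison.
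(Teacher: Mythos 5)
Your proof is correct and follows essentially the same route as the paper's: both arguments reduce to $\E[v_i(\Z_i)] \ge v_i(X_i)$, split agent $i$'s unit of eaten mass into $x$ on $g_i$ and $1-x$ on strictly better goods, use the lexicographic separation to lower-bound each of those better goods by $v_i(L \cup U) = v_i(g_i) + v_i(L \cup U \setminus \{g_i\})$, and finish with $v_i(g_i) \ge v_i(L \cup U \setminus \{g_i\})$. The only cosmetic difference is that the paper names a single witness $g^*$ (the least-valued good among those $i$ consumes that beat $g_i$) rather than bounding every such good individually, and it states the availability/most-preferred-in-$L\cup U$ facts more tersely; your extra justification of those structural claims is sound.
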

\begin{proof}
By definition of the last consumed good, agent $i$ spends $(1-x)$ units of time consuming goods it prefers over $g_i$. Among all the goods consumed by agent $i$ during the eating algorithm that are better than $g_i$, i.e., the goods in $\{j \in M: X_{i,j}>0 \text{ and } v_i(j) > v_i(g_i)\}$, let $g^*$ be its least-valued good. Since agent $i$ has lexicographic valuations and since agent $i$ consumes $g^*$ before consuming any good in $L \cup U$, we have that $v_i(g^*) \ge v_i(L \cup U)$. Additionally, since $g_i$ is agent $i$'s most preferred good in $L \cup U$, we have $v_i(g_i) \ge v_i(L \cup U \setminus \{g_i\})$. Therefore,
\begin{alignat*}{3}
    \E[v_i(\Z_i)] 
    & \ge v_i(X_i) & \text{(since $\Z_i$ is a completion of $X_i$)}\\
    & \ge (1-x) v_i(g^*) + xv_i(g_i) & \text{(by definition of $g^*$)}\\
    & \ge (1-x) v_i(L \cup U) + xv_i(g_i) & \text{(due to lexicographic valuations)}\\
    & = (1-x) v_i(L \cup U \setminus \{g_i\}) + (1-x) v_i(g_i) + xv_i(g_i) & \text{(due to additive valuations)}\\
    & = (1-x) v_i(L \cup U \setminus \{g_i\}) + v_i(g_i) & \\
    & \ge (2-x) v_i(L \cup U \setminus \{g_i\}) & \text{(since $v_i(g_i) \ge v_i(L \cup U \setminus \{g_i\})$)},
\end{alignat*}
as desired.
\end{proof}

We will now show an upper bound on agent $i$'s expected value for agent $j$'s bundle, namely $\E[v_i(\Z_j)]$. Towards this goal, we first establish an upper bound on the probability that an agent receives a good in $L \cup U$ other than its most preferred one. 

We write $q(g)$ to denote the amount of good $g$ that is allocated (or eaten) in the fractional allocation $X$, i.e., $q(g) \coloneqq \sum_{h \in N} X_{h,g}$. Note that $q(g) \in [0, 1]$.

\begin{lemma}
Suppose agent $i$ consumes $x \geq 0$ amount of its last consumed good $g_i$ in the fractional allocation $X$. For any good $g \in L \cup U \setminus \{g_i\}$, the probability that agent $i$ receives good $g$ in the randomized allocation $\Z$ is at most $\frac1k \min\{x, 1 - q(g)\}$.
\label{obs:fractional}
\end{lemma}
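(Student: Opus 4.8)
The plan is to show that agent $i$ can acquire a good $g \in L \cup U \setminus \{g_i\}$ only through the tail-assignment step of Phase~II, and then to bound the probability of this event by splitting it into a B-vN--sampling part and the uniform tie-break among $N_L$.

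First I would argue that agent $i$ never receives $g$ in Phase~I. By Property~(a) agent $i$ consumes exactly one good of $L$, namely its last-consumed good $g_i$, and by the definition of $U$ it consumes nothing in $U$; hence $X_{i,g}=0$ for every $g \in L \cup U \setminus \{g_i\}$. Since each partial allocation $A^\ell$ in the B-vN decomposition assigns to agent $i$ only a good that it positively consumes, agent $i$ cannot be handed $g$ in Phase~I. Consequently, the only route by which $i$ ends up with $g$ is to be the agent of $N_L$ chosen to receive the entire tail $T$ in Phase~II, with $g$ lying inside that tail.

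Next I would fix the sampled partial allocation $Z = A^\ell$ and record two equivalences. The event $i \in N_L$, i.e.\ $Z_i \cap L \neq \emptyset$, holds exactly when $i$ receives $g_i$ in $Z$, because $g_i$ is the only good of $L$ that $i$ consumes; call this event $E_i$. The event that $g$ belongs to the tail $T$ holds exactly when $g$ is left unallocated by $Z$; call this event $F_g$. Conditioned on $E_i \cap F_g$, agent $i$ is selected from the $k = |N_L|$ candidates (constant across all B-vN allocations by \Cref{lem:nl-size}) with probability $1/k$, and once selected it receives all of $T$ and hence $g$. Therefore
\begin{align*}
\Pr[\,i \text{ receives } g\,] \;=\; \frac{1}{k}\,\Pr\!\left[E_i \cap F_g\right],
\end{align*}
where the probability on the right is taken over the B-vN sampling of $\ell$.

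Finally I would bound $\Pr[E_i \cap F_g]$ by the trivial inequality $\Pr[E_i \cap F_g] \le \min\{\Pr[E_i], \Pr[F_g]\}$ and evaluate each marginal from the decomposition. Since the B-vN decomposition reproduces $X$ in expectation, $\Pr[E_i] = X_{i,g_i} = x$; and since good $g$ is assigned to at most one agent in each $A^\ell$, the probability that $g$ is allocated equals $\sum_{h} X_{h,g} = q(g)$, so $\Pr[F_g] = 1 - q(g)$. Combining yields the claimed bound $\tfrac{1}{k}\min\{x, 1 - q(g)\}$. The one genuinely delicate step—and the one I would take most care over—is the first: justifying that $i$ obtains $g$ solely via the tail, which rests on $g_i$ being the unique good of $L \cup U$ that $i$ consumes together with the fact that B-vN never awards an agent an unconsumed good. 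Everything after that is a direct probability estimate.
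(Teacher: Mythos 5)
Your proposal is correct and follows essentially the same route as the paper's proof: both observe via Property~(a) that agent $i$ can only obtain $g \in L \cup U \setminus \{g_i\}$ through the tail, write the probability of that event as $\tfrac{1}{k}$ times the probability that $i$ gets $g_i$ in the sampled B-vN allocation while $g$ is left unallocated, bound the joint probability by the minimum of the marginals, and read off the marginals $x$ and $1-q(g)$ from the decomposition. Your write-up is, if anything, slightly more explicit than the paper's about why $i \in N_L$ is equivalent to $i$ receiving $g_i$ and about where the factor $1/k$ enters.
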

\begin{proof}
By Property (a), agent $i$ is not allocated any fraction of the goods in $L \cup U \setminus \{g_i\}$ in $X$
Therefore the only way for $i$ to receive these goods is as part of the tail. 
Let $Z \sim \Z$ denote an allocation sampled from the randomized allocation $\Z$ output by Algorithm \ref{alg:utse}. This allocation $Z$ contains a part $A^Z$ which is the partial allocation obtained from the B-vN decomposition, and a tail $T^Z$.
Using \Cref{lem:nl-size}, the probability that agent $i$ is assigned the tail and the tail contains good $g$ is 
\begin{align*}
    \frac1k \Pr_{Z \sim \Z}[A^Z_i = \{g_i\} \text { and } g \in T^Z] & = \frac1k \Pr_{Z \sim \Z}[A^Z_i = \{g_i\} \text { and } g \text{ is unallocated in } A^Z]\\
    & \leq \frac1k \min \left\{\Pr_{Z \sim \Z} \left[A^Z_i = \{g_i\} \right], \Pr_{Z \sim \Z}\left [g \text{ is unallocated in } A^Z \right ]\right\}.
\end{align*}
From the construction of the randomized allocation $\Z$, the probability that $A^Z = A^{\ell}$ for some $A^{\ell}$ in the B-vN decomposition of $X$ is equal to $p_\ell$: the probability (or weight) of the allocation $A^{\ell}$ in the B-vN decomposition.
Therefore, the probability that $i$ gets $g_i$ in $A^Z$ is $x$, and the probability that $g$ is unallocated in $A^Z$ is $1-q(g)$. 
\end{proof}

The next lemma shows an upper bound on the expected envy of agent $i$ towards agent $j$, namely $\E[v_i(\Z_j)] - \E[v_i(\Z_i)]$. In fact, we prove something stronger: we show that $\E[v_i(\Z_j)] - v_i(X_i)$ is small when $k$ is large. 

\begin{lemma}
Suppose agent $i$ consumes $x \geq 0$ amount of its last consumed good $g_i$ in the fractional allocation $X$. Then, for any other agent $j$, $\E[v_i(\Z_j)] - v_i(X_i) \le \frac1k \min\{x, 1-x\} v_i(L \cup U \setminus \{g_i\})$.
\label{lem:ij-fractional}
\end{lemma}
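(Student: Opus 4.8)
The plan is to split agent $j$'s random bundle $\Z_j$ into its \emph{head}---the single good $j$ receives from the Birkhoff-von Neumann partial allocation---and its \emph{tail}---the unallocated goods that $j$ receives in the event that $j$ is the agent drawn from $N_L$. Because in the B-vN decomposition the marginal probability that $j$'s head equals a good $g$ is exactly $X_{j,g}$, the expected value $i$ assigns to $j$'s head is precisely $v_i(X_j)$. Applying anytime envy-freeness (Property~(b)) at time $z=1$ gives $v_i(X_i) \ge v_i(X_j)$, so the head contribution is already bounded by $v_i(X_i)$, and it remains only to control $\E[v_i(j\text{'s tail})]$ and to argue that it is at most the claimed quantity (possibly after cashing in the slack $\Delta \coloneqq v_i(X_i) - v_i(X_j) \ge 0$).

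For the tail I would argue exactly as in \Cref{obs:fractional}, but from $j$'s side. By Property~(a), $j$ consumes only $g_j$ from $L$, so any good $g \in (L \cup U)\setminus\{g_j\}$ can enter $\Z_j$ only through the tail; this requires both that $g$ is unallocated in the sampled partial allocation and that $j \in N_L$ (an event of probability $x_j \coloneqq X_{j,g_j}$), and then that $j$ is the chosen recipient (probability $1/k$). Hence $\Pr[g \in j\text{'s tail}] \le \tfrac1k \min\{1 - q(g),\, x_j\}$. The crucial sub-observation is that for the single good $g = g_i$ we have $q(g_i) = \sum_h X_{h,g_i} \ge X_{i,g_i} = x$, so $1 - q(g_i) \le 1 - x$; this is precisely what produces the $1-x$ branch of the final $\min\{x,1-x\}$ factor, while the $x$ branch will come from the slack argument below.

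The main obstacle is the good $g_i$ itself: it is $i$'s most valuable good in $L \cup U$, and under lexicographic valuations $v_i(g_i)$ can \emph{dominate} $v_i(L \cup U \setminus \{g_i\})$, so its appearance in $j$'s tail cannot be absorbed by the crude comparison $v_i(X_j) \le v_i(X_i)$ alone. The key structural fact is that $g_i$ can lie in a tail only when it is left unallocated, which happens exactly when $i$'s own head is a good strictly better than $g_i$---and in those same scenarios the head comparison carries genuine slack. I would make this quantitative by sharpening the previous step: applying anytime envy-freeness at the earlier time $z = 1-x$ (just before $i$ begins eating $g_i$) and using $v_i(X_i(1-x)) = v_i(X_i) - x\,v_i(g_i)$ lower-bounds $\Delta$ by a quantity proportional to $x\,v_i(g_i)$, which is what must cancel the $g_i$-contribution $\tfrac1k v_i(g_i)\min\{1-x,\,x_j\}$ of the tail. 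Combining this slack with the observation $1-q(g_i) \le 1-x$ and a short case analysis according to whether $x \le \tfrac12$ or $x > \tfrac12$ should convert the resulting estimate into the stated factor $\min\{x,1-x\}\cdot v_i(L \cup U \setminus \{g_i\})$. I expect the simultaneous bookkeeping of $x$, $x_j$, $q(g_i)$, and $\Delta$---in particular controlling the mass $j$ consumes during $[1-x,1]$ so that the slack survives---to be the delicate part, whereas the head/tail decomposition and the per-good probability bounds are routine consequences of \Cref{obs:fractional} and Properties~(a) and~(b).
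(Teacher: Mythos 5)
Your overall strategy---decompose $\Z_j$ into the B-vN ``head'' (expected value exactly $v_i(X_j)$) and the random tail, bound each tail good's probability by $\tfrac1k\min\{y,1-q(g)\}$ as in \Cref{obs:fractional}, and cash in the slack $\Delta = v_i(X_i)-v_i(X_j)$ against the problematic $g_i$-term---is the same route the paper takes. But the quantitative slack you propose is not enough, and this is a genuine gap rather than a bookkeeping detail. Writing $y = X_{j,g_j}$, applying anytime envy-freeness at $z=1-x$ gives only $\Delta \ge x\,v_i(g_i) - v_i(\text{$j$'s consumption on }[1-x,1]) \ge x\,v_i(g_i) - (x-y)^+v_i(g^*) - \min\{x,y\}\,v_i(g_j)$, i.e.\ slack on the order of $x\,v_i(g_i)$ at best. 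The tail's $g_i$-term is $\tfrac1k\min\{y,1-x\}\,v_i(g_i)$, and when $y>x$ (say $k=1$, $x=0.1$, $y=0.5$, which is consistent with $x+y\le k$) this leaves a \emph{positive} residual coefficient $\tfrac1k\min\{y,1-x\}-x$ on $v_i(g_i)$; since under lexicographic valuations $v_i(g_i)$ can dwarf $v_i(L\cup U\setminus\{g_i\})$, no bound of the claimed form follows. The same regime breaks the other tail terms: their total coefficient is $\tfrac{y}{k}$ on $v_i(L\cup U\setminus\{g_i\})$, which exceeds $\tfrac1k\min\{x,1-x\}$ whenever $y>\min\{x,1-x\}$, and your argument produces nothing to subtract from it.

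The missing idea is to rewind further, to $z=1-\max\{x,y\}$, and to use what agent $i$ eats during $[1-\max\{x,y\},\,1-x]$: each such good is preferred to $g_i$, hence (by lexicographic valuations) worth at least $v_i(L\cup U)=v_i(g_i)+v_i(L\cup U\setminus\{g_i\})$. This upgrades the lower bound to $v_i(X_i)\ge v_i(X_i(1-\max\{x,y\}))+\max\{x,y\}\,v_i(g_i)+\max\{y-x,0\}\,v_i(L\cup U\setminus\{g_i\})$, and the coefficient $\max\{x,y\}$ on $v_i(g_i)$ (rather than $x$) is exactly what dominates $\tfrac1k\min\{y,1-x\}$; the now-nonpositive $v_i(g_i)$-coefficient is then converted via $v_i(g_i)\ge v_i(L\cup U\cup\{g^*\}\setminus\{g_i\})$ into the negative corrections that bring the $v_i(L\cup U\setminus\{g_i\})$-coefficient down to $\tfrac1k\min\{y,1-x\}-|x-y|\le\tfrac1k\min\{x,1-x\}$. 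Two smaller points: your claim that $g_i$ is unallocated ``exactly when'' $i$'s head beats $g_i$ is only the forward implication (other agents may share $g_i$ as their last consumed good), and the case $g_j=g_i$ needs separate treatment using $x+y\le 1$, which your sketch does not address.
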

\begin{proof}
Suppose agent $j$ consumes $y \geq 0$ amount of its last consumed good $g_j$, i.e., $X_{j,g_j} = y$. 
Our argument will consider two cases: when $g_i$ and $g_j$ are distinct and when the $g_i$ and $g_j$ are identical.

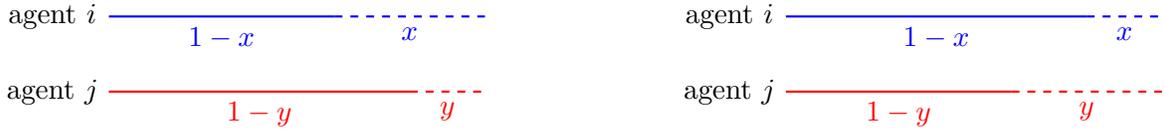
\begin{figure}[h]
    \centering
        \begin{tikzpicture}
            \centering
            \draw[blue,thick] (0,0) -- node[below] {$1-x$} (3,0);
            \draw[blue,thick,dashed] (3,0) -- node[below] {$x$} (5,0);
            \draw[red,thick] (0,-1) -- node[below] {$1-y$} (4,-1);
            \draw[red,thick,dashed] (4,-1) -- node[below] {$y$} (5,-1);
            \node at (-0.75, 0) (mynode) {agent $i$};
            \node at (-0.75, -1) (mynode) {agent $j$};
            \draw[blue,thick] (9,0) -- node[below] {$1-x$} (13,0);
            \draw[blue,thick,dashed] (13,0) -- node[below] {$x$} (14,0);
            \draw[red,thick] (9,-1) -- node[below] {$1-y$} (12,-1);
            \draw[red,thick,dashed] (12,-1) -- node[below] {$y$} (14,-1);
            \node at (8.25, 0) (mynode) {agent $i$};
            \node at (8.25, -1) (mynode) {agent $j$};
        \end{tikzpicture}
    \caption{Eating timelines of agents $i$ and $j$ when $x > y$ (left) and $y > x$ (right). The dashed lines denote the time an agent spends on its last consumed good.}
    \label{fig:eating-timeline}
\end{figure}

\textbf{Case (a): $g_j \ne g_i$}

We first establish an upper bound on $\E[v_i(\Z_j)]$. The randomized bundle $\Z_j$ of agent $j$ consists of the fractional goods assigned to it under $X$, namely $X_j$, as well as the tail goods.

To show an upper bound on $v_i(X_j)$, we will break down the eating timeline of agent $j$ as follows (also see \Cref{fig:eating-timeline}): 
\begin{itemize}
    \item For the time interval $[0,1-\max\{x,y\}]$, we consider the fractional bundle $X_j(1-\max\{x,y\})$ consumed by agent $j$.
    \item Next, if $x > y$, we consider the interval $[1-x,1-y]$ during which agent $j$ consumes some set of goods $M'$ disjoint from $L$. Among the goods in $M'$, let $g^*$ be agent $i$'s most preferred good. Then, agent $i$'s value for the fractional goods eaten by agent $j$ during the interval $[1 - max\{x,y\},1-y]$ can be bounded from above as $\max\{x-y, 0\} v_i(g^*)$. %
    \item For the interval $[1-y,1]$, agent $j$ consumes the good $g_j$. Thus, agent $i$'s value for the fractional goods eaten by agent $j$ during the interval $[1-y,1]$ is equal to $y v_i(g_j)$.
\end{itemize}

Together, these observations imply that
$$v_i(X_j) \le v_i(X_j(1-\max\{x,y\})) + \max\{x-y, 0\} v_i(g^*) + yv_i(g_j).$$

Let us now consider the assignment of tail goods to agent $j$. We have already taken into account the good $g_j$ in the above calculation; thus, we only need to consider the goods in $L \cup U \setminus \{g_j\}$. From \Cref{obs:fractional}, we know that for any good $g \in L \cup U \setminus \{g_j\}$, the probability that agent $j$ receives good $g$ in the randomized allocation $\Z$ is at most $\frac1k \min\{y, 1 - q(g)\}$. Thus, agent $i$'s value for the tail goods (other than $g_j$) assigned to agent $j$ is at most $\sum_{g \in L \cup U \setminus \{g_j\}} \frac1k \min\{y, 1- q(g)\} v_i(g)$.

Combining the above observations, we can write the following upper bound on $\E[v_i(\Z_j)]$:
\begin{align}
    \E[v_i(\Z_j)] & \le v_i(X_j(1-\max\{x,y\})) + \max\{x-y, 0\} v_i(g^*) + yv_i(g_j) \nonumber \\
    & \pushright{ + \sum_{g \in L \cup U \setminus \{g_j\}} \frac1k \min\{y, 1- q(g)\} v_i(g)} \nonumber \\
    &\le v_i(X_j(1-\max\{x,y\})) + \max\{x-y, 0\} v_i(g^*) \nonumber \\
    & \pushright{+ \frac1k \min\{y, 1 - x\}v_i(g_i) + yv_i(L \cup U \setminus \{g_i\})}. \label{eqn:agent_j_exp_val}
\end{align}

The second inequality follows from the observation that $\frac1k \min\{y, 1- q(g)\}$ is at most $\frac1k \min\{y, 1- x\}$ when $g = g_i$ and $y$ when $g \in L \cup U \setminus \{g_i,g_j\}$.

We next establish a lower bound on $v_i(X_i)$. To show a lower bound on $v_i(X_i)$, we will break down the eating timeline of agent $i$ as follows (also see \Cref{fig:eating-timeline}): 
\begin{itemize}
    \item For the time interval $[0,1-\max\{x,y\}]$, we consider the fractional bundle $X_i(1-\max\{x,y\})$ consumed by agent $i$.
    \item Next, if $x < y$, we consider the interval $[1-y,1-x]$ during which agent $i$ consumes goods that it prefers over $g_i$. By lexicographic valuations, each of these goods is preferred over the set $L \cup U$.
    \item For the interval $[1-x,1]$, agent $i$ consumes the good $g_i$.
\end{itemize}

Together, the above observations imply that
\begin{align}
    v_i(X_i) &\ge v_i(X_i(1-\max\{x,y\})) + \max\{y-x, 0\} v_i(L \cup U) + xv_i(g_i) \nonumber \\
    &= v_i(X_i(1-\max\{x,y\})) + \max\{y-x, 0\} v_i(L \cup U\setminus \{g_i\}) + \max\{x, y\}v_i(g_i).\label{eqn:agent_i_exp_val}
\end{align}

Taking the difference of \Cref{eqn:agent_j_exp_val,eqn:agent_i_exp_val} and using $v_i(X_j(1-\max\{x,y\})) \le v_i(X_i(1-\max\{x,y\}))$, we get
\begin{align*}
    & \E[v_i(\Z_j)] - v_i(X_i) \le \\
    & \max\{x-y, 0\} v_i(g^*) + \left [\frac1k \min\{y, 1 - x\} - \max\{x, y\} \right ]v_i(g_i) + \min\{x, y\}v_i(L \cup U \setminus \{g_i\}).
\end{align*}

The coefficient of $v_i(g_i)$ in the above expression is nonpositive, so we can apply $v_i(g_i) \geq v_i(L \cup U \cup \{g^*\} \setminus \{g_i\})$. This is because $v_i(g_i) \geq v_i(L \cup U \setminus \{g_i\})$, $v_i(g_i) \geq v_i(g^*)$, and the agents have lexicographic valuations. (Recall that $g^*$ is agent $i$'s favorite good among those that are consumed by agent $j$ during the interval $[1-x,1-y]$. If $x < y$, then $\{g^*\} = \emptyset$.)

The aforementioned substitution gives
\begin{align*}
    \E[v_i(\Z_j)] - v_i(X_i) &\le \left [\max\{x-y, 0\} + \frac1k \min\{y, 1-x\} - \max\{x, y\} \right ] v_i(g^*) \\ &\qquad + \left [\frac1k \min\{y, 1 - x\} - \max\{x, y\} + \min\{x, y\} \right ]v_i(L \cup U \setminus \{g_i\}) \\
    &\leq 0 + \left[ \frac{1}{k} \min(x, 1-x) + \frac{\max\{0, y-x\}}{k} - \mid y - x \mid \right] v_i(L \cup U \setminus \{g_i\})\\
    &\le \frac1k \min\{x, 1 - x\} v_i(L \cup U \setminus \{g_i\}). 
\end{align*}

The second inequality follows since the coefficient of $v_i(g^*)$ is nonpositive. This establishes the desired inequality for the case where $g_i$ and $g_j$ are distinct.

\smallskip

\textbf{Case (b): $g_j = g_i$}

Following the same arguments as in the previous case, we can bound $\E[v_i(\Z_j)]$ from above as follows:
\begin{align*}
    \E[v_i(\Z_j)] & \le v_i(X_j(1-\max\{x,y\})) + \max\{x-y, 0\} v_i(g^*) + yv_i(g_i) \\
    & \pushright{+ \sum_{g \in L \cup U \setminus \{g_i\}} \frac1k \min\{y, 1- q(g)\}v_i(g)} \\
    & \le v_i(X_j(1-\max\{x,y\})) + \max\{x-y, 0\} v_i(g^*)  + yv_i(g_i) + \frac{y}{k}v_i(L \cup U \setminus \{g_i\}). 
\end{align*}

Note the subtle difference in the second inequality; we bound the value $\frac1k \min\{y, 1-q(g)\}$ from above as $\frac{y}{k}$ for all $g \in L \cup U \setminus \{g_i\}$. We can similarly bound $v_i(X_i)$ from below using \Cref{eqn:agent_i_exp_val}:
\begin{align*}
    v_i(X_i) &\ge v_i(X_i(1-\max\{x,y\})) + \max\{y-x, 0\} v_i(L \cup U \setminus \{g_i\}) + \max\{x, y\}v_i(g_i).
\end{align*}

Taking the difference, we get
\begin{align*}
    & \E[v_i(\Z_j)] - v_i(X_i) \\
    &\le \max\{x-y, 0\} v_i(g^*) - \max\{x-y, 0\}v_i(g_i) + \left (\frac{y}{k} - \max\{y- x, 0\} \right )v_i(L \cup U \setminus \{g_i\}) \\
    &\le \max\{x-y, 0\} v_i(g^*) - \max\{x-y, 0\}v_i(g_i) + \frac1k \min\{x, y\}v_i(L \cup U \setminus \{g_i\}) \\
    &\le \frac1k \min\{x, y\}v_i(L \cup U \setminus \{g_i\}) \\
    &\le \frac1k \min\{x, 1-x\}v_i(L \cup U \setminus \{g_i\}) 
\end{align*}

The second inequality uses $\max\{y-x, 0\} \geq \frac1k\max\{y-x, 0\}$. The third inequality follows from the fact that $v_i(g^*) \le v_i(g_i)$. The fourth inequality follows from the observation that both $x$ and $y$ refer to the amount of consumption of the same good $g_i$. Therefore, $x + y \le 1$.
\end{proof}

Combining \Cref{obs:lower-bound,lem:ij-fractional} we get the following result. 

\UtseUpperBound*
\begin{proof}
We first show the ex-ante \EF{} guarantee. Consider any two agents $i$ and $j$, we need to show $\frac{\E[v_i(\Z_i)]}{\E[v_i(\Z_j)]} \ge \frac{3k}{3k+1}$. 

Dividing the bounds obtained in \Cref{obs:lower-bound} and \Cref{lem:ij-fractional}, we get
\begin{align*}
    \frac{\E[v_i(\Z_j)] - \E[v_i(\Z_i)]}{\E[v_i(\Z_i)]} \le \frac{\E[v_i(\Z_j)] - v_i(X_i)}{\E[v_i(\Z_i)]} \le \max_{x \in [0, 1]} \frac1k \frac{\min\{x, 1-x\}}{2-x} \le \frac{1}{3k}.
\end{align*}

The first inequality follows from the fact that $\E[v_i(\Z_i)] \ge v_i(X_i)$. Re-arranging the terms from the above inequality, we prove the ex-ante \EF{} guarantee.
\end{proof}

Since $k \ge 1$, the above theorem proves an ex-ante $\frac34$-\EF{} guarantee. If $k \ge 2$, then we achieve the promised ex-ante $\frac{6}{7}$-\EF{} guarantee. Since $k$ is a positive integer (\Cref{lem:nl-size}), the only case left to improve the guarantee for is when $k = 1$. Our next result shows that when $k =1$, Algorithm \ref{alg:utse} is guaranteed to output an allocation that is ex-ante \EF{}. 

\begin{theorem}\label{thm:low-k}
If $k = 1$, then Algorithm~\ref{alg:utse} outputs a randomized allocation that is ex-ante \EF{}.
\end{theorem}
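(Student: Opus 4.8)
The plan is to exploit the rigid structure that the case $k=1$ imposes on Algorithm~\ref{alg:utse}, which makes the generic bounds of \Cref{obs:lower-bound,lem:ij-fractional} unnecessary (indeed too lossy). By \Cref{lem:nl-size}, when $k=1$ every partial allocation $A^\ell$ in the B-vN support has $|N_L|=1$: exactly one agent is matched to a good of $L$, while the remaining $n-1$ agents are matched to the $n-1$ fully consumed goods of $M' \coloneqq M \setminus (L\cup U)$. Since the tail is always handed to the unique agent of $N_L$, and that agent already holds the lone allocated $L$-good, the realized allocation has a very simple form: one \emph{special} agent receives the entire set $L\cup U$, and every other agent receives a single good of $M'$. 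Writing $x_i \coloneqq X_{i,g_i}$ for the mass agent $i$ eats of its last good, the B-vN property gives $\Pr[i\text{ is special}] = X_{i,g_i} = x_i$, and \Cref{lem:nl-size} guarantees $\sum_i x_i = k = 1$.

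First I would compute the two relevant expectations \emph{exactly}. Using the B-vN identity $\sum_\ell p_\ell v_i(A^\ell_i) = v_i(X_i)$ together with the fact that agent $i$ contributes exactly $v_i(g_i)$ to this sum precisely on the probability-$x_i$ event that it is special, one obtains
\begin{align*}
\E[v_i(\Z_i)] = S_i + x_i\, v_i(L\cup U), \qquad \E[v_i(\Z_j)] = T_{ij} + x_j\, v_i(L\cup U),
\end{align*}
where $S_i \coloneqq \sum_{g\in M'} X_{i,g} v_i(g) = v_i(X_i(1-x_i))$ is agent $i$'s fractional value from its own $M'$-consumption and $T_{ij} \coloneqq \sum_{g\in M'} X_{j,g} v_i(g) = v_i(X_j(1-x_j))$ is agent $i$'s value for agent $j$'s $M'$-consumption. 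Both equalities use Property~(a): each agent touches only one good of $L$, so its pre-$g_i$ consumption lies entirely in $M'$ and is completed by time $1-x_i$. Thus establishing ex-ante \EF{} reduces to the single inequality $S_i - T_{ij} \ge (x_j - x_i)\, v_i(L\cup U)$.

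To prove this inequality I would invoke anytime envy-freeness (Property~(b)) at the carefully chosen time $t = 1-x_j$, which yields $v_i(X_i(1-x_j)) \ge v_i(X_j(1-x_j)) = T_{ij}$, and then show
\begin{align*}
v_i(X_i(1-x_j)) \le S_i + (x_i - x_j)\, v_i(L\cup U),
\end{align*}
from which the target inequality is immediate. This last bound splits into two cases that collapse to the same right-hand side. If $x_j \le x_i$, then at time $1-x_j \ge 1-x_i$ agent $i$ has finished its $M'$-goods and is eating $g_i$, so $v_i(X_i(1-x_j)) = S_i + (x_i - x_j)v_i(g_i)$, and we use $v_i(g_i) \le v_i(L\cup U)$. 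If $x_j > x_i$, then at time $1-x_j < 1-x_i$ agent $i$ is still eating $M'$-goods, and $S_i - v_i(X_i(1-x_j))$ equals $i$'s value for the mass-$(x_j - x_i)$ bundle it consumes during $[1-x_j, 1-x_i]$; each such good, being eaten before $g_i$, is preferred to $g_i$ and hence—by the lexicographic domination property used in the proof of \Cref{obs:lower-bound}—is individually worth at least $v_i(L\cup U)$, giving $S_i - v_i(X_i(1-x_j)) \ge (x_j - x_i)v_i(L\cup U)$.

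The ex-post \EFX{}$+$\PO{} guarantee is inherited verbatim from \Cref{prop:ex-post utse}, so only the ex-ante claim requires the above argument. The main obstacle is conceptual rather than computational: at $k=1$ the generic estimate $\E[v_i(\Z_i)] \ge v_i(X_i)$ discards exactly the extra $x_i\, v_i(L\cup U \setminus \{g_i\})$ that the special-agent structure supplies, and this term is precisely what is needed to close the gap. Recovering it forces the exact expectation computation (hence the clean identities above), and the remaining subtlety is selecting the evaluation time $t = 1-x_j$ for anytime envy-freeness so that both the $x_j \le x_i$ and $x_j > x_i$ cases reduce to the single displayed bound.
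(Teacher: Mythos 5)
Your proof is correct, and its backbone coincides with the paper's: both arguments hinge on the observation from \Cref{lem:nl-size} that $k=1$ forces $|N_L|=1$, so the tail assignment is deterministic and the unique agent holding an $L$-good ends up with all of $L\cup U$; both then compute $\E[v_i(\Z_i)]$ \emph{exactly} (your $S_i + x_i\, v_i(L\cup U)$ is the paper's $v_i(X_i) + x_i\, v_i(L\cup U\setminus\{g_i\})$), which is precisely the refinement over the lossy bound $\E[v_i(\Z_i)]\ge v_i(X_i)$ that makes the case close. Where you diverge is in bounding agent $i$'s value for agent $j$'s bundle: the paper simply invokes the already-proved \Cref{lem:ij-fractional} to get $\E[v_i(\Z_j)] - v_i(X_i) \le \min\{x,1-x\}\,v_i(L\cup U\setminus\{g_i\}) \le x\,v_i(L\cup U\setminus\{g_i\})$ and is done in two lines, whereas you recompute $\E[v_i(\Z_j)]$ exactly and re-derive the needed inequality $S_i - T_{ij}\ge (x_j-x_i)\,v_i(L\cup U)$ from first principles via anytime envy-freeness at time $1-x_j$ plus the lexicographic domination $v_i(g^*) \ge v_i(L\cup U)$. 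Your route is self-contained and arguably more transparent about where the slack lives (it is essentially a streamlined specialization of the proof of \Cref{lem:ij-fractional} to the $k=1$ geometry), at the cost of redoing a case analysis the paper gets for free; one small quibble is that your framing suggests \Cref{lem:ij-fractional} is ``too lossy'' here, when in fact only the lower bound of \Cref{obs:lower-bound} is---the paper's proof shows the upper bound from \Cref{lem:ij-fractional} suffices verbatim.
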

\begin{proof}
The proof follows from the fact that if $k = 1$, then $|N_L| = 1$ (\Cref{lem:nl-size}). Therefore, the allocation of the tail is deterministic. Additionally, since $|N_L| = 1$, if some agent $i$ receives its last consumed good $g_i$, no other agent receives a good in $L$. This means the tail for the agent $i$ is the set of goods $L \cup U \setminus \{g_i\}$, and $i$ receives this tail in all allocations where it receives $g_i$. 

Assume some agent $i$ consumes $x$ amount of its last consumed good $g_i$. This implies it receives the tail with probability $x$. Therefore,
\begin{align*}
    \E[v_i(\Z_i)] = v_i(X_i) + xv_i(L \cup U \setminus\{g_i\}).
\end{align*}

For any other agent $j$, 
\begin{align*}
    \E[v_i(\Z_j)] - \E[v_i(\Z_i)] &= \E[v_i(\Z_j)] - v_i(X_i) - xv_i(L \cup U \setminus\{g_i\}) \\
    &\le \frac1k \min\{x, 1-x\} v_i(L \cup U \setminus \{g_i\}) - xv_i(L \cup U \setminus\{g_i\}) \\
    &\le 0.
\end{align*}
The first inequality follows from \Cref{lem:ij-fractional}. We can therefore conclude that agent $i$ does not have ex-ante envy towards agent $j$. Since these two agents were picked arbitrarily, we conclude that $\Z$ is ex-ante \EF.
\end{proof}

Combining \Cref{thm:utse-upper-bounnd,thm:low-k} with \Cref{prop:ex-post utse}, we prove \Cref{thm:warmup}. In the following example, we show that our analysis for Algorithm \ref{alg:utse} is tight. 

\begin{example}[Algorithm \ref{alg:utse} fails ex-ante \EF{}] Consider an instance with four agents and five goods. For $\epsilon>0$ sufficiently small, the valuations are given as follows:  

\begin{table}[h]
    \centering
    \begin{tabular}{c|cccccc}
        & $g_1$ & $g_2$ & $g_3$ & $g_4$ & $g_5$\\  
        \cmidrule{1-6}
        $1$ & $4+\epsilon$ & $2+\epsilon/2$ & $2$ & $\epsilon/4$ & $\epsilon/8$\\  
        $2$ & $4+\epsilon$ & $2+\epsilon/2$ & $2$ & $\epsilon/4$ & $\epsilon/8$\\
        $3$ & $\epsilon/8$ & $\epsilon/4$ & $2+\epsilon/2$ & $4+\epsilon$ & $2$ \\
        $4$ & $\epsilon/8$ & $\epsilon/4$ & $2$ & $4+\epsilon$ & $2+\epsilon/2$ \\
         
    \end{tabular}
    \label{tab: 6/7 ex-ante EF example}
\end{table}

Let $X$ denote the fractional allocation obtained by running the eating algorithm for one unit of time. $X$ can be decomposed into integral allocations $A^1, A^2$ as follows: 

$$\NiceMatrixOptions{code-for-first-row=\scriptstyle,code-for-first-col=\scriptstyle}
\begin{bNiceMatrix}
1/2 & 1/2 & 0 & 0 & 0 \\
1/2 & 1/2 & 0 & 0 & 0 \\ 
0 & 0 & 1/2 & 1/2 & 0 \\
0 & 0 & 0 & 1/2 & 1/2
\end{bNiceMatrix}
=
\frac{1}{2} \cdot
\begin{bNiceMatrix}
1 & 0 & 0 & 0 & 0 \\
0 & 1 & 0 & 0 & 0 \\ 
0 & 0 & 0 & 1 & 0 \\
0 & 0 & 0 & 0 & 1
\end{bNiceMatrix} 
+
\frac{1}{2} \cdot
\begin{bNiceMatrix}
0 & 1 & 0 & 0 & 0 \\
1 & 0 & 0 & 0 & 0 \\ 
0 & 0 & 1 & 0 & 0 \\
0 & 0 & 0 & 1 & 0
\end{bNiceMatrix} 
$$

This decomposition satisfies all the required properties of the B-vN decomposition. In $A^1$, $N_L = \{2, 4\}$ and in $A^2$: $N_L = \{1, 3\}$. Moreover, the set of unallocated goods is different in each allocation: $g_3$ is unallocated in $A_1$ and $g_5$ is unallocated in $A_2$. Assuming we follow the tail assignment procedure of Algorithm \ref{alg:utse}, let us analyze agent $1$'s envy towards agent $2$.

\begin{align*}
    \E[v_1(X_1)] = \frac{1}{2}[v_1(g_1)+v_1(g_2)+\frac{1}{2}v_1(g_5)] = 3 + \frac{25\epsilon}{32} \\ 
    \E[v_1(X_2)] = \frac{1}{2}[v_1(g_1)+v_1(g_2)+\frac{1}{2}v_1(g_3)] = \frac{7}{2} + \frac{3\epsilon}{4}
\end{align*}

Taking the ratio of agent 1's value for its own bundle  and agent 2's bundle under $\Z$:

 \begin{align*}    
 \frac{\E[v_1(X_1)]}{\E[v_1(X_2)]} &= \frac{3+\frac{25\epsilon}{32}}{\frac{7}{2} + \frac{3\epsilon}{4}} \le \frac67 + \epsilon
 \end{align*}

 \label{ex:utse-tight}
\end{example}

\subsection{Correctness of Algorithm~\ref{alg:dependent-rounding}}
\label{sec:main-algo}
We follow the same notation as the previous subsection and introduce the following new notation to analyze the probability that any agent receives some good in $L \cup U$. Let $A_{g}$ be the set of agents whose last consumed good is $g$. Let $E_{ig}$ be the event that some agent $i$ is allocated the super-good together with some agent in $A_g$. We use this notation to compute the probability that some agent $i$ receives an item in $L \cup U$. 

\begin{lemma}\label{lem:depround-probabilities}
Let agent $i$ consume $x$ amount of its last consumed good $g_i$, and let $g$ be some good in $L \cup U$. The probability that $i$ receives $g$ is exactly:
\begin{enumerate}
    \item[(i)] $x - \frac12 \Pr[E_{ig_i}]$ if $g = g_i$, and 
    \item[(ii)] $\frac12(x - \Pr[E_{ig}])$ otherwise.
\end{enumerate}
\end{lemma}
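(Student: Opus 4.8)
The plan is to reduce everything to a single source of randomness that can affect goods in $L \cup U$: whether agent $i$ is one of the two recipients of the super-good $s$. First I would observe that agent $i$ can acquire a good of $L \cup U$ \emph{only} in Phase 2, and Phase 2 alters agent $i$'s bundle only if $i$ received $s$ during the dependent rounding of Phase 1 (otherwise $i$'s final bundle is the one fixed by the fully-consumed goods of $M'$). Since $X_{i,s}=x$ and the super-good carries total mass $\sum_{h\in N} X_{h,s}=\sum_{h\in N} X_{h,g_h}=k=2$, property (a) of \Cref{thm:dependent-rounding} gives $\Pr[s\in Z_i]=x$, and property (b) guarantees that \emph{exactly} two agents obtain $s$. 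Consequently, on the event that $i$ gets $s$ there is a unique co-recipient $i'$, and the permutation $(a,b)$ of $\{i,i'\}$ is drawn uniformly and independently of the rounding outcome and of the identity of $i'$.

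Next I would record the two equiprobable Phase-2 outcomes, conditioned on $i$ receiving $s$: with probability $\tfrac12$ we have $a=i$, so $Z_i=\{g_i\}$; with probability $\tfrac12$ we have $a=i'$, so $Z_i=(L\cup U)\setminus\{g_{i'}\}$. The two cases of the lemma then follow by direct bookkeeping. For $g=g_i$: agent $i$ holds $g_i$ whenever $a=i$, and holds $g_i$ when $a=i'$ precisely when $g_{i'}\ne g_i$; factoring out the independent $\tfrac12$ for the permutation gives $\Pr[g_i\in Z_i]=\tfrac12 x+\tfrac12\Pr[i\text{ gets }s\text{ and }g_{i'}\ne g_i]$, and since $\Pr[i\text{ gets }s]=x$ while $\{i\text{ gets }s\text{ and }g_{i'}=g_i\}$ is exactly $E_{ig_i}$, this collapses to $x-\tfrac12\Pr[E_{ig_i}]$. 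For $g\ne g_i$: agent $i$ can obtain $g$ only in the $a=i'$ scenario, and then only if $g\ne g_{i'}$, so the identical argument yields $\Pr[g\in Z_i]=\tfrac12\bigl(x-\Pr[E_{ig}]\bigr)$.

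The one genuinely delicate point — the step I would be most careful about — is the possibility that the co-recipient shares the relevant last-consumed good, i.e.\ that $g_{i'}=g_i$ in case (i) or $g_{i'}=g$ in case (ii). This is exactly the reason the correction term $E_{ig}$ appears instead of the naive answers $x$ and $\tfrac x2$: when $a=i'$, the good $g_{i'}$ is carved out of $i$'s tail. I would therefore fix the reading of $E_{ig}$ as ``$i$ gets $s$ \emph{and the other recipient} lies in $A_g$'', so that the fact $i\in A_{g_i}$ never spuriously triggers $E_{ig_i}$, and I would invoke the independence of the uniform permutation cleanly when pulling out each factor of $\tfrac12$. Note that this lemma is a pure identity using only the marginal (a) and the exact count (b) of \Cref{thm:dependent-rounding}; the negative-correlation property (c) is not needed here and will enter only later when bounding $\E[v_i(Z_i)]$.
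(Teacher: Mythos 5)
Your proposal is correct and follows essentially the same argument as the paper's proof: both reduce the event $\{g \in Z_i\}$ to whether $i$ receives the super-good, who the co-recipient is, and the outcome of the uniform permutation, yielding the identical decomposition $\Pr[g_i \in Z_i] = (x - \Pr[E_{ig_i}]) + \tfrac{1}{2}\Pr[E_{ig_i}]$ and $\Pr[g \in Z_i] = \tfrac{1}{2}(x - \Pr[E_{ig}])$. Your extra care in reading $E_{ig}$ as ``the \emph{other} recipient lies in $A_g$'' and in noting that property (c) is not needed here matches the paper's intended usage.
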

\begin{proof}
Agent $i$ is allocated the super-good with probability $x$. Whenever $i$ is allocated the super-good with an agent from $A_{g_i}$, it receives the good $g_i$ with probability $1/2$ (if it comes first in the permutation). In all other cases, it receives the good $g_i$ with probability $1$. Therefore, 
\begin{align*}
    \Pr[g_i \in Z_i] = (x - \Pr[E_{ig_i}]) + \frac12 \Pr[E_{ig_i}] = x - \frac12 \Pr[E_{ig_i}],
\end{align*}

For any other item $g$, $i$ is allocated $g$ whenever $i$ is allocated the super-good together with an agent who is not in $A_{g}$, and $i$ is picked to be second in the random permutation. The probability of this happening is $\frac12(x - \Pr[E_{ig}])$. 
\end{proof}

Note that we are now using the exact probability of an agent receiving a good in $L \cup U$, as opposed to the weak bounds we used for Algorithm \ref{alg:utse} (\Cref{obs:fractional}). We also present a lemma that bounds these probabilities using negative correlation. 

\begin{lemma}\label{lem:depround-basic}
Fix any set of agents $N' \subseteq N$. For any $i \notin N'$, the probability that $i$ receives the super-good together with an agent in $N'$ is at most $X_{i, g_i} \sum_{j \in N'} X_{j, g_j}$ where $X$ is the fractional allocation constructed by the eating algorithm. Specifically, this implies that $\Pr[E_{ig_i}] \le x(1-x)$.
\end{lemma}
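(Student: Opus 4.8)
The plan is to use only two features of the dependent-rounding procedure of \Cref{thm:dependent-rounding}: the marginal property (a), which gives $\Pr[s \in Z_i] = X_{i,s} = X_{i,g_i}$ for the super-good $s$, and the negative-correlation property (c), which gives $\Pr[s \in Z_i \text{ and } s \in Z_j] \le \Pr[s \in Z_i]\,\Pr[s \in Z_j]$ for any two distinct agents $i \ne j$. The event that $i$ receives $s$ together with some agent of $N'$ is
\[
\{s \in Z_i\} \cap \bigcup_{j \in N'} \{s \in Z_j\} \;=\; \bigcup_{j \in N'}\bigl(\{s \in Z_i\} \cap \{s \in Z_j\}\bigr),
\]
and since $i \notin N'$ every index $j$ appearing in the union differs from $i$. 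First I would apply the union bound and then property (c) term by term, after which property (a) rewrites the marginals in terms of the eating-algorithm fractions:
\[
\Pr\Bigl[\{s\in Z_i\}\cap\textstyle\bigcup_{j\in N'}\{s\in Z_j\}\Bigr]
\le \sum_{j \in N'} \Pr[s\in Z_i]\,\Pr[s\in Z_j]
= X_{i,g_i}\sum_{j\in N'}X_{j,g_j}.
\]
This is exactly the claimed general bound; notably it uses nothing about the value $k=2$, only that the super-good's marginals coincide with the last-good fractions.

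For the specialization to $\Pr[E_{ig_i}]$, I would instantiate the general bound with $N' = A_{g_i}\setminus\{i\}$, since $E_{ig_i}$ is by definition the event that $i$ receives $s$ together with another agent whose last consumed good is also $g_i$ (and $i \notin A_{g_i}\setminus\{i\}$, so the hypothesis of the general bound applies). For every $j \in A_{g_i}$ we have $g_j = g_i$ by definition of $A_{g_i}$, so the right-hand side becomes $X_{i,g_i}\sum_{j\in A_{g_i}\setminus\{i\}} X_{j,g_i}$.

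The key step is to identify this sum as $q(g_i) - x$. By Property (a) of the eating algorithm, a good of $L$ (here $g_i$) is consumed only by agents for whom it is the last consumed good, i.e.\ only by agents of $A_{g_i}$; hence $q(g_i) = \sum_{h\in N} X_{h,g_i} = \sum_{j \in A_{g_i}} X_{j,g_i}$, and subtracting the term $X_{i,g_i}=x$ gives $\sum_{j\in A_{g_i}\setminus\{i\}} X_{j,g_i} = q(g_i) - x$. Since $q(g_i) \le 1$, this is at most $1-x$, and therefore $\Pr[E_{ig_i}] \le X_{i,g_i}(q(g_i)-x) \le x(1-x)$.

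The argument is essentially bookkeeping, so there is no deep obstacle; the one place requiring care is precisely this identification $\sum_{j\in A_{g_i}\setminus\{i\}} X_{j,g_i} = q(g_i)-x$, which hinges on the eating-algorithm fact that no agent outside $A_{g_i}$ ever consumes a positive fraction of $g_i$. Without it one could only bound the sum by $\sum_{j\ne i} X_{j,g_j}$, which need not be at most $1-x$, so invoking Property (a) is exactly what makes the clean factor $x(1-x)$ go through.
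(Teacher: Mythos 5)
Your proof is correct and follows essentially the same route as the paper's: decompose the event over $j \in N'$, apply the negative-correlation property (c) termwise, and rewrite the marginals via property (a); the only cosmetic difference is that you use a union bound where the paper notes the events $\{s \in Z_i \wedge s \in Z_j\}$ for distinct $j$ are actually disjoint (since exactly two agents receive the super-good), giving an equality rather than an inequality---immaterial for an upper bound. One small remark: your closing claim that the eating-algorithm Property (a) is essential is an overstatement, since $\sum_{j \in A_{g_i}\setminus\{i\}} X_{j,g_i} \le q(g_i) - x \le 1-x$ already follows from nonnegativity of the entries and $q(g_i)\le 1$, without needing that no agent outside $A_{g_i}$ consumes $g_i$.
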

\begin{proof}
\begin{align*}
    \Pr[s \in Z_i \land s \in Z_j, j \in N'] = \sum_{j \in N'} \Pr[s \in Z_i \land s \in Z_j] \le \sum_{j \in N'} \Pr[s \in Z_i]\Pr[s \in Z_j] = X_{i, g_i} \sum_{j \in N'} X_{j, g_j}.
\end{align*}
The first equality follows from the fact that exactly two agents receive the super-good. The inequality follows from the negative correlation property. Note that $\Pr[E_{ig_i}] \le x(1-x)$ since the total amount $g_i$ is consumed in $X$ is at most $1$.
\end{proof}

Similar to \Cref{obs:lower-bound}, we establish a lower bound for agent utilities in the allocation $Z$ output by Algorithm \ref{alg:dependent-rounding}. 

\begin{lemma}\label{lem:depround-lower-bound}
Say agent $i$ consumes $x$ amount of its last consumed good $g_i$, and let $\hat{g}$ be a good not in $L \cup U$ that is less preferred by $i$ to $g_i$. Then $\E[v_i(Z_i)] \ge (1-\frac12x(1-x)) v_i(g_i) + (1-x) v_i(L \cup U \cup \{\hat{g}\} \setminus \{g_i\})$.
\end{lemma}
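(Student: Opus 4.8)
The plan is to split agent $i$'s realized bundle $Z_i$ into two parts---the ``real'' goods (those outside $L \cup U$) that $i$ receives from the dependent rounding step, and the goods from $L \cup U$ that $i$ receives during Phase~2---and to lower bound the expected value of each part separately. By linearity of expectation and the additivity of $v_i$ on the homogeneous goods, it suffices to sum the per-good contributions, noting that Phase~2 only redistributes $L \cup U$ among the two super-good recipients and leaves the real goods untouched.

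For the real goods, property~(a) of \Cref{thm:dependent-rounding} gives $\Pr[g \in Z_i] = X_{i,g}$ for each real good $g$, so the expected value $i$ derives from real goods equals its fractional value from them, namely $v_i(X_i) - x\,v_i(g_i)$. Let $g^*$ be the least-valued real good that $i$ consumes during the eating algorithm (this exists whenever $x < 1$; if $x = 1$ the real-goods contribution is $0$ and the bound below holds trivially). Since $i$ spends $1-x$ units of time on real goods each worth at least $v_i(g^*)$, this expected value is at least $(1-x)\,v_i(g^*)$. Because $g^* \succ_i g_i$ and $g_i$ is $i$'s favorite good of $L \cup U$, every good of $L \cup U \cup \{\hat g\}$ is less preferred by $i$ than $g^*$ (using $\hat g \prec_i g_i \prec_i g^*$), so the lexicographic property gives $v_i(g^*) \ge v_i(L \cup U \cup \{\hat g\})$. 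Splitting off $g_i$ by additivity, the real-goods contribution is at least $(1-x)\,v_i(g_i) + (1-x)\,v_i(L \cup U \cup \{\hat g\} \setminus \{g_i\})$.

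For the $L \cup U$ goods I would invoke the exact probabilities from \Cref{lem:depround-probabilities}: $i$ receives $g_i$ with probability $x - \tfrac12\Pr[E_{ig_i}]$, and each $g \in L \cup U \setminus \{g_i\}$ with probability $\tfrac12(x - \Pr[E_{ig}]) \ge 0$ (nonnegative since $\Pr[E_{ig}] \le \Pr[s \in Z_i] = x$). Keeping only the $g_i$ term and discarding the remaining nonnegative terms, the Phase-2 contribution is at least $(x - \tfrac12\Pr[E_{ig_i}])\,v_i(g_i)$.

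Adding the two contributions collapses the coefficient of $v_i(g_i)$ to $(1-x) + x - \tfrac12\Pr[E_{ig_i}] = 1 - \tfrac12\Pr[E_{ig_i}]$, yielding $\E[v_i(Z_i)] \ge \bigl(1 - \tfrac12\Pr[E_{ig_i}]\bigr) v_i(g_i) + (1-x)\,v_i(L \cup U \cup \{\hat g\} \setminus \{g_i\})$, after which the negative-correlation bound $\Pr[E_{ig_i}] \le x(1-x)$ from \Cref{lem:depround-basic} gives the claim. The step I expect to be most delicate is the bookkeeping that lets the $(1-x)\,v_i(g_i)$ chunk ``borrowed'' from the real-goods value combine cleanly with the fractional Phase-2 probability of receiving $g_i$; getting the lexicographic chain exactly right and confirming the discarded Phase-2 terms are genuinely nonnegative are the places where an off-by-one in the accounting could creep in.
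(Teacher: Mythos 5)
Your proposal is correct and follows essentially the same route as the paper's proof: lower-bound the real-goods contribution by $(1-x)\,v_i(L \cup U \cup \{\hat g\})$ via the lexicographic property (the paper does this without naming $g^*$), add the exact probability $x - \tfrac12\Pr[E_{ig_i}]$ of receiving $g_i$ from \Cref{lem:depround-probabilities}, split off $v_i(g_i)$ by additivity, and finish with $\Pr[E_{ig_i}] \le x(1-x)$ from \Cref{lem:depround-basic}. Your extra bookkeeping (verifying the discarded Phase-2 terms are nonnegative, the $x=1$ edge case) only makes explicit what the paper leaves implicit.
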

\begin{proof}
This proof follows similarly to \Cref{obs:lower-bound}. Agent $i$ spends $(1-x)$ units of time consuming goods it prefers to $g_i$, and therefore to all goods in $L \cup U \cup \{\hat{g}\}$. Since $i$ has lexicographic valuations, each good consumed in the first $(1-x)$ units of time has value at least $L \cup U \cup \{\hat{g}\}$. Additionally, agent $i$ receives the good $g_i$ with probability $x - \frac12 \Pr[E_{ig_i}]$. Therefore,
\begin{align*}
    \E[v_i(\Z_i)] &\ge (1-x) v_i(L \cup U \cup \{\hat{g}\}) + \left (x - \frac12\Pr[E_{ig_i}] \right )v_i(g_i) \\
    & \ge (1-x) v_i(L \cup U \cup \{\hat{g}\} \setminus \{g_i\}) + \left (1-\frac12\Pr[E_{ig_i}] \right ) v_i(g_i).
\end{align*}
The lemma follows from upper bounding $\Pr[E_{ig_i}]$ using \Cref{lem:depround-basic}.
\end{proof}

Our next lemma provides a sufficient condition to upper bound the envy ratio. We need this condition since in some cases some agent $i$ could be allocated $g_i$ with a lower probability than $j$ could be allocated $g_i$. Due to this, the tricks used in \Cref{lem:ij-fractional} no longer suffice. 

\begin{lemma}\label{lem:upperbound-trick}
Assume agent $i$ has lexicographic valuations. Fix some set of goods $M'$. Consider the expressions $\sum_{g \in M'} \alpha_g v_i(g)$ and $\sum_{g \in M'} \beta_g v_i(g)$ where $\alpha_g \in \mathbb{R}$ and $\beta_g \in \mathbb{R}^{\ge 0}$ for all $g \in M'$. Let $g^*$ be the most valuable good in $M'$ according to $i$. Let $\gamma > 0$ be some real number such that $\max_{g \in M'} \frac{\alpha_{g^*} + \alpha_g}{\beta_{g^*} + \beta_g} \le \gamma$. Then,
\begin{align*}
    \frac{\sum_{g \in M'} \alpha_g v_i(g)}{\sum_{g \in M'} \beta_g v_i(g)} \le \gamma
\end{align*}
\end{lemma}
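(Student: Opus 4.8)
The plan is to clear denominators and reduce the whole statement to a single nonnegativity claim that the lexicographic structure makes transparent. First I would introduce the shorthand $c_g := \gamma\beta_g - \alpha_g$ for each $g \in M'$. With this notation the desired conclusion $\sum_{g\in M'}\alpha_g v_i(g) \le \gamma\sum_{g\in M'}\beta_g v_i(g)$ is exactly the inequality $\sum_{g\in M'} c_g v_i(g) \ge 0$, and the hypothesis $\frac{\alpha_{g^*}+\alpha_g}{\beta_{g^*}+\beta_g}\le\gamma$ rearranges to the clean pairwise bound $c_{g^*}+c_g\ge 0$ for every $g\in M'$. Taking $g=g^*$ as a special case gives $c_{g^*}\ge 0$, which I will use as the sign anchor throughout.

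Next I would split off the dominant term, writing $\sum_{g\in M'}c_g v_i(g)=c_{g^*}v_i(g^*)+\sum_{g\ne g^*}c_g v_i(g)$. For each $g\ne g^*$ the pairwise bound gives $c_g\ge -c_{g^*}$, and since the values are nonnegative this yields $c_g v_i(g)\ge -c_{g^*}v_i(g)$. Summing over $g\ne g^*$ produces the lower bound
\[
\sum_{g\in M'}c_g v_i(g)\ \ge\ c_{g^*}\Bigl(v_i(g^*)-\sum_{g\ne g^*}v_i(g)\Bigr).
\]

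The heart of the argument is to close this off using the lexicographic property. Because $g^*$ is agent $i$'s most preferred good in $M'$, taking $S=\{g^*\}$ and $T=M'\setminus\{g^*\}$ in the definition of lexicographic preferences shows $v_i(\{g^*\}) > v_i(M'\setminus\{g^*\})$, since $g^*\in S\setminus T$ and no good of $T$ is preferred to $g^*$; by additivity this is precisely $v_i(g^*) > \sum_{g\ne g^*}v_i(g)$, so the parenthesized quantity is nonnegative. Combined with $c_{g^*}\ge 0$, this forces $\sum_{g\in M'}c_g v_i(g)\ge 0$, which is the claim (the stated ratio form then follows whenever the denominator $\sum_{g}\beta_g v_i(g)$ is positive). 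I expect the main obstacle to be one of framing rather than computation: one must recognize that the pairwise hypothesis only becomes useful after the change of variables to $c_g$, and that the single good $g^*$ outweighs the entire remainder of $M'$, which is exactly what lets any negative contributions from the less-valued goods be absorbed into the positive contribution of $g^*$.
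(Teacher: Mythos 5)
Your proposal is correct and is essentially the paper's own argument in a different notation: your $c_g = \gamma\beta_g - \alpha_g$ corresponds to the paper's slack $\delta = \beta_{g^*} - \alpha_{g^*}/\gamma$ (namely $\delta = c_{g^*}/\gamma$), your pairwise bound $c_{g^*}+c_g \ge 0$ is the paper's $\delta + \beta_g \ge \alpha_g/\gamma$, and both proofs close by using the lexicographic fact $v_i(g^*) \ge v_i(M'\setminus\{g^*\})$ to let the nonnegative slack at $g^*$ absorb the deficits at the remaining goods. No gaps.
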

\begin{proof}
Let $\delta = \beta_{g^*} - \frac{\alpha_{g^*}}{\gamma}$. Note that $\delta \ge 0$ since $\gamma \geq \frac{\alpha_{g^*} + \alpha_{g^*}}{\beta_{g^*} + \beta_{g^*}} = \frac{\alpha_{g^*}}{\beta_{g^*}}$. Moreover for any $g \ne g^*$, 
\begin{align*}
 \delta + \beta_g = \beta_{g^*} + \beta_g - \frac{\alpha_{g^*}}{\gamma} \ge \frac1\gamma (\alpha_{g^*} + \alpha_{g} - \alpha_{g^*}) \ge  \frac{a_g}{\gamma}.   
\end{align*}

Using these two observations, we simplify $\sum_{g \in M'} \beta_g v_i(g)$ as follows:
\begin{align*}
    \sum_{g \in M'} \beta_g v_i(g) &= \left (\delta + \frac{a_{g^*}}{\gamma} \right )v_i(g^*) + \sum_{g \in M' \setminus {g^*}} \beta_g v_i(g) \\
    &\ge \frac{a_{g^*}}{\gamma} v_i(g^*) + \delta v_i(M' \setminus \{g^*\}) + \sum_{g \in M' \setminus {g^*}} \beta_g v_i(g) \\
    &= \frac{a_{g^*}}{\gamma} v_i(g^*) + \sum_{g \in M' \setminus {g^*}} (\delta + \beta_g) v_i(g) \\
    &\ge \frac{a_{g^*}}{\gamma} v_i(g^*) + \sum_{g \in M' \setminus {g^*}} \frac{a_g}{\gamma} v_i(g) = \frac1\gamma \sum_{g \in M'} \alpha_g v_i(g),
\end{align*}
which proves the lemma.
\end{proof}

We are ready to prove the main result of this section.

\KTwo*
\begin{proof}
The ex-post \EFX{}+\PO{} guarantee follows from the arguments of \Cref{prop:ex-post utse}. We only prove the ex-ante \EF{} guarantee.

Fix two agents $i$ and $j$. Let agent $i$ consume $x$ amount of $g_i$, and agent $j$ consume $y$ amount of $g_j$. We show that $\E[v_i(Z_j)] - \E[v_i(Z_i)] \le 0.1 \E[v_i(Z_i)]$. Re-arranging the terms of this inequality proves the ex-ante \EF{} guarantee.

Our approach is to upper bound $\E[v_i(Z_j)]$ and lower bound $\E[v_i(Z_i)]$ similar to \Cref{lem:ij-fractional}; the only difference is that we replace the weak bound on the probability of receiving a good in the tail with the exact probability from \Cref{lem:depround-probabilities}.

Following the same notation as \Cref{lem:ij-fractional}, if $x > y$, let $g^*$ be the highest valued good (according to $i$) that $j$ consumed in the interval $[1-x, 1-y]$ in the generation of the fractional allocation $X$. Then, 
\begin{align*}
    \E[v_i(Z_j)] &\le v_i(X_j(1-\max\{x, y\})) + \max\{x-y, 0\}v_i(g^*) \\ &\qquad + \left (y - \frac12\Pr[E_{jg_j}] \right)v_i(g_j) + \frac12 \sum_{g \in L \cup U \setminus \{g_j\}}\left (y - \Pr[E_{jg}] \right) v_i(g) \\
    &= v_i(X_j(1-\max\{x, y\})) + \max\{x-y, 0\}v_i(g^*) + \frac12 yv_i(g_j) + \frac12 \sum_{g \in L \cup U}\left (y - \Pr[E_{jg}] \right) v_i(g).
\end{align*}

Similarly, for agent $i$, we get
\begin{align*}
    \E[v_i(Z_i)] &\ge v_i(X_j(1-\max\{x, y\})) + \max\{y-x, 0\}v_i(L \cup U) \\ &\qquad + \left (x - \frac12\Pr[E_{ig_i}] \right)v_i(g_i) + \frac12 \sum_{g \in L \cup U \setminus \{g_i\}}\left (x - \Pr[E_{ig}] \right) v_i(g) \\
    &\ge v_i(X_i(1-\max\{x, y\})) + \left (\max\{x, y\} - \frac12\Pr[E_{ig_i}] \right)v_i(g_i) \\ & \qquad+ \frac12 \sum_{g \in L \cup U \setminus \{g_i\}}\left (\max\{x,y\} - \Pr[E_{ig}] \right) v_i(g) \\
    &= v_i(X_i(1-\max\{x, y\})) + \frac12 \max\{x, y\}v_i(g_i) + \frac12 \sum_{g \in L \cup U}\left (\max\{x,y\} - \Pr[E_{ig}] \right) v_i(g).
\end{align*}

Taking the difference and using $v_i(X_i(1-\max\{x, y\})) \ge  v_i(X_j(1-\max\{x, y\}))$,
\begin{align}
    \E[v_i(Z_j)] - \E[v_i(Z_i)] &\le \frac12yv_i(g_j) - \frac12 \max\{x, y\}v_i(g_i) +\max\{x - y, 0\} v_i(g^*) \notag \\&\qquad + \frac12 \sum_{g \in L\cup U} (y - \Pr[E_{jg}] - \max\{x, y\} + \Pr[E_{ig}])v_i(g). \label{eq:diff-upper-bound-1}
\end{align}

We obtain a lower bound on $\E[v_i(Z_i)]$ from \Cref{lem:depround-lower-bound} (using $\hat{g} = g^*$), which we re-state here for readability.
\begin{align}
    \E[v_i(Z_i)] \ge \left (1- \frac12x(1-x) \right )v_i(g_i) + (1-x)v_i(L \cup U \cup \{g^*\} \setminus \{g_i\}).\label{eq:diff-lower-bound-1}
\end{align}

So far, we have followed the same procedure as \Cref{thm:utse-upper-bounnd}. However, \eqref{eq:diff-upper-bound-1} cannot be simplified further without hurting the ex-ante \EF{} guarantee. To upper bound the ratio $\frac{\E[v_i(Z_)j)] - \E[v_i(Z_i)]}{\E[v_i(Z_i)]}$, we use \Cref{lem:upperbound-trick} to upper bound the ratio of the expressions \eqref{eq:diff-upper-bound-1} and \eqref{eq:diff-lower-bound-1}. In the parlence of \Cref{lem:upperbound-trick}, $M' = L \cup U \cup \{g^*\}$, and the highest valued good in $L \cup U \cup \{g^*\}$ according to $i$ is $g_i$. We also refer to the coefficient of good $g$ in \eqref{eq:diff-upper-bound-1} as $\alpha_g$ and the coefficient of good $g$ in \eqref{eq:diff-lower-bound-1} as $\beta_g$. We prove the following lemma. 

\begin{lemma}
For any $g \in L \cup U \setminus g_i$, $\alpha_g + \alpha_{g_i} \le\max_{z \ge x} \frac12 z(1-z)$ if $g \ne g_i$, and $\alpha_{g_i} \le \frac{1}{2} x(1-2x)$.
\label{lem:coeff-bounds}
\end{lemma}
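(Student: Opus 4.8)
The plan is to read off the two families of coefficients directly from the right-hand side of~\eqref{eq:diff-upper-bound-1}. The good $g_i$ collects the explicit term $-\tfrac12\max\{x,y\}$ together with its summand, the good $g_j$ collects the explicit term $\tfrac12 y$ together with its summand, and every other $g\in L\cup U$ appears only inside the sum; I assume throughout that $g_i\neq g_j$ (the coincident case is handled separately). This gives, for instance,
\[
\alpha_{g_i}=\tfrac12 y-\max\{x,y\}-\tfrac12\Pr[E_{jg_i}]+\tfrac12\Pr[E_{ig_i}].
\]
The bound $\alpha_{g_i}\le\tfrac12 x(1-2x)$ then follows by substituting $\Pr[E_{jg_i}]\ge0$ and $\Pr[E_{ig_i}]\le x(1-x)$ from \Cref{lem:depround-basic} and splitting on whether $\max\{x,y\}$ equals $x$ or $y$; each branch is an elementary inequality valid for $x,y\in[0,1]$ (using $y\ge x^2$ when $x<y$). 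For $\alpha_g+\alpha_{g_i}$ with $g\neq g_i,g_j$, the events $E_{ig}$ and $E_{ig_i}$ are disjoint sub-events of ``$i$ receives the super-good,'' so $\Pr[E_{ig}]+\Pr[E_{ig_i}]\le x$; dropping the nonnegative $j$-terms yields $\alpha_g+\alpha_{g_i}\le y-\tfrac32\max\{x,y\}+\tfrac12 x\le 0$, which is below the target.

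The crux is $g=g_j$, where the extra $\tfrac12 y$ renders this crude budget bound too weak---it overshoots by nearly a factor of two (giving $\tfrac14$ rather than $\tfrac18$ at $x=y=\tfrac12$). Here I would regroup as
\[
\alpha_{g_j}+\alpha_{g_i}=\tfrac32 y-\tfrac32\max\{x,y\}+\tfrac12\bigl(P_i-P_j\bigr),
\]
where $P_i:=\Pr[E_{ig_i}]+\Pr[E_{ig_j}]$ is exactly the probability that $i$ co-receives the super-good with some agent of $W:=A_{g_i}\cup A_{g_j}$, and $P_j$ is its analogue for $j$. The key point is that the shared event ``$i$ and $j$ both receive the super-good'' is counted in both $P_i$ and $P_j$, so it cancels and $P_i-P_j=R_i-R_j$, where $R_i,R_j$ denote co-receiving with $W\setminus\{i,j\}$. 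I would then bound $P_i-P_j$ in two ways via negative correlation (\Cref{lem:depround-basic}): an internal estimate $P_i-P_j\le R_i\le x\,m$, where $m=(q(g_i)-x)+(q(g_j)-y)$ is the super-good mass on $W\setminus\{i,j\}$, and an external estimate $P_i-P_j\le(x-y)+y\,Q$ (from $P_i=x-o_i$, $P_j=y-o_j$ with $o_i,o_j$ the probabilities of co-receiving outside $W$), where $Q=2-q(g_i)-q(g_j)$ is the super-good mass outside $W$. Since exactly two agents receive the super-good and its total mass is $k=2$, we have $m+Q=2-x-y$; taking the smaller of the two estimates and noting that they balance at $m=1-y$ yields the clean universal bound $P_i-P_j\le x(1-y)$. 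Substituting back and maximizing over $y$ gives $\alpha_{g_j}+\alpha_{g_i}\le\tfrac12 x(1-x)\le\max_{z\ge x}\tfrac12 z(1-z)$, as required.

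I expect the delicate steps to be the verification of the cancellation $P_i-P_j=R_i-R_j$ and the correct identification of the masses $m$ and $Q$, followed by checking that $\min\{x\,m,\,(x-y)+yQ\}\le x(1-y)$ for every feasible split with $m+Q=2-x-y$; once these are in place, the surrounding steps are routine case analysis on $x\gtrless y$. The conceptual novelty, compared with the Birkhoff--von Neumann analysis of \Cref{lem:ij-fractional}, is that retaining (rather than discarding) the negatively correlated $j$-terms is essential: their cancellation against the $i$-terms is precisely what brings the bound down from $\tfrac14$ to $\tfrac18$.
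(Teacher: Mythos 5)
Your decomposition of the coefficients and your treatment of the two subcases $g \ne g_i,g_j$ and $g = g_j$ are correct and, in places, cleaner than the paper's: the disjointness observation $\Pr[E_{ig}] + \Pr[E_{ig_i}] \le x$ and the cancellation $P_i - P_j = R_i - R_j$ with the mass budget $m + Q = 2 - x - y$ replace the paper's bookkeeping with the consumed amounts $q(g_i)$ and $q(g)$, and your resulting bound $\tfrac12 x(1-y)$ for $g = g_j$ is in fact sharper than the paper's $\tfrac12 y(1-y)$ in the $x \le y$ branch.

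There is, however, a genuine gap: you set aside the coincident case $g_j = g_i$ at the outset (``handled separately'') and never return to it, and it cannot be dispatched by the crude substitutions you propose. When $g_j = g_i$ the coefficient $\alpha_{g_i}$ acquires an extra $\tfrac12 y$, so $\alpha_{g_i} = \tfrac12\bigl(2y - \Pr[E_{jg_i}] - 2\max\{x,y\} + \Pr[E_{ig_i}]\bigr)$; substituting $\Pr[E_{jg_i}]\ge 0$ and $\Pr[E_{ig_i}]\le x(1-x)$ at $x=y=0.3$ gives $0.105$, which exceeds the target $\tfrac12x(1-2x)=0.06$. Likewise, for $g\in L\cup U\setminus\{g_i\}$ with $g_j=g_i$ your budget bound becomes $\tfrac12(3y-3\max\{x,y\}+x)$, which at $x=y=\tfrac12$ equals $\tfrac14 > \tfrac18 = \max_{z\ge x}\tfrac12 z(1-z)$. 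What is needed is exactly the cancellation you exploit for $g=g_j$: since $i,j\in A_{g_i}$, the difference $\Pr[E_{ig_i}]-\Pr[E_{jg_i}]$ is at most the probability that $i$ co-receives the super-good with a member of $A_{g_i}\setminus\{i,j\}$, which \Cref{lem:depround-basic} bounds by $x(1-x-y)$ using $x+y\le q(g_i)\le 1$; this is how the paper's Case 3 closes, and the analogous refinement (via the indicator $\mathbb{I}_{\{g_j\in\{g,g_i\}\}}$ and the lower bound $q(g_i)+q(g)\ge x+y\,\mathbb{I}$) is what rescues its Case 2. Your machinery could be adapted, but as written the coincident branch is missing and the claim that ``each branch is an elementary inequality'' is false there. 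A smaller point: the paper's proof also bounds $\alpha_{g^*}+\alpha_{g_i}$ for the good $g^*\notin L\cup U$, which the downstream application of \Cref{lem:upperbound-trick} with $M'=L\cup U\cup\{g^*\}$ requires even though the lemma statement omits it; your proposal does not address $g^*$ either.
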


Before we prove the lemma, we show how these upper-bounds immediately give us the result. Using \Cref{lem:upperbound-trick} with the $\beta$ values from \eqref{eq:diff-lower-bound-1}, we get
\begin{align*}
    \frac{\E[v_i(Z_j)] - \E[v_i(Z_i)]}{\E[v_i(Z_i)]} &\le \max_{g \in L \cup U \cup \{g^*\}} \left \{\frac{\alpha_{g_i} + \alpha_g}{\beta_{g_i} + \beta_g} \right \} \\
    &\le \max_{x \in [0, 1]} \max_{z \ge x}\max \left \{\frac{\frac12z(1-z)}{2-x-\frac12x(1-x)}, \frac{\frac12(1-2x)}{1-\frac12x(1-x)} \right \}\le 0.094. 
\end{align*}

The last inequality can be verified using basic calculus (or a graphing calculator). 

\begin{proof}[Proof of \Cref{lem:coeff-bounds}]
Since we do not know exactly which item $g_j$ is, we use an indicator variable which takes value $1$ if $g_j$ satisfies certain conditions. For example, $\mathbb{I}_{\{g = g_j\}}$ takes value $1$ when $g = g_j$. We divide the proof into a few cases. 

\textbf{Case 1: $g = g^*$}
\begin{align*}
    \alpha_{g_i} + \alpha_{g^*} &\le \frac12 (2\max\{x-y, 0\} + y - \Pr[E_{jg_i}] - 2\max\{x, y\} + \Pr[E_{ig_i}] + \mathbb{I}_{\{g_i = g_j\}}y) \\
    &\le \frac12\Pr[E_{ig_i}] \le \frac12 x(1-x). 
\end{align*}

\textbf{Case 2: $g \in L \cup U \setminus \{g_i\}$}
\begin{align*}
    \alpha_{g_i} + \alpha_{g} &= \frac12 (2y - \Pr[E_{jg_i}] -\Pr[E_{jg}] - 3\max\{x, y\} + \Pr[E_{ig_i}] + \Pr[E_{ig}] + y\mathbb{I}_{\{g_j \in \{g, g_i\}\}}) \\
    &\le \frac12 [(y - \Pr[E_{jg_i}] - \Pr[E_{jg}]) - 2\max\{x, y\} + \Pr[E_{ig_i}] + \Pr[E_{ig}] + y \mathbb{I}_{\{g_j \in \{g, g_i\}\}}].
\end{align*}
The first term $y - \Pr[E_{jg_i}] - \Pr[E_{jg}]$ is the probability that $j$ receives the super-good together with some agent whose last consumed good is neither $g_i$ nor $g$. Using \Cref{lem:depround-basic}, this probability is at most $y(2-q(g_i) - q(g))$ where $q(g')$ denotes the amount of $g'$ consumed during the eating process.
Similarly, we can show that $\Pr[E_{ig_i}] \le x(q(g_i) - x)$ and $\Pr[E_{ig}] \le xq(g)$. Putting this together, we get:
\begin{align*}
     \alpha_{g_i} + \alpha_{g} &\le \frac12 [y(2-q(g_i) - q(g)) + x(q(g_i) + q(g) - x) - 2\max\{x, y\} + y\mathbb{I}_{\{g_j \in \{g, g_i\}\}}]   \\
     &= \frac12 [2y - x^2 + (x-y)(q(g_i) + q(g))- 2\max\{x, y\} + y\mathbb{I}_{\{g_j \in \{g, g_i\}\}}].
\end{align*}
Note that $q(g_i) + q(g)$ is at least $x + y\mathbb{I}_{\{g_j \in \{g, g_i\}\}}$; it is also trivially at most $2$. Depending on the sign of $x-y$, we can choose an appropriate bound. If $x > y$, 
\begin{align*}
    \alpha_{g_i} + \alpha_{g} &\le \frac12 [2y - x^2 + (x-y)(2) - 2\max\{x, y\} + y\mathbb{I}_{\{g_j \in \{g, g_i\}\}}]  \\
    &= \frac12(2x - x^2 - 2\max\{x, y\} + y\mathbb{I}_{\{g_j \in \{g, g_i\}\}}) \le \frac12 x(1-x).
\end{align*}

If $x \le y$,
\begin{align*}
    \alpha_{g_i} + \alpha_{g_j} &\le \frac12 [2y - x^2 + (x-y)(x+y\mathbb{I}_{\{g_j \in \{g, g_i\}\}}) - 2\max\{x, y\} + y\mathbb{I}_{\{g_j \in \{g, g_i\}\}}] \\
    &= \frac12 [2y - xy - 2\max\{x, y\} + y (1 + x- y) \mathbb{I}_{\{g_j \in \{g, g_i\}\}}] \\
    & \le \frac12 [2y - xy - 2\max\{x, y\} + y (1 + x- y)] = \frac12(y -y^2) \le \max_{z \ge x} \frac12 z(1-z).
\end{align*}

The second inequality follows from the fact that $y(1+x - y) \ge 0$.

\textbf{Case 3: $g = g_i$}
\begin{align*}
    \alpha_{g_i} \le \frac12 (y - \Pr[E_{jg_i}] - 2\max\{x, y\} + \Pr[E_{ig_i}] + y\mathbb{I}_{\{g_j = g_i\}}).
\end{align*}

This is non-positive if $g_j \ne g_i$ since $\Pr[E_{ig_i}] \le x$. Therefore, assume $g_j = g_i$. 
The value $(\Pr[E_{ig_i}] - \Pr[E_{jg_i}])$ is at most the probability that $i$ is allocated the super-item together with an agent in $A_{g_i}$ who is not $j$. This probability is at most $x(1- x - y)$ using \Cref{lem:depround-basic}.

Therefore, 
\begin{align*}
    \alpha_{g_i} &\le \frac12 [x(1-x-y) + 2y - 2\max\{x, y\}] \\
    &\le \frac12 [x(1-x-y) + 2yx - 2x\max\{x, y\}] \\
    &\le \frac12 [x(1 - x - \max\{x, y\})] \le \frac12x(1-2x).
\end{align*}

This completes the proof of \Cref{lem:coeff-bounds}.
\end{proof}

\Cref{thm:k-two} now follows from \Cref{lem:coeff-bounds}.
\end{proof}

Combining \Cref{thm:utse-upper-bounnd,thm:low-k,thm:k-two}, we get the following theorem. 

\dependentrounding*

\section{Results for Subadditive and Monotone Valuations}
\label{sec:Results-Additive}

This section proves our best-of-both-worlds results for monotone valuations (Theorem~\ref{thm:bobw-charity-monotone}) and subadditive valuations (Theorem~\ref{thm:Ex-ante_half-Prop_ex-post_EFX-with-charity}).

\subsection{Ex-ante $\frac{1}{2}$-\EF{} and ex-post \EFXcharity{}}

In this subsection, we prove the following best-of-both-worlds guarantee for \emph{monotone} valuations:

\BoBWCharityMonotone*

Our algorithm (Algorithm~\ref{alg:random-charity-swap}) begins with the empty allocation, and initializes a pool $P = M$ of unallocated goods called the {\em charity}. Then, while there exists an agent that envies the un-allocated pool $P$, the algorithm chooses an arbitrary \textit{minimal envied subset} (a subset that is envied by at least one agent, but no agent envies any strict subset of it), say $Q$ of the charity. Then, it picks an agent uniformly at random from the set of agents who envy $Q$, and swaps its bundle with $Q$ (giving its original bundle back to the un-allocated pool).
 
\begin{algorithm}[h]%
    \begin{spacing}{1.2}
        \SetAlgoNlRelativeSize{-2} %
        \DontPrintSemicolon
        \SetKwInput{KwData}{Input}
        \SetKwInput{KwResult}{Output}
        \KwData{An instance $\langle N,M,\V\rangle$ of the fair division problem.}
        \KwResult{A randomized allocation $\X$.}
        \BlankLine
        $P \gets M$.\\
        $X_i \gets \emptyset$ for all $i \in N$. \\
        $r \gets 1$.\\
        \While{$\exists h \in N$  with $v_h(X_h) < v_h(P)$}{
            $Q_r \gets $ an arbitrary inclusion-wise minimal envied subset of $P$.\\
            $H_r \gets \{i \in N: v_i(X_i) < v_i(Q_r)\}$.\\
            $k_r \gets $ an agent chosen {\it uniformly at random} from $H_r$.\\
            $P \gets (P \setminus Q_r) \cup X_{k_r}$.\\
            $X_{k_r} \gets Q_r$. \\
            $r \gets r + 1$.\\
        }
        \BlankLine
        \Return{X}
    \end{spacing}
    \caption{ex-ante 1/2-\EF{} + ex-post \EFXcharity{} }
    \label{alg:random-charity-swap}
\end{algorithm}

When $k_r$ is picked deterministically, our algorithm reduces to the deterministic \EFXUnenviedCharity{} algorithm of \cite{CKM+21little}. Since their algorithm runs in pseudo-polynomial time, Algorithm~\ref{alg:random-charity-swap} also runs in pseudo-polynomial time and returns an ex-post \EFXUnenviedCharity{} allocation $X$. The key insight in proving the ex-ante guarantee is that whenever agent $j$ gets a bundle that agent $i$ envies, agent $i$ would have had an ``equal opportunity'' of getting that bundle. We will formalize this insight to prove ex-ante $\nicefrac{1}{2}$-\EF{} via approximate stochastic dominance. In particular, consider any two agents $i$ and $j$, and any threshold value $T$. Then, the following lemma proves \Cref{thm:bobw-charity-monotone}:

\begin{lemma}
For every $T \geq 0$, $\Pr[v_i(X_i) \geq T] \geq \frac{1}{2} \cdot \Pr[v_i(X_j)\geq T]$.
\end{lemma}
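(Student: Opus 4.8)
The plan is to prove the stated inequality as an \emph{approximate first-order stochastic dominance} relation: from agent $i$'s own viewpoint, its bundle dominates agent $j$'s bundle up to a factor of two. The first ingredient I would isolate is a structural monotonicity fact about Algorithm~\ref{alg:random-charity-swap}: an agent $a$ changes its held set only in a round where it is the chosen $k_r$, and at that moment it trades $X_a$ for a set $Q_r$ that it \emph{strictly} envies (as $a \in H_r$), so $v_a(X_a)$ is non-decreasing over the entire execution. In particular, once $v_i(X_i) \ge T$ becomes true it remains true at termination, which is exactly what lets me reason about the final allocation by examining a single decisive round. The case $T=0$ is immediate since $\Pr[v_i(X_i) \ge 0]=1$, so I assume $T>0$ throughout.

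Next I would condition on the last round $\tau$ at which agent $j$ is selected. On the event $\{v_i(X_j)\ge T\}$ this $\tau$ exists (otherwise $X_j=\emptyset$ and $v_i(X_j)=0<T$), and $j$'s final bundle is precisely $Q_\tau$, so $v_i(Q_\tau)\ge T$. I split the event into the part $A$ where $i\notin H_\tau$ and the part $B$ where $i\in H_\tau$. On $A$, the definition of $H_\tau$ gives $v_i(X_i)\ge v_i(Q_\tau)\ge T$ already at round $\tau$, and by monotonicity $i$ ends with value $\ge T$; hence $A\subseteq\{v_i(X_i)\ge T\}$ and $\Pr[v_i(X_i)\ge T]\ge \Pr[A]$. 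On $B$, both $i$ and $j$ lie in $H_\tau$, and since $k_\tau$ is drawn uniformly from $H_\tau$, agent $i$ had exactly the same chance of being selected as $j$; had $i$ been chosen it would have received $Q_\tau$ and therefore finished with value $\ge T$. This ``equal opportunity'' is the source of the factor $\tfrac12$: I would use it to show $\Pr[v_i(X_i)\ge T]\ge \Pr[B]$ as well, after which $\Pr[v_i(X_i)\ge T]\ge \max\{\Pr[A],\Pr[B]\}\ge \tfrac12(\Pr[A]+\Pr[B])=\tfrac12\Pr[v_i(X_j)\ge T]$.

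The technical heart, and the step I expect to be the main obstacle, is turning this informal ``equal opportunity'' into the clean inequality $\Pr[v_i(X_i)\ge T]\ge \Pr[B]$. I would make it precise on the decision tree of the algorithm, in which each node branches uniformly over its candidate set. For a node $u$ reached by a prefix $h$ with $i,j\in H(u)$ and $v_i(Q(u))\ge T$, the subtree obtained by selecting $i$ at $u$ lies entirely inside $\{v_i(X_i)\ge T\}$ and carries the same probability as the sibling subtree selecting $j$, into which the corresponding piece of $B$ is confined. The subtlety is that these ``$i$-selected'' witness subtrees for different decisive nodes can be \emph{nested} rather than disjoint, so naively summing their probabilities would over-count the mass of $\{v_i(X_i)\ge T\}$. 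I would resolve this by charging every piece of $B$ to the \emph{outermost} enclosing $i$-witness: outermost witnesses are pairwise disjoint; each piece of $B$, living in a $j$-branch below some $i$-witness, lands inside exactly one outermost witness (its prefix extends the outermost one); and within a single outermost witness the disjoint $B$-pieces have total probability at most that witness's own probability. Summing over the disjoint outermost witnesses then yields $\Pr[B]\le \Pr[v_i(X_i)\ge T]$, which closes the argument.
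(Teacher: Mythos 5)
Your overall strategy is genuinely different from the paper's: you split $\{v_i(X_j)\ge T\}$ into $A$ and $B$ according to whether $i$ envies $Q_\tau$ at the \emph{last} round $\tau$ in which $j$ is selected, and you need the two separate inequalities $\Pr[v_i(X_i)\ge T]\ge\Pr[A]$ and $\Pr[v_i(X_i)\ge T]\ge\Pr[B]$, each with constant $1$. The first is correct and mirrors one half of the paper's argument. The second is where your proof breaks. In your charging scheme, write $S_i(u)$ and $S_j(u)$ for the sibling subtrees below a witness node $u$ (a node with $i,j\in H(u)$ and $v_i(Q(u))\ge T$), and let $B_u\subseteq S_j(u)$ be the piece of $B$ whose decisive node is $u$. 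Your claim that ``each piece of $B$ lands inside exactly one outermost $i$-witness'' is false precisely for the pieces $B_u$ with $u$ itself outermost: such a piece lives in $S_j(u)$, which is the \emph{sibling} of $S_i(u)$ and is contained in no $i$-witness subtree at all. Those uncovered pieces are exactly the ones that generate the difficulty. Bounding them by $\Pr[B_u]\le\Pr[S_j(u)]=\Pr[S_i(u)]$ and summing over outermost $u$ spends the measure of $\bigsqcup_{u\ \mathrm{outermost}}S_i(u)$ a second time --- the same mass already used to absorb the nested pieces of $B$ --- so the argument as written only yields $\Pr[B]\le 2\Pr[v_i(X_i)\ge T]$, and hence an overall guarantee of $\tfrac13\Pr[v_i(X_j)\ge T]$ rather than $\tfrac12$. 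I also could not verify that the inequality $\Pr[B]\le\Pr[v_i(X_i)\ge T]$ is true at all: the obvious symmetry at each witness node only gives a factor $\tfrac12$, and making it hold with constant $1$ would require a nontrivial global argument that you have not supplied.

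The paper avoids this entirely by conditioning on the \emph{first} ``significant'' iteration rather than the last $j$-selection. Call an iteration $r$ significant if $v_i(Q_r)\ge T$, $i,j\in H_r$, and $k_r\in\{i,j\}$, and let $E$ be the event that one exists. Conditioned on $E$ and on the history up to the first such iteration $r_0$, the selection $k_{r_0}$ is uniform over $\{i,j\}$, so $\Pr[v_i(X_i)\ge T\mid E]\ge\tfrac12\ge\tfrac12\Pr[v_i(X_j)\ge T\mid E]$; on $\bar E$ one shows pointwise that $v_i(X_j)\ge T$ forces $v_i(X_i)\ge T$ (exactly your event-$A$ reasoning applied to the round where $j$ receives its final bundle). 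The factor $\tfrac12$ is thus extracted once, from an exact symmetry, and no unit-constant comparison between $\Pr[B]$ and $\Pr[v_i(X_i)\ge T]$ is ever needed. I recommend you either adopt this first-significant-iteration decomposition or supply a genuinely new argument for $\Pr[B]\le\Pr[v_i(X_i)\ge T]$; the charging scheme you describe does not give it.
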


\begin{proof}
 Let us call the $r^{\text{th}}$ iteration of the while loop {\it significant} if the following properties hold true:

\begin{enumerate}
    \item[(a)] $v_i(Q_r) \geq T$.
    \item[(b)] $i, j \in H_r$.
    \item[(c)] $k_r \in \{i, j\}$.
\end{enumerate}

Let $E$ denote the event that there exists a significant iteration in the run of the algorithm, and let $\bar{E}$ be its complement, that is, the event that no iteration is significant.

Let us first consider the event $E$. Let $r_0$ be the first significant iteration. Then,

\begin{equation} \label{eqn:event_1}
    \Pr[v_i(X_i) \geq T \mid  E] \geq \Pr[k_{r_0} = i \mid E] = \frac{1}{2} \geq \frac{1}{2} \cdot \Pr[v_i(X_j) \geq T \mid  E].
\end{equation}

where the first inequality holds true because the agent utilities are non-decreasing over the course of the algorithm, and $\Pr[k_{r_0} = i \mid E] = \frac{1}{2}$ because, given $r_0$ is the first significant iteration, $k_{r_0} \in \{i, j\}$, and $k_{r_0} = i$ and $k_{r_0} = j$ are equally likely.

Now, consider the event $\bar{E}$. We will show that, under $\bar{E}$, $v_i(X_j) \geq T \implies v_i(X_i) \geq T$. Indeed suppose $v_i(X_j) \geq T$ and consider the iteration $r$ where agent $j$ gets $Q_r = X_j$. Given $\bar{E}$, this iteration must not be significant, which can only happen if $i \notin H_r$. So, at that instant the agent $i$ does not envy $Q_r = X_j$. Given that agent utilities only increase as the algorithm runs, $v_i(X_i) \geq v_i(Q_r) \geq T$. Therefore:

\begin{equation} \label{eqn:event_2}
    \Pr[v_i(X_i) \geq T \mid \bar{E}] \geq \Pr[v_i(X_j) \geq T \mid \bar{E}].        
\end{equation}

\Cref{eqn:event_1,eqn:event_2} together finish the proof.
\end{proof}

\subsection{Ex-ante $\frac12$-\Prop{} and ex-post \EFXBoundedCharity{}}

In this subsection, we prove \Cref{thm:Ex-ante_half-Prop_ex-post_EFX-with-charity}, by utilizing Algorithm~\ref{alg:random-charity-swap}.

\PSandCharity*

Our algorithm starts with the partial \EFX{} allocation $Y$ returned by Algorithm~\ref{alg:random-charity-swap}, and computes an \EFXBoundedCharity{} allocation $X$ by using the little charity algorithm.

\begin{algorithm}[h]%
    \begin{spacing}{1.2}
        \SetAlgoNlRelativeSize{-2} %
        \DontPrintSemicolon
        \SetKwInput{KwData}{Input}
        \SetKwInput{KwResult}{Output}
        \KwData{An instance $\langle N,M,\V\rangle$ of the fair division problem.}
        \KwResult{A randomized allocation $\X$.}
        \BlankLine
        $Y \gets $ output of Algorithm~\ref{alg:random-charity-swap} run on $\langle N,M,\V\rangle$. \;
        Starting from $Y$, compute an \EFXBoundedCharity{} allocation $X$ by using the little charity algorithm~\cite{CKM+21little}.\;
        \Return{X}
    \end{spacing}
    \caption{ex-post \EFXBoundedCharity{} + ex-ante 1/2-\Prop{}}
    \label{alg:little-charity-bobw}
\end{algorithm}

Clearly, Algorithm~\ref{alg:little-charity-bobw} returns an allocation that is ex-post \EFXBoundedCharity{}. Therefore, the following lemma proves \Cref{thm:Ex-ante_half-Prop_ex-post_EFX-with-charity}:

\begin{lemma}
    The output of Algorithm~\ref{alg:little-charity-bobw} is ex-ante $\nicefrac{1}{2}$-\Prop{}.
\end{lemma}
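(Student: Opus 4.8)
The plan is to route the argument through the intermediate allocation $Y$ produced by Algorithm~\ref{alg:random-charity-swap}, and to exploit the fact that ex-ante proportionality is a \emph{share-based} guarantee that only constrains each agent's expected value for its \emph{own} bundle. Concretely, it suffices to show that for every agent $i$, $\E[v_i(X_i)] \ge \tfrac12 \cdot \tfrac{v_i(M)}{n}$, where the expectation is over the randomized allocation returned by Algorithm~\ref{alg:little-charity-bobw}. I would first establish this bound for $Y$ and then transfer it to $X$.

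First I would show that $Y$ itself is ex-ante $\frac12$-\Prop{}. Fix an agent $i$. In any realization of $Y$, the bundles $Y_1,\dots,Y_n$ together with the final charity pool $P$ partition $M$, so subadditivity of $v_i$ gives
\begin{align*}
    v_i(M) \;\le\; \sum_{j=1}^n v_i(Y_j) + v_i(P) \;\le\; \sum_{j=1}^n v_i(Y_j) + v_i(Y_i),
\end{align*}
where the second inequality uses the termination condition of the while loop in Algorithm~\ref{alg:random-charity-swap}, namely that no agent envies the pool ($v_i(Y_i) \ge v_i(P)$). Taking expectations and invoking the ex-ante $\frac12$-\EF{} guarantee of \Cref{thm:bobw-charity-monotone} (so that $\E[v_i(Y_j)] \le 2\,\E[v_i(Y_i)]$ for every $j \ne i$), I obtain
\begin{align*}
    v_i(M) \;\le\; \sum_{j=1}^n \E[v_i(Y_j)] + \E[v_i(Y_i)] \;\le\; \big(2(n-1)+1+1\big)\,\E[v_i(Y_i)] \;=\; 2n\,\E[v_i(Y_i)],
\end{align*}
i.e.\ $\E[v_i(Y_i)] \ge \tfrac12 \cdot \tfrac{v_i(M)}{n}$.

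It remains to transfer this bound from $Y$ to $X$. The only property of the little charity algorithm of \cite{CKM+21little} that I would use is that, when run on a fixed partial \EFX{} allocation, it never decreases any agent's value: it only reassigns goods out of the charity pool through value-improving moves, so $v_i(X_i) \ge v_i(Y_i)$ holds in every realization. Taking expectations gives $\E[v_i(X_i)] \ge \E[v_i(Y_i)] \ge \tfrac12\cdot\tfrac{v_i(M)}{n}$, which is exactly ex-ante $\frac12$-\Prop{}. The main obstacle---and the reason the ex-ante guarantee degrades from $\frac12$-\EF{} to $\frac12$-\Prop{}---is that running \cite{CKM+21little} as a black box destroys our control over the joint distribution of the bundles, so we can no longer bound the cross terms $\E[v_i(X_j)]$ needed for envy-freeness; the per-realization monotonicity $v_i(X_i)\ge v_i(Y_i)$ is all that survives, and it is precisely enough for the share-based notion. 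The one point I would verify carefully is that \cite{CKM+21little}'s algorithm indeed guarantees this \emph{individual} (not merely Nash-welfare or leximin) monotonicity on each realized input, since the entire reduction hinges on it.
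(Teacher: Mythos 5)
Your proposal is correct and follows essentially the same route as the paper's proof: bound $v_i(M)$ by $\E[v_i(P)]+\sum_j \E[v_i(Y_j)]$ via subadditivity, use the no-envy-toward-charity condition and the ex-ante $\nicefrac{1}{2}$-\EF{} guarantee of $Y$ to get $v_i(M)\le 2n\,\E[v_i(Y_i)]$, and transfer to $X$ via the per-realization monotonicity $v_i(X_i)\ge v_i(Y_i)$ of the little charity algorithm. The point you flag for verification is exactly the property the paper invokes (``agent utilities are non-decreasing''), so there is nothing to add.
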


\begin{proof}
    Let us fix an agent $i \in N$. Since $Y$ is ex-ante $\nicefrac{1}{2}$-\EF{}, and no agent envies the pool $P$ of un-allocated items in $Y$, we have $v_i(P) \leq v_i(Y_i)$, and:

    $$\E[v_i(Y_j)] \leq 2 \cdot \E[v_i(Y_i)] \text{ for all } j \neq i.$$
    
    Also, since the agent utilities are non-decreasing, we have $v_i(Y_i) \leq v_i(X_i)$. Hence,
    \begin{align*}
        v_i(M) = \E[v_i(M)] &= \E[v_i(P \cup_{j \in N}Y_j)] \\
        &\leq \E[v_i(P) + v_i(Y_i) + \sum_{j \neq i} v_i(Y_j)] & \text{(using subadditivity)}\\
        &= \E[v_i(P)] + \E[v_i(Y_i)] + \sum_{j \neq i} \E[v_i(Y_j)]\\
        &\leq \E[v_i(Y_i)] + \E[v_i(Y_i)] + 2 \cdot (n-1) \cdot \E[v_i(Y_i)]\\
        &= 2n \cdot \E[v_i(Y_i)]\\
        &\leq 2n \cdot \E[v_i(X_i)]. && \qedhere
    \end{align*}  
\end{proof}

\section{Concluding Remarks}
We studied best-of-both-worlds fairness guarantees for lexicographic, subadditive, and monotone valuations. Towards our goal of achieving stronger ex-post guarantees in comparison to prior work~\cite{AFS+24best}, we showed that ex-post \EFX{}+\PO{} can be achieved alongside ex-ante $\frac{9}{10}$-\EF{} under lexicographic valuations. This result involved a novel application of the dependent rounding technique in the best-of-both-worlds problem. An exciting open question in this direction is to examine whether the ex-ante fairness guarantee can be improved to \emph{exact} \EF{}.

We also studied ex-post \EFXcharity{} for subadditive and monotone valuations, and established the first best-of-both-worlds guarantees in this context. Our algorithm is extremely simple; however, its running time is pseudopolynomial. Whether there is a polynomial-time algorithm that is ex-ante $\frac{1}{2}$-\EF{} and ex-post \EFXcharity{} is an interesting avenue for future work. Note that no polynomial-time algorithm is currently known for computing an \EFXcharity{} allocation even without any ex-ante guarantee and even under additive valuations.

Finally, it is worth highlighting an open problem related to the \SDEF{} notion. Our main negative result showed that ex-ante \SDEF{} and ex-post \EFX{} are incompatible under lexicographic preferences. Since \SDEF{} requires a fractional allocation to be envy-free with respect to all consistent \emph{additive} instances, this impossibility may seem slightly unsatisfactory. An arguably more reasonable notion would require consistency only with respect to all \emph{lexicographic} instances, a subclass of additive valuations. Let us define a fractional allocation to be \emph{lex-\SDEF{}} if it is envy-free with respect to any consistent lexicographic valuations instance. It is interesting to ask if ex-ante lex-\SDEF{} is compatible with ex-post \EFX{}. Additionally, just like \SDEF{} is characterized by a ``prefix-domination'' property, it would be interesting to obtain a similarly succinct characterization for lex-\SDEF{}.

\section{Acknowledgments}

Part of this work was completed during SP's MS thesis project at the Department of Computer Science and Engineering at IIT Delhi, which took place from June 2024 to April 2025. It also includes contributions made during VV's visit to the department in January 2025. We gratefully acknowledge the hospitality and support provided by the CSE Department at IIT Delhi during both periods.

TK acknowledges support from the Department of Atomic Energy, Government  of India, under project no. RTI4001.
SP acknowledges support from the Department of Science and Technology, Government of India, through the INSPIRE Scholarship for Higher Education (SHE). RV acknowledges support from SERB grant no. CRG/2022/002621, DST INSPIRE grant no. DST/INSPIRE/04/2020/000107, and iHub Anubhuti IIITD Foundation. VV acknowledges funding from the National Science Foundation (NSF) Career Award 2441296 and Grant RI-2327057. JY acknowledges support from the Google PhD Fellowship. 

\bibliographystyle{alpha}
\bibliography{References}

\newcommand{\etalchar}[1]{$^{#1}$}
\begin{thebibliography}{FMNP24b}

\bibitem[AAGW15]{AAG+15online}
Martin~Damyanov Aleksandrov, Haris Aziz, Serge Gaspers, and Toby Walsh.
\newblock {Online Fair Division: Analysing a Food Bank Problem}.
\newblock In {\em Proceedings of the 24th International Joint Conference on
  Artificial Intelligence}, pages 2540--2546, 2015.

\bibitem[ABL{\etalchar{+}}19]{ABL+19efficient}
Haris Aziz, P{\'e}ter Bir{\'o}, J{\'e}r{\^o}me Lang, Julien Lesca, and
  J{\'e}r{\^o}me Monnot.
\newblock {Efficient Reallocation under Additive and Responsive Preferences}.
\newblock {\em Theoretical Computer Science}, 790:1--15, 2019.

\bibitem[AFSV24]{AFS+24best}
Haris Aziz, Rupert Freeman, Nisarg Shah, and Rohit Vaish.
\newblock {Best of Both Worlds: Ex-Ante and Ex-Post fairness in Resource
  Allocation}.
\newblock {\em Operations Research}, 72(4):1674--1688, 2024.

\bibitem[Azi20]{A20simultaneously}
Haris Aziz.
\newblock {Simultaneously Achieving Ex-Ante and Ex-Post Fairness}.
\newblock In {\em Proceedings of the 16th Conference on Web and Internet
  Economics}, pages 341--355. Springer, 2020.

\bibitem[BCKM13]{BCK+13designing}
Eric Budish, Yeon-Koo Che, Fuhito Kojima, and Paul Milgrom.
\newblock {Designing Random Allocation Mechanisms: Theory and Applications}.
\newblock {\em American Economic Review}, 103(2):585--623, 2013.

\bibitem[BEF21a]{BEF21best}
Moshe Babaioff, Tomer Ezra, and Uriel Feige.
\newblock {Best-of-Both-Worlds Fair-Share Allocations}.
\newblock {\em arXiv preprint arXiv:2102.04909}, 2021.

\bibitem[BEF21b]{BEF21fair}
Moshe Babaioff, Tomer Ezra, and Uriel Feige.
\newblock {Fair and Truthful Mechanisms for Dichotomous Valuations}.
\newblock In {\em Proceedings of the 35th AAAI Conference on Artificial
  Intelligence}, volume~35, pages 5119--5126, 2021.

\bibitem[Bir46]{B46three}
Garrett Birkhoff.
\newblock {Three Observations on Linear Algebra}.
\newblock {\em Universidad Nacional de Tucum\'an, Revista A}, 5:147--151, 1946.

\bibitem[BM01]{BM01new}
Anna Bogomolnaia and Herv{\'e} Moulin.
\newblock {A New Solution to the Random Assignment Problem}.
\newblock {\em Journal of Economic Theory}, 100(2):295--328, 2001.

\bibitem[Bud11]{B11combinatorial}
Eric Budish.
\newblock {The Combinatorial Assignment Problem: Approximate Competitive
  Equilibrium from Equal Incomes}.
\newblock {\em Journal of Political Economy}, 119(6):1061--1103, 2011.

\bibitem[CGH19]{CGH19envy}
Ioannis Caragiannis, Nick Gravin, and Xin Huang.
\newblock {Envy-Freeness up to Any Item with High Nash Welfare: The Virtue of
  Donating Items}.
\newblock In {\em Proceedings of the 20th ACM Conference on Economics and
  Computation}, pages 527--545, 2019.

\bibitem[CKK21]{CKK21interim}
Ioannis Caragiannis, Panagiotis Kanellopoulos, and Maria Kyropoulou.
\newblock {On Interim Envy-Free Allocation Lotteries}.
\newblock In {\em Proceedings of the 22nd ACM Conference on Economics and
  Computation}, pages 264--284, 2021.

\bibitem[CKM{\etalchar{+}}19]{CKM+19unreasonable}
Ioannis Caragiannis, David Kurokawa, Herv{\'e} Moulin, Ariel~D Procaccia,
  Nisarg Shah, and Junxing Wang.
\newblock {The Unreasonable Fairness of Maximum Nash Welfare}.
\newblock {\em ACM Transactions on Economics and Computation}, 7(3):12, 2019.

\bibitem[CKMS21]{CKM+21little}
Bhaskar~Ray Chaudhury, Telikepalli Kavitha, Kurt Mehlhorn, and Alkmini
  Sgouritsa.
\newblock {A Little Charity Guarantees Almost Envy-Freeness}.
\newblock {\em SIAM Journal on Computing}, 50(4):1336--1358, 2021.

\bibitem[FMNP24a]{FMN+24breaking}
Michal Feldman, Simon Mauras, Vishnu~V Narayan, and Tomasz Ponitka.
\newblock {Breaking the Envy Cycle: Best-of-Both-Worlds Guarantees for
  Subadditive Valuations}.
\newblock In {\em Proceedings of the 25th ACM Conference on Economics and
  Computation}, pages 1236--1266, 2024.

\bibitem[FMNP24b]{FMNP23}
Michal Feldman, Simon Mauras, Vishnu~V. Narayan, and Tomasz Ponitka.
\newblock {Breaking the Envy Cycle: Best-of-Both-Worlds Guarantees for
  Subadditive Valuations}.
\newblock In {\em Proceedings of the 25th ACM Conference on Economics and
  Computation (EC'24)}, pages 1236--1266, 2024.

\bibitem[Fol67]{F67resource}
Duncan Foley.
\newblock {Resource Allocation and the Public Sector}.
\newblock {\em Yale Economic Essays}, pages 45--98, 1967.

\bibitem[FSV20]{FSV20best}
Rupert Freeman, Nisarg Shah, and Rohit Vaish.
\newblock {Best of Both Worlds: Ex-Ante and Ex-Post Fairness in Resource
  Allocation}.
\newblock In {\em Proceedings of the 21st ACM Conference on Economics and
  Computation}, pages 21--22, 2020.

\bibitem[GKPS06]{GKSADependent}
Rajiv Gandhi, Samir Khuller, Srinivasan Parthasarathy, and Aravind Srinivasan.
\newblock {Dependent Rounding and Its Applications to Approximation
  Algorithms}.
\newblock {\em Journal of the ACM}, 53(3):324--360, 2006.

\bibitem[GS58]{GS58puzzle}
George Gamow and Marvin Stern.
\newblock {Puzzle-Math}, 1958.

\bibitem[HPPS20]{HPP+20fair}
Daniel Halpern, Ariel~D Procaccia, Alexandros Psomas, and Nisarg Shah.
\newblock {Fair Division with Binary Valuations: One Rule to Rule Them All}.
\newblock In {\em Proceedings of the 16th Conference on Web and Internet
  Economics}, pages 370--383, 2020.

\bibitem[HSVX21]{HSV+21fair}
Hadi Hosseini, Sujoy Sikdar, Rohit Vaish, and Lirong Xia.
\newblock {Fair and Efficient Allocations under Lexicographic Preferences}.
\newblock In {\em Proceedings of the AAAI Conference on Artificial
  Intelligence}, volume~35, pages 5472--5480, 2021.

\bibitem[HZ79]{HZ79efficient}
Aanund Hylland and Richard Zeckhauser.
\newblock {The Efficient Allocation of Individuals to Positions}.
\newblock {\em Journal of Political Economy}, 87(2):293--314, 1979.

\bibitem[KCP10]{KCP10complexity}
Dmytro Korzhyk, Vincent Conitzer, and Ronald Parr.
\newblock {Complexity of Computing Optimal Stackelberg Strategies in Security
  Resource Allocation Games}.
\newblock In {\em Proceedings of the 24th AAAI Conference on Artificial
  Intelligence}, volume~24, pages 805--810, 2010.

\bibitem[LMMS04]{LMM+04approximately}
Richard~J Lipton, Evangelos Markakis, Elchanan Mossel, and Amin Saberi.
\newblock {On Approximately Fair Allocations of Indivisible Goods}.
\newblock In {\em Proceedings of the 5th ACM Conference on Electronic
  Commerce}, pages 125--131, 2004.

\bibitem[LP09]{LP09matching}
L{\'a}szl{\'o} Lov{\'a}sz and Michael~D Plummer.
\newblock {\em {Matching Theory}}, volume 367.
\newblock American Mathematical Society, 2009.

\bibitem[MNR21]{MNR21fair}
Vahideh Manshadi, Rad Niazadeh, and Scott Rodilitz.
\newblock {Fair Dynamic Rationing}.
\newblock In {\em Proceedings of the 22nd ACM Conference on Economics and
  Computation}, pages 694--695, 2021.

\bibitem[Ste48]{S48division}
Hugo Steinhaus.
\newblock {The Problem of Fair Division}.
\newblock {\em Econometrica}, 16(1):101--104, 1948.

\bibitem[vN53]{vN53certain}
John von Neumann.
\newblock {A Certain Zero-Sum Two-Person Game Equivalent to the Optimal
  Assignment Problem}.
\newblock {\em Contributions to the Theory of Games (Volume 28 of Annals of
  Mathematics Studies)}, pages 5--12, 1953.

\end{thebibliography}
 
\clearpage
\appendix
\begin{center}
    \Large{\textbf{Appendix}}
\end{center}

\section{Omitted Proofs from Section \ref{sec:impossible}}\label{apdx:impossible}

\SDEFandEFX*

\begin{proof}
Consider the following instance with four goods $g_1,\dots,g_4$ and three agents $1,2,3$ with lexicographic preferences:
\begin{align*}
1: ~&~ g_1 \> g_3 \> g_4  \> g_2\\
2: ~&~ g_1 \> g_2 \> g_4  \> g_3\\
3: ~&~ g_2 \> g_3 \> g_4  \> g_1
\end{align*}
We use the ordinal representation of the lexicographic preferences in the above instance. Thus, for example, agent 1 prefers any bundle containing good $g_1$ over any bundle that does not, subject to that it prefers any bundle containing $g_3$ over any bundle without it, and so on.

We will first argue, via case analysis, that the above instance has four \EFX{} allocations.

\begin{enumerate}[label=\textbf{Case \Roman*:},align=left]%
    \item When $g_1$ is assigned to agent 1.
    
    Since $g_1$ is the top-ranked good of agent 2, agent 1 cannot receive any other good under any allocation that assigns $g_1$ to agent 1 (otherwise, by \Cref{prop:EFX_characterization}, \EFX{} will be violated). Thus, the good $g_2$ must be assigned to either agent 2 or agent 3, leading to the following subcases.
    
    \begin{enumerate}[label=\arabic*.]
    \item If $g_2$ is assigned to agent 2, then agent 3 will envy it regardless of how the remaining items are assigned. Thus, by \Cref{prop:EFX_characterization}, all remaining goods must be assigned to agent 3. The resulting allocation is $A^1 \coloneqq (\{g_1\},\{g_2\},\{g_3,g_4\})$, i.e., $$\NiceMatrixOptions{code-for-first-row=\scriptstyle,code-for-first-col=\scriptstyle}
A^1{}=\begin{bNiceMatrix}[first-row,first-col]
      & g_1 & g_2 & g_3 & g_4 \\
      1 & 1 & \cdot & \cdot & \cdot \\
      2 & \cdot & 1 & \cdot & \cdot \\
      3 & \cdot & \cdot & 1 & 1 
  \end{bNiceMatrix}$$
  which satisfies \EFX{}.
  
    \item Now suppose $g_2$ is assigned to agent 3. Then, regardless of how the remaining goods are assigned, agent 2 will envy both agents 1 and 3. Thus, in order to maintain \EFX{}, all remaining goods must be assigned to agent 2, resulting in the \EFX{} allocation $A^2 \coloneqq (\{g_1\},\{g_3,g_4\},\{g_2\})$, i.e., $$\NiceMatrixOptions{code-for-first-row=\scriptstyle,code-for-first-col=\scriptstyle}
A^2{}=\begin{bNiceMatrix}[first-row,first-col]
      & g_1 & g_2 & g_3 & g_4 \\
      1 & 1 & \cdot & \cdot & \cdot \\
      2 & \cdot & \cdot & 1 & 1 \\
      3 & \cdot & 1 & \cdot & \cdot 
  \end{bNiceMatrix}.$$
    \end{enumerate}
    
    \item When $g_1$ is assigned to agent 2.
    
    Note that agent 2 cannot receive any other good under any allocation that assigns to it the good $g_1$ (otherwise, by \Cref{prop:EFX_characterization}, \EFX{} will be violated from agent 1's perspective). Thus, the good $g_2$ must be assigned to either agent 1 or agent 3, leading to the following subcases.
    
    \begin{enumerate}[label=\arabic*.]
    \item If $g_2$ is assigned to agent 1, then agent 3 will envy it regardless of how the remaining items are assigned. Thus, by \Cref{prop:EFX_characterization}, the remaining goods $g_3$ and $g_4$ must be assigned to agent 3. However, doing so violates \EFX{} from agent 1's perspective. Thus, there is no \EFX{} allocation in this case.
  
    \item Now suppose $g_2$ is assigned to agent 3. Then, the good $g_3$ cannot be assigned to agent 3 (as that would violate \EFX{} from agent 1's perspective) and must therefore be assigned to agent 1. The remaining good $g_4$ can be allocated to either agent 1 or agent 3, resulting in the following two \EFX{} allocations:
    $$A^3 \coloneqq (\{g_3\},\{g_1\},\{g_2,g_4\}), \text{ i.e. }, \NiceMatrixOptions{code-for-first-row=\scriptstyle,code-for-first-col=\scriptstyle}
A^3{}=\begin{bNiceMatrix}[first-row,first-col]
      & g_1 & g_2 & g_3 & g_4 \\
      1 & \cdot & \cdot & 1 & \cdot \\
      2 & 1 & \cdot & \cdot & \cdot \\
      3 & \cdot & 1 & \cdot & 1 
  \end{bNiceMatrix},$$
    $$A^4 \coloneqq (\{g_3,g_4\},\{g_1\},\{g_2\}), \text{ i.e. }, \NiceMatrixOptions{code-for-first-row=\scriptstyle,code-for-first-col=\scriptstyle}
A^4{}=\begin{bNiceMatrix}[first-row,first-col]
      & g_1 & g_2 & g_3 & g_4 \\
      1 & \cdot & \cdot & 1 & 1 \\
      2 & 1 & \cdot & \cdot & \cdot \\
      3 & \cdot & 1 & \cdot & \cdot 
  \end{bNiceMatrix}.$$
    \end{enumerate}
    
    \item When $g_1$ is assigned to agent 3.
    
    Since $g_1$ is the top-ranked good of agents 1 and 2, agent 3 cannot receive any other good under any allocation that assigns to it the good $g_1$ (otherwise, by \Cref{prop:EFX_characterization}, \EFX{} will be violated). Thus, the good $g_2$ must be assigned to either agent 1 or agent 2, leading to the following subcases.
    
    \begin{enumerate}[label=\arabic*.]
    \item If $g_2$ is assigned to agent 1, then agent 3 will envy it regardless of how the remaining items are assigned. Thus, by \Cref{prop:EFX_characterization}, the remaining goods $g_3$ and $g_4$ must be assigned to agent 2. However, doing so violates \EFX{} from agent 1's perspective. Thus, there is no \EFX{} allocation in this case.
  
    \item Now suppose $g_2$ is assigned to agent 2. Once again, due to the envy from agent 3, the remaining goods $g_3$ and $g_4$ must be allocated to agent 1. However, this violates \EFX{} from agent 3's perspective. Hence, there is no \EFX{} allocation in this case.
    \end{enumerate}
\end{enumerate}

Thus, the four allocations $A^1,\dots,A^4$ listed above are the only ones that are \EFX{} for the given instance.

Now suppose, for contradiction, that there exists a randomized allocation $\X$ that is ex-ante \SDEF{} and ex-post \EFX{}. Then, the associated fractional allocation $X$ can be written as $X = \sum_{i=1}^{4} \alpha_i A^i$ such that $\sum_{i=1}^4 \alpha_i= 1$ and $\alpha_i \geq 0$ for all $i \in [4]$, i.e.,
$$\NiceMatrixOptions{code-for-first-row=\scriptstyle,code-for-first-col=\scriptstyle}
\small
X{}=\begin{bNiceMatrix}[first-row,first-col]
      & g_1 & g_2 & g_3 & g_4 \\
      1 & \alpha_1+\alpha_2 & \cdot & \alpha_3+\alpha_4 & \alpha_4 \\
      2 & \alpha_3+\alpha_4 & \alpha_1 & \alpha_2 & \alpha_2\\
      3 & \cdot & \alpha_2+\alpha_3+\alpha_4 & \alpha_1 & \alpha_1+\alpha_3 
  \end{bNiceMatrix}.$$
  
  We will now write the conditions for sd-envy-freeness for every pair of agents. %
  For the sake of completeness, we mention all the constraints; however it is only the numbered ones that will be used to show the impossibility.
  
  \begin{itemize}
      \item \SDEF{} from agent 1's perspective towards agent 2
      \begin{align}
      g_1 & : & \alpha_1+\alpha_2 & \geq \alpha_3+\alpha_4\label{eqn:1-for-2-top_one}\\
      g_1,g_3 & : & \alpha_1 + \alpha_2 + \alpha_3+\alpha_4 & \geq  \alpha_2 + \alpha_3+\alpha_4 \nonumber\\
      g_1,g_3,g_4 & : & \alpha_1+\alpha_2+\alpha_3+2\alpha_4 & \geq 2\alpha_2+\alpha_3+\alpha_4 \nonumber\\
      g_1,g_3,g_4,g_2 & : & \alpha_1+\alpha_2+\alpha_3+2\alpha_4 & \geq \alpha_1+2\alpha_2+\alpha_3+\alpha_4\label{eqn:1-for-2-top_four}
  \end{align}
      \item \SDEF{} from agent 1's perspective towards agent 3
      \begin{align}
      g_1 & : & \alpha_1+\alpha_2 & \geq 0 \nonumber\\
      g_1,g_3 & : & \alpha_1+\alpha_2+\alpha_3+\alpha_4 & \geq \alpha_1 \nonumber\\
      g_1,g_3,g_4 & : &  \alpha_1+\alpha_2+\alpha_3+2\alpha_4 & \geq 2\alpha_1+\alpha_3 \nonumber\\
      g_1,g_3,g_4,g_2 & : & \alpha_1+\alpha_2+\alpha_3+2\alpha_4 & \geq 2\alpha_1 + \alpha_2 + 2\alpha_3 + \alpha_4  \nonumber%
  \end{align}
      \item \SDEF{} from agent 2's perspective towards agent 1
      \begin{align}
      g_1 & : & \alpha_3+\alpha_4 & \geq \alpha_1+\alpha_2\label{eqn:2-for-1-top_one}\\
      g_1,g_2 & : & \alpha_1+\alpha_3+\alpha_4 & \geq \alpha_1 + \alpha_2 \nonumber \\
      g_1,g_2,g_4 & : & \alpha_1+\alpha_2+\alpha_3+\alpha_4 & \geq \alpha_1+\alpha_2+\alpha_4 \nonumber \\
      g_1,g_2,g_4,g_3 & : & \alpha_1+2\alpha_2+\alpha_3+\alpha_4 & \geq \alpha_1+\alpha_2+\alpha_3+2\alpha_4 \label{eqn:2-for-1-top_four}
  \end{align}
      \item \SDEF{} from agent 2's perspective towards agent 3
      \begin{align}
      g_1 & : & \alpha_3+\alpha_4 & \geq 0 \nonumber \\
      g_1,g_2 & : & \alpha_1 + \alpha_3+\alpha_4 & \geq \alpha_2 + \alpha_3+\alpha_4 \label{eqn:2-for-3-top_two}\\
      g_1,g_2,g_4 & : & \alpha_1 + \alpha_2 + \alpha_3+\alpha_4 & \geq \alpha_1 + \alpha_2 + 2\alpha_3+\alpha_4\label{eqn:2-for-3-top_three}\\
      g_1,g_2,g_4,g_3 & : & \alpha_1 + 2\alpha_2 + \alpha_3 + \alpha_4 & \geq 2\alpha_1+\alpha_2+2\alpha_3+\alpha_4\label{eqn:2-for-3-top_four} 
  \end{align}
      \item \SDEF{} from agent 3's perspective towards agent 1
      \begin{align}
      g_2 & : & \alpha_2+\alpha_3+\alpha_4 & \geq 0 \nonumber\\
      g_2,g_3 & : & \alpha_1+\alpha_2+\alpha_3+\alpha_4 & \geq \alpha_3+\alpha_4 \nonumber \\
      g_2,g_3,g_4 & : & 2\alpha_1+\alpha_2+2\alpha_3+\alpha_4 & \geq \alpha_3+2\alpha_4 \nonumber \\
      g_2,g_3,g_4,g_1 & : & 2\alpha_1 + \alpha_2 + 2\alpha_3 + \alpha_4 & \geq \alpha_1 + \alpha_2 + \alpha_3 + 2\alpha_4 \nonumber %
  \end{align}
      \item \SDEF{} from agent 3's perspective towards agent 2
      \begin{align}
      g_2 & : & \alpha_2+\alpha_3+\alpha_4 & \geq \alpha_1 \nonumber \\
      g_2,g_3 & : & \alpha_1+\alpha_2+\alpha_3+\alpha_4 & \geq \alpha_1+\alpha_2 \nonumber \\
      g_2,g_3,g_4 & : & 2\alpha_1+\alpha_2+2\alpha_3+\alpha_4 & \geq \alpha_1 + 2\alpha_2 \nonumber \\
      g_2,g_3,g_4,g_1 & : & 2\alpha_1+\alpha_2+2\alpha_3+\alpha_4 & \geq \alpha_1 + 2\alpha_2 + \alpha_3+\alpha_4 \nonumber%
  \end{align}
  \end{itemize}

Since $\alpha_i \geq 0$ for all $i \in [4]$, \Cref{eqn:2-for-3-top_three} implies that
\begin{equation}
    \alpha_3 = 0.
    \label{eqn:temp_1}
\end{equation}

Substituting this value of $\alpha_3$ in \Cref{eqn:1-for-2-top_one,eqn:2-for-1-top_one} gives
\begin{equation}
    \alpha_1+\alpha_2 = \alpha_4.
    \label{eqn:temp_2}
\end{equation}

Further, substituting the value of $\alpha_3$ from \Cref{eqn:temp_1} into \Cref{eqn:2-for-3-top_two,eqn:2-for-3-top_four}  gives
\begin{equation*}
    \alpha_1=\alpha_2.
\end{equation*}

Moreover, \Cref{eqn:2-for-1-top_four,eqn:1-for-2-top_four} give
\begin{equation*}
    \alpha_2 = \alpha_4.
\end{equation*}

Thus we have $\alpha_1 = \alpha_2 = \alpha_4$. This, along with \Cref{eqn:temp_2}, implies that 
$\alpha_1 = \alpha_2 = \alpha_4 = 0$.
However, this contradicts the requirement that $\sum_{i=1}^{4} \alpha_i = 1$. Thus, for the above instance, ex-ante sd-envy-freeness is incompatible with ex-post \EFX{}.
\end{proof}

\section{Omitted Proofs from Section \ref{sec:Results-Lexicographic}}\label{apdx:lexicographic}

\UniformPermutation*
\begin{proof} The ex-post $\EFX{}+\PO{}$ guarantee follows from \Cref{prop:EFX+PO_characterization}, as previously discussed. So, we will only focus on proving the ex-ante $\frac{1}{2}$-\EF{} implication.

Fix any pair of agents $i$ and $j$. Pick an arbitrary permutation $\sigma$ of the agents in $N$. Let $k_i$ and $k_j$ denote the positions of the agents $i$ and $j$, respectively, in $\sigma$. Assume, without loss of generality, that $k_i < k_j$, i.e., agent $i$ goes before agent $j$ in $\sigma$.

Define the \emph{partner} of the permutation $\sigma$, denoted by $\tau$, as the permutation obtained by swapping the positions of agents $i$ and $j$ in $\sigma$ while keeping the positions of all other agents unchanged. Observe that the set of all permutations of $N$ can be partitioned into such pairs, each consisting of a permutation and its partner.

Consider the execution of the \UnifPerm{} algorithm when the sampled permutation is $\sigma$. Let agent $i$ pick the good $g_i$ and let agent $j$'s bundle be $A_j$. Note that if $k_j = n$ (i.e., when $j$ is the last agent in $\sigma$), the bundle $A_{j}$ could contain multiple goods. Similarly, when the sampled permutation is $\tau$, let agent $j$ pick the good $g_j$ and let agent $i$'s bundle be denoted by $A_{i}$. Note that $|A_i| = |A_j|$.

We will now analyze agent $i$'s envy towards agent $j$ restricted to the permutations $\sigma$ and $\tau$. Due to additivity of valuations, it suffices to show that agent $i$'s envy towards agent $j$ is bounded as $\frac{1}{2}$-\EF{} restricted to these permutations.
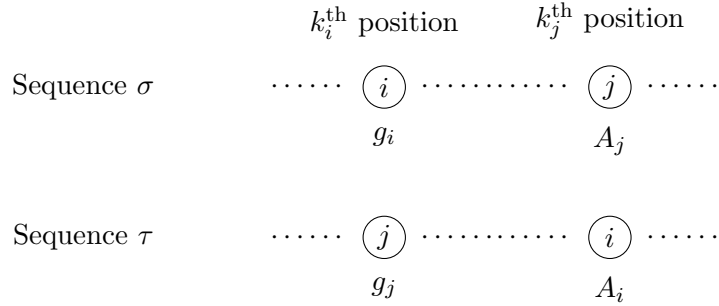
\begin{figure}[h]
    \centering
    \begin{tikzpicture}
        \centering
        \node at (-4,-2) {Sequence $\sigma$};
        \node at (-4,-4) {Sequence $\tau$};
        \node at (-1, -2) {$\cdots \cdots$};
        \node at (-1, -4) {$\cdots \cdots$};
        \node[draw, circle, minimum size=16pt, inner sep=0pt] (g1) at (0, -2) {$i$};
        \node[below] at (0,-2.4) {$g_i$};
        \node[above] at (0,-1.5) {$k_i^\textup{th}$ position};
        \node[draw, circle, minimum size=16pt, inner sep=0pt] (g2) at (0, -4) {$j$};
        \node[below] at (0,-4.4) {$g_j$};
        \node at (1, -2) {$\cdots \cdots$};
        \node at (1, -4) {$\cdots \cdots$};
        \node at (2, -2) {$\cdots \cdots$};
        \node at (2, -4) {$\cdots \cdots$};
        \node[draw, circle, minimum size=16pt, inner sep=0pt] (g3) at (3, -2) {$j$};
        \node[below] at (3,-2.4) {$A_j$};
        \node[draw, circle, minimum size=16pt, inner sep=0pt] (g4) at (3, -4) {$i$};
        \node[above] at (3,-1.5) {$k_j^\textup{th}$ position};
        \node[below] at (3,-4.4) {$A_i$};
        \node at (4, -2) {$\cdots \cdots$};
        \node at (4, -4) {$\cdots \cdots$};
    \end{tikzpicture}
    \caption{We consider permutations $\sigma$ and its partner $\tau$ where the positions of agents $i$ and $j$ are swapped, while all other agents remain unchanged. Let $k_i$ and $k_j$ denote the positions of agents $i$ and $j$ in $\sigma$, respectively. If $k_j = n$, then $A_j$ and $A_i$ could contain multiple goods. }
    \label{fig:randomperm}
\end{figure}

We will consider two cases, depending on whether the goods $g_i$ and $g_j$ are identical.

\begin{enumerate}[label= \text{Case (\arabic*)}, wide, labelindent = 0pt]
    \item When $g_i = g_j$: Observe that all agents that pick after agent $i$ and before agent $j$ in $\sigma$, and thus also in $\tau$, will pick the same items under both $\sigma$ and $\tau$. Hence, the set of goods available at agent $i$'s turn in $\sigma$ is the same as that under agent $j$'s turn in $\tau$. Therefore, $v_{i}(A_{i}) \geq v_{i}(A_{j})$, and consequently, $v_{i}(A_{i} \cup \{g_i\}) \geq v_{i}(A_{j} \cup \{g_j\})$, implying that agent $i$ does not envy agent $j$.

    \item When $g_i \neq g_j$: Recall that \UnifPerm{} samples each permutation with equal probability. Let $p$ denote the common probability with which $\sigma$ (respectively, $\tau$) is sampled. Then, the expected utility of agent $i$ for its own bundle as well as that of agent $j$ restricted to the permutations $\sigma$ and $\tau$ is given by:
    $$\E[v_{i}(\X_{i})] = p\cdot[v_{i}(g_i) + v_{i}(A_{i})] \text{ and } \E[v_{i}(\X_{j})] = p\cdot [v_{i}(g_j) + v_{i}(A_{j})],$$
    where $\X$ denotes the randomized allocation returned by \UnifPerm{} restricted to $\sigma$ and $\tau$. (Formally, $\X$ is a fictitious randomized allocation that samples $\sigma$ and $\tau$ each with probability $p$, and returns an empty allocation with the remaining probability.)
    
    In both $\sigma$ and $\tau$, the agents occurring before the position $k_{i}$ are ordered identically. Therefore, in both sequences, the exact same goods are picked until position $k_i$. Hence, the good $g_j$ is available at agent $i$'s turn in $\sigma$. However, since agent $i$ picks $g_i \neq g_j$ instead, we must have that $v_{i}(g_i) > v_{i}(g_j)$. Furthermore, due to lexicographic valuations, we have $v_{i}(g_i) > v_{i}(A_{j})$ even if $| A_{j}| \geq 2$. 
    
    If $g_j \notin A_{j}$, then again due to lexicographic valuations, we have $v_{i}(g_i) > v_{i}(A_{j} \cup \{g_j\}) = v_{i}(A_{j}) + v_{i}(g_j)$. It follows that $\E[v_i(\X_i)] \geq \E[v_i(\X_j)]$. Thus, when $g_j \notin A_{j}$, agent $i$ is ex-ante \EF{} towards agent $j$ with respect to $\sigma$ and $\tau$.
    
    We will now consider the case when $g_j \in A_{j}$. Observe that
    \begin{align*}
      \E[v_{i}(\X_{j})] &= p\cdot[2 v_{i}(g_j) + v_{i}(A_{j}\setminus \{g_j\})].
    \end{align*}
    Since $g_j \in A_{j}$, we get that every agent between the positions $k_i$ and $k_j$ in $\sigma$ prefers the good it picked in $\sigma$ over $g_j$. Hence, under the partner permutation $\tau$, any such agent either picks the same good that it picked in $\sigma$ or it picks the good $g_i$. Therefore, under $\tau$, the set of goods $A_{j} \setminus \{g_j\}$ is available at agent $i$'s turn, implying that $v_{i}(A_{i}) \geq v_{i}(A_{j} \setminus \{g_j\})$. Therefore,
    \begin{align*}
      \E[v_{i}(\X_{j})] &= p\cdot\left[2 v_{i}(g_j) + v_{i}(A_{j}\setminus \{g_j\})\right] < p\cdot[2v_{i}(g_i) + v_{i}(A_{i})] \leq 2p\cdot[v_{i}(g_i) + v_{i}(A_{i})] = 2 \E[v_{i}(\X_{i})].
    \end{align*}
\end{enumerate}

Thus, for every pair of partner permutations, agent $i$'s expected utility for its own bundle is at least half of its expected utility for agent $j$'s bundle. By additivity of valuations, we obtain the desired ex-ante $\frac{1}{2}$-\EF{} guarantee.
\end{proof}

\section{Decomposing Probabilistic Serial Outcome Over Picking Sequences}
\label{subsec:Proof_sdEF_EF1+PO_lexicographic}

In this section, we will show that the outcome of the Probabilistic Serial algorithm can be written as a probability distribution over deterministic allocations each of which is induced by a picking sequence.

\begin{restatable}{proposition}{PSandPickingSequence}
Let $X$ denote the fractional allocation computed by the Probabilistic Serial algorithm. Then, there exists a randomized allocation $\X \coloneqq \{(p_1,A^1),$ $\dots, (p_s,A^s)\}$ such that $X = \sum_{k \in [s]} p_k A^k$ and each deterministic allocation $A^k$ is the outcome of some picking sequence.
\label{prop:PS_Picking_Sequence}
\end{restatable}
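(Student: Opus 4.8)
The plan is to prove \Cref{prop:PS_Picking_Sequence}, which asserts that the fractional allocation $X$ output by the Probabilistic Serial (eating) algorithm can be decomposed into a convex combination of integral allocations, each induced by a picking sequence.

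\textbf{Overall strategy.} The natural approach is to track the eating process \emph{event by event} and build the decomposition incrementally. The eating algorithm proceeds in a sequence of time intervals delimited by ``events'' --- moments when some good is fully consumed and one or more agents switch to a new good. First I would discretize the timeline into these events $t_0 = 0 < t_1 < t_2 < \cdots < t_r = 1$ (or until all goods are consumed), so that within each half-open interval $[t_{\ell-1}, t_\ell)$ the set of goods being eaten by each agent is constant. The key structural fact is that at each instant, each agent is eating its favorite \emph{available} good; this is precisely the greedy choice a picking sequence would make. The goal is to show, by induction on the events, that the partial fractional allocation at each event time can be written as a convex combination of partial integral allocations, each realizable by a partial picking sequence consistent with the eating order.

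\textbf{Key steps.} I would proceed as follows. First, observe that the fractional allocation $X$ is supported only on goods and fractions that are ``compatible'' with the eating timeline: the amount $X_{i,g}$ that agent $i$ consumes of good $g$ equals the total time $i$ spends eating $g$. Second, I would set up an induction maintaining the invariant that the fractional allocation restricted to goods fully consumed up through event $t_\ell$ is a convex combination $\sum_k p_k A^k$ where each $A^k$ is a partial integral allocation induced by a picking sequence prefix respecting the order in which goods were consumed. The inductive step handles the transition across one event: when a good $g$ is exhausted by a set of agents who were simultaneously eating it, I would subdivide the existing decomposition's weights to reflect which agent in each ``branch'' ends up receiving $g$ integrally, ensuring the marginals continue to match the fractional consumption amounts. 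The greedy/favorite-available structure guarantees that any such assignment extends a valid picking sequence, since the agent receiving $g$ would indeed have selected $g$ as its favorite available good at its turn. Finally, running the induction to the termination of the eating algorithm yields the claimed decomposition $X = \sum_k p_k A^k$ with each $A^k$ picking-sequence-induced.

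\textbf{Main obstacle and a cleaner alternative.} The delicate point is the bookkeeping in the inductive step: when several agents share the consumption of a single good and that good gets integrally assigned to exactly one of them, I must split the probability weights so that each agent's expected (fractional) receipt of every good is preserved across the whole timeline, \emph{simultaneously} for all goods and all branches. Naively this can create exponentially many branches and entangle the marginals of different goods. I expect this consistency-of-marginals argument to be the hardest part. A cleaner route that sidesteps the direct combinatorial bookkeeping is to invoke the Birkhoff--von Neumann machinery already available in the excerpt: one can view the eating outcome as a (possibly partial) fractional allocation satisfying the row/column constraints of \Cref{prop:BvN_general}, apply that proposition to extract integral allocations $A^1,\dots,A^q$ with $X=\sum_\ell p_\ell A^\ell$, and then argue separately that each such $A^\ell$ is realizable by a picking sequence. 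For the latter, I would appeal to \Cref{rem:eating_1_unit_time} (cited in \Cref{sequencible}) together with the characterization that, under lexicographic preferences, any integral allocation in which each agent receives goods it could have greedily selected in a consistent order is picking-sequence-induced. Combining the B-vN decomposition with this realizability argument gives the proposition while avoiding the most error-prone manual weight-splitting.
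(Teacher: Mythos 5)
Your second (``cleaner'') route is the one the paper actually takes---Birkhoff--von Neumann first, then picking-sequence realizability of each support allocation---but as written it has two genuine gaps. First, you cannot apply \Cref{prop:BvN_general} directly to the $n\times m$ matrix $X$ output by a full run of Probabilistic Serial: that proposition requires every row sum to equal $1$, whereas each agent eats for $m/n$ units of time, so the row sums are $m/n$. The paper resolves this by splitting each agent $i$ into $m/n$ ``representative'' agents $i_1,\dots,i_{m/n}$, one per unit-time round (padding with dummy goods so that $n$ divides $m$), which turns the outcome into an $m\times m$ doubly stochastic matrix to which the ordinary B-vN decomposition (\Cref{prop:BvN}) applies. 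Without some such device your decomposition step does not go through for the full eating run.

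Second, the realizability half is where all the content lies, and you do not supply it. Appealing to \Cref{rem:eating_1_unit_time} is circular---that remark is itself a corollary of \Cref{prop:PS_Picking_Sequence}---and the ``characterization'' of picking-sequence-induced allocations you invoke is neither stated nor proved in the paper (nor is lexicographicity relevant: the proposition and its proof hold for arbitrary preferences). The paper's actual argument is this: for a fixed permutation matrix $A^\ell$ in the support, associate with each good $g$ the time $t_g$ at which its owner under $A^\ell$ \emph{first started eating} $g$ (an owner must have eaten a nonzero amount of its good, since B-vN preserves the support), sort the goods by $t_g$, and take the picking sequence given by the owners in that order. An induction on this order shows each owner picks exactly its good: at time $t_{g_j}$ every later-indexed good is still at least partially available, and since the eating algorithm is greedy, the owner of $g_j$ must weakly prefer $g_j$ to all of them. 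This start-time ordering is the key idea your proposal is missing; your first, event-by-event weight-splitting approach avoids neither this issue nor the marginal-consistency bookkeeping you yourself flag as unresolved, so neither route in the proposal is complete as it stands.
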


Before presenting the proof of \Cref{prop:PS_Picking_Sequence}, let us discuss a decomposition technique due to Aziz et al.~\cite{AFS+24best} which will be useful in proving some of the intermediate results required by the proof.

Let us assume for simplicity that the number of goods is an integral multiple of the number of agents, i.e., $m = rn$ for some $r \in \mathbb{N}$. (This assumption can be shown to hold without loss of generality by first adding dummy zero-valued goods to ensure this condition and later removing them without affecting any of the guarantees discussed below.) Thus, the Probabilistic Serial algorithm finishes after $r$ units of time (or $r$ \emph{rounds}).\footnote{The ``rounds'' terminology is used merely for ease of exposition, and should not be considered to mean that we restart the algorithm from scratch after each round.}

For each $k \in [r]$, let $i_k$ denote the `representative' of agent $i$ during the $k^\text{th}$ round (i.e., during the time interval $[k-1,k)$). The preferences of the representative $i_k$ are the same as that of agent $i$, but the former is only allowed to eat on behalf of agent $i$ during the $k^\text{th}$ round. Thus, the eating trajectory of agent $i$ can be simulated by concatenating that of each representative agent $i_1,i_2,\dots,i_r$ in the corresponding time window.

Let $X$ denote the fractional allocation returned by the Probabilistic Serial algorithm. That is, $X$ is an $n \times m$ matrix where $X_{i,j}$ denotes the fraction of good $j$ consumed by $i$. 

We will represent the outcome of Probabilistic Serial algorithm using an $m \times m$ matrix $Y$. The columns of $Y$ correspond to the goods, and its rows correspond to the representative agents, where the rows $1,\dots,n$ correspond to the representatives $1_1,\dots,n_1$ for the first unit of time, the rows $n+1,\dots,2n$ correspond to the representatives $1_2,\dots,n_2$ for the second unit of time, and so on. The entry $Y_{i,j}$ denotes the fraction of good $j$ eaten by the representative of agent $i \ (\mathrm{mod}\ n)$ during the time interval $[k-1,k)$ where $k = \lceil (i-1)/n \rceil $. Observe that $Y$ is a doubly stochastic matrix: The entries in each column sum up to $1$ since there is a unit amount of each good available, and the entries in each row sum up to $1$ since each representative agent eats at the speed of one item per unit time for the duration of one unit of time.

Consider any Birkhoff-von Neumann decomposition of $Y$ into permutation matrices given by
$$Y = p_1 \cdot A^1 + p_2 \cdot A^2 + \dots + p_s \cdot A^s.$$

Notice that for any $\ell \in [s]$, the permutation matrix $A^\ell$ induces a deterministic complete allocation wherein each representative agent receives exactly one good. Thus, $Y$ can be equivalently represented as a randomized allocation $\Y \coloneqq \{(p_1,A^1),$ $\dots,(p_s,A^s)\}$.

Let us now discuss the proof of \Cref{prop:PS_Picking_Sequence}.

\begin{proof} (of \Cref{prop:PS_Picking_Sequence})
Consider any fixed allocation $A^\ell$ in the support of the randomized allocation $\Y$. With every good $g \in M$, let us associate a time instant $t_g$ such that if $g \in A^\ell_{i_k}$, then $t_g$ is the time at which the representative $i_k$ started to consume $g$ for the first time during the execution of the Probabilistic Serial algorithm. For notational convenience, let us reindex the goods so that for any $i,j \in [m]$ with $i \leq j$, we have $t_{g_i} \leq t_{g_j}$ (i.e., the owner of a higher-index good in $A^\ell$ cannot start consuming it until the owner of a lower-index good in $A^\ell$ has started to do so). Let us also define an \emph{ownership} function~$o: M \rightarrow R$, where $R$ denotes the set of representative agents, such that for any good $g \in M$, $o(g) \coloneqq i_k$ if the representative agent $i_k$ is the owner of good $g$ under $A^\ell$, i.e., $g \in A^\ell_{i_k}$.

We will argue that the picking sequence $\sigma \coloneqq \langle o(g_1),o(g_2),\dots,o(g_m) \rangle$ induces the allocation $A^\ell$. We will prove this by induction.

To see why the base case is true, observe that at the time $t_{g_1}$, all the other goods in $M \setminus \{g_1\}$ are at least partially available. This is because under the Birkhoff-von Neumann decomposition of $\Y$, a representative agent is allocated a good in $A^\ell$ only if it eats a nonzero portion of this good during the execution of Probabilistic Serial algorithm on the original instance. Since $t_{g_1}$ is the earliest time instant at which any representative agent starts eating the good its receives under $A^\ell$, it must be that a nonzero amount of every other good is available at time $t_{g_1}$.

Recall that under the Probabilistic Serial algorithm, at every time instant, each agent consumes its favorite available good. Since the representative agent $o(g_1)$ starts to consume the good $g_1$ when all other goods are at least partially available, it must weakly prefer the good $g_1$ over any other good in $M \setminus \{g_1\}$. This implies that under the sequence $\sigma$ where $o(g_1)$ goes first, it will pick the good $g_1$.

Now suppose that for some $j < m$ and for every $i \leq j$, the agent $o(g_i)$ picks the good $g_i$ on its turn under $\sigma$. We will now argue that the representative agent $o(g_j)$ will pick the good $g_j$ on its turn.

Indeed, when it is agent $o(g_j)$'s turn in $\sigma$, the set of available goods is $M \setminus \{g_1,\dots,g_{j-1}\}$, or, equivalently, the set $\{g_j,g_{j+1},\dots,g_m\}$ (this follows from the induction hypothesis). All of these goods are at least partially available at the time $t_{g_j}$. Since the agent $o(g_j)$ starts to consume $g_j$ at time $t_{g_j}$, it must be that it weakly prefers $g_j$ over any other good $g_{j'}$ such that $j < j'$. Thus, it must be that on its turn under $\sigma$, agent $o(g_j)$ picks the good $g_j$.

We therefore obtain that the allocation $A^\ell$ is induced by the picking sequence $\sigma$, as desired.
\end{proof}

\begin{restatable}{remark}{eating_1_unit_time} \label{rem:eating_1_unit_time}
Note that the above arguments continue to hold even when the Probabilistic Serial algorithm is run for exactly one unit of time, i.e., we consider each agent's representative from only the $1^{st}$ round along with the goods it has consumed and apply the same induction argument. 
\end{restatable}

Thus, the fractional allocation computed by running Probabilistic Serial for exactly one time unit can also be decomposed into a probability distribution over integral partial allocations each of which is induced by a picking sequence. In all of these integral allocations, each agent receives exactly one good.

\end{document}